\pgfplotsset{compat=1.7}
 \theoremstyle{definition}
 \theoremstyle{remark}
\theoremstyle{plain}
 \newtheorem{definition}{Definition}
\newtheorem{lemma}{Lemma}
\newtheorem{proposition}{Proposition}
\renewcommand{\epsilon}{\varepsilon}
\newcommandx{\nageeb}[2][1=]{\todo[linecolor=blue,backgroundcolor=blue!25,bordercolor=blue,#1]{#2}}
\newcommandx{\nima}[2][1=]{\todo[linecolor=black,backgroundcolor=black!25,bordercolor=black,#1]{#2}}
\newcommandx{\ron}[2][1=]{\todo[linecolor=red,backgroundcolor=red!25,bordercolor=red,#1]{#2}}  
\newcommandx{\xiao}[2][1=]{\todo[linecolor=green,backgroundcolor=green!25,bordercolor=green,#1]{#2}}
\newcommand{\thr}{{\threshold}}
\newcommand{\ub}{{\overline{\theta}}}
\newcommand{\lb}{{\underline{\theta}}}
\newcommand{\threshold}{\tau}
\newcommand{\testfee}{t}
\newcommand{\disclosurefee}{d}
\newcommand{\phit}{\phi_{\testfee}}
\newcommand{\phid}{\phi_{\disclosurefee}}
\renewcommand{\Re}{\mathbb{R}}
\newcommand{\tfstruct}{(T,\phi)}
\newcommand{\game}{\mathcal G\tfstruct}
\newcommand{\sosd}{\Gamma}
\newcommand{\diststruct}{(G,\phi)}
\crefname{assumption}{assumption}{assumptions}
\newcommand{\appendixref}[1]{\hyperref[#1]{Appendix}}
\newcommand{\suppappendixref}[1]{\hyperref[#1]{Supplementary Appendix}}
\newcommand{\settheoremtag}[1]{
  \let\oldthetheorem\thetheorem
  \renewcommand{\thetheorem}{#1}
  \g@addto@macro\endtheorem{
    \addtocounter{theorem}{-1}
    \global\let\thetheorem\oldthetheorem}
  }
\begin{document}

\begin{titlepage}

\title{How to Sell Hard Information\thanks{We thank Simone Galperti, Navin Kartik, Elliot Lipnowski, Philipp Strack, Rakesh Vohra, and various seminar audiences for useful comments. The symbol \textcircled{r} indicates that the authors' names are in random order. }}
\author{
Nima Haghpanah \textcircled{r}\, S. Nageeb Ali \textcircled{r}\, Xiao Lin \textcircled{r}\, Ron Siegel \thanks{Department of Economics, Pennsylvania State University. Emails: \href{mailto:nima@psu.edu}{nima@psu.edu}, \href{mailto:nageeb@psu.edu}{nageeb@psu.edu}, \href{mailto:xiao@psu.edu}{xiao@psu.edu}, and \href{mailto:rus41@psu.edu}{rus41@psu.edu}.}
}

\maketitle

\begin{abstract}
The seller of an asset has the option to buy hard information about the value of the asset from an intermediary. The seller can then disclose the acquired information before selling the asset in a competitive market. We study how the intermediary designs and sells hard information to robustly maximize her revenue across all equilibria. Even though the intermediary could use an accurate test that reveals the asset's value, we show that robust revenue maximization leads to a noisy test with a continuum of possible scores that are distributed exponentially. In addition, the intermediary always charges the seller for disclosing the test score to the market, but not necessarily for running the test. This enables the intermediary to robustly appropriate a significant share of the surplus resulting from the asset sale even though the information generated by the test provides no social value.   

\nima{This is Nima.}
\nageeb{This is Nageeb.}
\ron{This is Ron.}
\xiao{This is Xiao.}

\end{abstract}

\thispagestyle{empty} 

\end{titlepage}
\clearpage
	\setcounter{tocdepth}{2}
	\tableofcontents
	\thispagestyle{empty}
\clearpage
\newpage
\setcounter{tocdepth}{1}
\listoftodos[Blue is Nageeb, Black is Nima, Red is Ron, and Green is Xiao. Orange is General.]

\newpage
\setcounter{page}{1}
\nageeb{Link to Zoom Room: https://psu.zoom.us/j/93590047769?pwd=YkhNdE9neFhVNVRuOHJoUUF4cUtlQT09}

\section{Introduction}\label{Section-Introduction}

This paper studies settings in which individuals purchase hard information from an intermediary that they can verifiably disclose to influence the actions of others.\footnote{For a survey of the literature on the design and sale of information by intermediaries see \cite{bergemann2019markets}.} Such settings are ubiquitous. 
For example, entrepreneurs often seek evidence that they can disclose to venture capitalists to obtain more funding, sellers of physical and financial assets routinely pay for evaluations that enable them to obtain better prices, and workers commonly seek certification before applying for positions. The size of the market for hard information, and certification in particular, is in the hundreds of billions.\footnote{See, for example, \href{https://tinyurl.com/y5ce9p4b}{https://tinyurl.com/y5ce9p4b}.} 

An important justification for the existence of information intermediaries in these settings is that they generate economic value. The information they provide may facilitate assortative matching or alleviate moral hazard or adverse selection.\footnote{Issues of moral hazard and certification have been studied by \cite{albano2001strategic}, \cite{marinovic:2018}, \cite{Zapechelnyuk:2020}, and \cite{saeedi2020optimal} among others.} It may also affect how the surplus is divided between the parties (other than the intermediary). For example, a seller who knows that his asset is valuable may be unable to credibly convey that information to buyers on his own, but objective evidence obtained from a trusted third-party would allow him to do so, enabling him to negotiate a better price for his asset. 

But would we expect to find information intermediaries where they do not provide economic value? Perhaps not, since purchasing information from them is voluntary and if they do not provide economic value, they can make a profit only by reducing the surplus of the other agents. We consider such a setting with an information intermediary and show, perhaps surprisingly, that the intermediary is able to appropriate a significant part of the surplus arising from trade by generating noisy information and charging certain fees, even when the equilibrium played by the other parties is chosen adversarially to the intermediary's interests.

In our model, an agent owns an asset that he would like to sell in a competitive market. Both the agent and the market have symmetric information about the asset's market value. Before selling the asset, the agent can purchase additional, hard information from an intermediary about the asset's value that the agent can share with the market to improve the terms of trade. The intermediary chooses a \emph{test}, which stochastically maps the asset's value to a score that can be verifiably disclosed, and a two-part tariff for her services. The tariff comprises a \emph{testing fee} for running the test and a \emph{disclosure fee} for disclosing the resulting score to the market. If the agent pays the testing fee, the test is run and the agent observes the resulting score. He then chooses whether to pay the disclosure fee to obtain hard information that enables him to disclose the score to the market. The market cannot distinguish between the agent not disclosing the score and the test not having been run. Because the market is competitive, the market price for the asset following disclosure or non-disclosure equals the asset's expected value conditioning on all the information available to the market and the equilibrium choices of the agent. 

The intermediary's choice of test and fees shapes the disclosure game between the agent and the market: the test determines the degree and kind of private information the agent will have in the trading relationship, and the fees determine his learning and disclosure costs. The intermediary's goal is to extract as much surplus from the agent as possible. Should the intermediary use a noisy test to determine the asset's value or should she use an accurate one? What combination of testing and disclosure fees should she charge? Finally, how much of the trading surplus can she extract? 

To address these questions, we first observe that the agent's outside option depends on the market's expectation of the agent's behavior. If the market expects the agent to pay the testing fee with some probability but the agent attempts to opt out of the disclosure game by not paying the testing fee, he cannot prove to the market that he chose to opt out. Instead, when the market sees no score being disclosed, it rationally concludes that the test may have been run but the agent chose not to disclose a low score. The market weights this ``bad news'' contingency and the ``no news'' contingency in which the agent decided not to pay the testing fee, where the weights depend on what the market believes the agent does in equilibrium. Thus, unlike in a standard mechanism-design problem, the agent's outside option depends on both the test-fee structure and the equilibrium that is played in the induced game between the agent and the market.

If the intermediary can select the equilibrium of the induced game, then an optimal test-fee structure comprises a fully revealing test, a high testing fee, and no disclosure fee. The intermediary selects an equilibrium in which the agent always pays the testing fee, and when the agent chooses not to disclose the score, the market believes that the test revealed that the asset has its lowest possible value (say $\lb$). The agent discloses the test score whenever it reveals that the asset's value is not the lowest possible value. The high testing fee charged by the intermediary extracts all of the agent's expected surplus (minus $\lb$) from selling the asset, so his expected payoff is $\lb$.\footnote{The intermediary can also extract all the surplus by using a binary score test, making testing free, and charging a high disclosure fee.} This is consistent with a key intuition from standard mechanism design: since the agent's payoff beyond what is needed to satisfy his individual rationality constraint is due to information rents, when the agent starts with no private information the designer can keep the agent's payoff at his individual rationality level, extract the full surplus, and achieve this by charging an upfront fee.

But the game induced by this test-fee structure has another equilibrium, in which the agent never pays the testing fee so the intermediary's revenue is zero. In this equilibrium, the market treats non-disclosure as ``no news,'' and the resulting market price is the asset's ex ante expected value. Given this, it is optimal for the agent not to pay for the intermediary's services. Thus, choosing this test-fee structure leaves the intermediary vulnerable to obtaining zero revenue. We show in \cref{prop-whyrobust} that this is not an accident: any test-fee structure that has an equilibrium in which the intermediary extracts (approximately) all the surplus also has an equilibrium in which the intermediary's revenue is (approximately) zero. 

Motivated by the above discussion, our goal is to identify \emph{robustly optimal test-fee structures}, namely those that guarantee the highest revenue to the intermediary across \emph{all} equilibria of the induced game. This corresponds to the intermediary choosing the test-fee structure that maximizes her revenue assuming that the equilibrium of the induced game is selected adversarially to her interests. Our motivation for studying robustly optimal test-fee structures is twofold. First, the intermediary may be unable to coordinate the behavior of the agent and the market on her most preferred equilibrium. The uncertainty about which equilibrium will be played could motivate her to be cautious and therefore use test-fee structures that guarantee her a high revenue across all equilibria.\footnote{Our focus on adversarial equilibrium selection is shared by a rapidly growing literature in mechanism and information design, including \cite*{bergemann2017first}, \cite{du2018robust}, \cite*{hoshino2019multi}, \citet*{ziegler2019adversarial}, \cite{inostroza2017persuasion}, \cite*{DworczakPavan2020}, \cite*{halac2020raising}, \cite*{halac2020rank}, and \cite*{mathevet2020information}.} Second, for any test-fee structure, the sum of the agent and the intermediary's revenue is constant across equilibria and equal to the asset's ex ante expected value, so the intermediary's least preferred equilibrium is the agent's most preferred equilibrium. And the agent and the market may be able to coordinate on the agent's preferred equilibrium. As we will see, the intermediary obtains a substantial share of the surplus when using the robustly optimal test-fee structure. This will show that the ability of the intermediary to secure a profit even though the information she provides has no social value and does not benefit the agent (or the market) ex-ante does not depend on the equilibrium played in the induced game.

Finding the robustly optimal test-fee structure involves optimizing across all test-fee structures and all equilibria of the induced games. Different tests induce different distributions of private information about the asset's value for the agent in the induced games, and different fees change the agent's incentives to obtain and disclose this private information. Thus, the optimization entails comparing the equilibria of disclosure games that vary in both the amount of the agent's private information and his disclosure costs. Despite this richness, we find that robustly optimal test-fee structures take a relatively simple form regardless of the distribution of the asset's value. As we illustrate in \autoref{figure-intro}, robustly optimal tests are in the ``step-exponential-step'' class: they feature an exponential distribution over a continuum of scores (even if the asset's value is drawn from a finite set), one atom below this continuum, and possibly one atom above it. The optimal disclosure fee is always positive, but the optimal testing fee may be positive or $0$. The resulting payoff to the intermediary is positive and bounded away from the full surplus.

\begin{figure}[t]
	\centering

	\begin{tikzpicture}[domain=2:4, scale=3.7, ultra thick,decoration={
		markings,
		mark=at position 0.5 with {\arrow{>}}}]

    \draw[dotted] (2,1) node[left]{1} -- (3,1) -- (3,0) node[below]{$1$};

	\draw[<->] (2,1.2) node[left]{$G(s)$}-- (2,0) -- (3.2,0) node[below]{$s$};
	\draw (2,.05) -- (2,0) node[below]{$0$};

	\draw[line width=3,domain=2.3:2.7,smooth,variable=\x] (2,0) -- (2.1,0)  -- (2.1,.2) -- (2.3,.2) -- plot ({\x},{0.2*(e^((\x-2.3)/0.4))}) -- (2.7,.5436) -- (2.85,.5436) -- (2.85,1) -- (3,1);

	\end{tikzpicture}
	\caption{A robustly optimal score distribution if the asset quality $\theta$ is drawn from $[0,1]$, where $G$ is the marginal CDF on scores. The score distribution features atoms on a low and a high score, and an exponential distribution over a continuum of intermediate scores.}
     \label{figure-intro}
\end{figure}
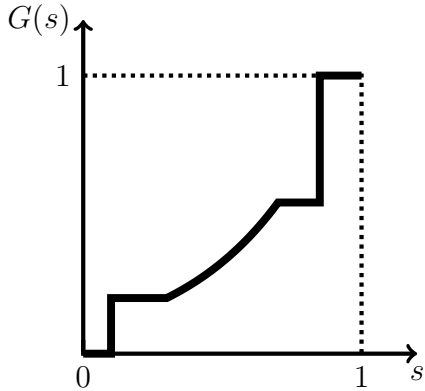

To derive these features of the robustly optimal test-fee structure we first observe that the intermediary does not provide any added value to the agent ex ante. This is because the market draws correct (Bayesian) inferences and is competitive, so for every test-fee structure and any equilibrium, the ex ante expected market price is the ex ante expected value of the asset. Thus, for every test and for all positive fees, the agent strictly prefers an equilibrium in which the market expects him not to have the asset tested and consequently offers him the asset's ex ante expected value. The agent then retains the full surplus and the intermediary's revenue is zero. If the market were to observe whether the asset is tested, the agent could achieve this as an equilibrium outcome by not paying the testing fee.  But the market \emph{does not} observe whether the asset is tested. Therefore, the market's expectation of whether the asset has been tested and the agent observed the resulting score must be consistent with the agent's unobserved equilibrium choice of whether to pay the testing fee.

This is how the intermediary obtains a positive payoff robustly: she uses \emph{option value as a carrot} to make it non-credible for the agent not to pay for the asset to be tested. 
Because the market only learns that the test has been run if the agent pays the disclosure fee and discloses the test score, the intermediary creates option value for the agent by offering a test that generates high test scores with some probability and setting the testing and disclosure fees sufficiently low. If this option value is sufficiently high, the agent cannot credibly refrain from paying the testing fee. The intermediary then obtains at least the testing fee in every equilibrium. Moreover, the market then treats non-disclosure as concealing a low score, which further motivates the agent to pay the disclosure fee and disclose the test score. In effect, the agent is trapped by market expectations that he has paid the testing fee and will disclose if the test score is sufficiently high.

But even if the agent pays the testing fee with certainty, multiple equilibria may exist. These equilibria differ in the set of scores that the agent discloses, and therefore in the probability of disclosure and the intermediary's revenue. The exponential test score distribution is robustly optimal because it eliminates potential equilibria in which the agent discloses with low probability. We develop an intuition for this result by showing that the intermediary can be thought of as choosing an optimal ``demand curve for testing,'' subject to the demand curve being feasible and the quantity of testing demanded corresponding to the one in the equilibrium least favorable to the intermediary. 
We illustrate this approach in \cref{Section-Example}, and provide a general analysis in \Cref{Section-Model,Section-SimplifyingProblem,Section-RobustlyOptimalTestFeeStructures}. 

\Cref{Section-Extension} describes two extensions. First, we consider a setting in which testing is costly for the intermediary. We show that if testing costs increase in the Blackwell order, then our main results continue to hold, that is, there exists a robustly optimal test in the step-exponential-step class. Moreover, if the increase is strict, then every robustly optimal test is in this class. Second, we consider an intermediary who can sell the agent multiple pieces of evidence, and gives him the choice of which to disclose. We show that this additional flexibility does not improve the intermediary's revenue guarantee.

The main takeaway of our paper is that even if information is neither socially valuable nor ex ante valuable to a seller, an intermediary selling hard information can profit from the seller's inability to commit. The intermediary robustly maximizes her revenue by using test-fee structures that generate option value, use noise, and include strictly positive disclosure fees. Thus, the presence of profitable intermediaries, as in the large certification industry is not, in itself, evidence that the provision of hard information improves the welfare of market participants.

Our work builds on the rich literature on verifiable disclosure and persuasion games, initiated by \cite{grossman1981informational} and \cite{milgrom1981good}. Their main insight is that if a privately informed agent could costlessly and verifiably disclose evidence about his type, the unique equilibrium involves full disclosure. The subsequent literature suggests a number of mechanisms that dampen this force, including exogenously costly disclosure \citep{jovanovic1982truthful,verrecchia1983discretionary} and lacking evidence with positive probability \citep{dye1985strategic}. \cite{matthews1985quality} and \cite{shavell1994acquisition} consider an uninformed agent who decides whether to take a fully revealing test. \cite{matthews1985quality} show that the unique equilibrium involves testing and full disclosure of the test result when disclosure is voluntary but involves no testing when disclosure of the test result is mandatory. \cite{shavell1994acquisition} assumes the agent bears a privately-known cost of testing, and studies how this cost dampens unraveling. 

The features in the models above also apply to our setting: the agent faces a cost of obtaining information about her type and a cost of disclosing that information in a verifiable form, and with positive probability, the agent may lack evidence. But we derive these features endogenously because the intermediary chooses the evidence structure as well as the cost of learning and disclosing the evidence; the probability that the market attributes to the agent having evidence is determined in equilibrium. Treating these features as endogenous objects reveals a tradeoff: all else equal, the intermediary would like the market to unravel (so that the agent discloses with maximal probability), but the instruments from which she earns revenue are exactly those that counter unraveling. This ``quantity-price'' tradeoff leads to the price-theoretic approach to evidence generation we develop, in which the intermediary both chooses the optimal price and designs the optimal demand curve subject to constraints that correspond to Bayes rule and adversarial equilibrium selection.

A closely related strand of the disclosure literature studies choices made by agents to influence market perceptions. \cite{benporath2018disclosure} model how an agent chooses projects when he obtains evidence of project returns with positive probability. Their analysis emphasizes option value from the possibility of disclosure as motivating the agent to choose riskier projects. \cite{demarzo2019test} study how an agent chooses tests and disclosures to influence the market valuation of his asset. While most of their paper concerns evidence being costlessly generated in-house by the agent, they also consider the choices that would be made by a monopolistic intermediary who can charge a testing fee but not a disclosure fee, assuming that equilibria are selected to favor the intermediary. Our analysis shows that when the intermediary chooses to protect herself from the worst equilibrium (or assumes that the agent and market can coordinate on the agent's best equilibrium) and can charge a disclosure fee, she optimally designs her tests differently.\footnote{Another strand of the literature less closely related to our work studies organizational settings in which an agent exerts effort to acquire hard information to persuade others; see \cite{aghion1997formal}, \cite{dewatripont1999advocates}, and \cite{che2009opinions}. \cite{shishkin2019evidence} considers a sender who commits to an evidence structure and then chooses whether to disclose evidence to persuade a receiver. Both the probability of obtaining evidence and the form that evidence takes are determined endogenously. He finds that the optimal evidence structure is binary certification. } 

In the papers discussed above, and in our work, the players are symmetrically informed at the outset of their interaction. By contrast, \cite{lizzeri1999information} studies signaling dynamics when the agent is perfectly and privately informed at the outset about the value of the asset and chooses whether to have the asset tested. In his model, testing is voluntary (and observable) and if the agent has the asset tested, disclosure is mandatory. The intermediary's gains come from the prejudicial expectation that the market forms when the privately informed agent chooses to not have the asset tested. \citeauthor{lizzeri1999information} shows that the intermediary can extract the full surplus using a nearly uninformative test.\footnote{\cite{harbaugh/rasmusen:2018} study a related model in which the intermediary's objective is to provide as much information as possible to the receiver. They show that the intermediary nevertheless chooses to provide coarse information to encourage the privately informed agent to have the asset tested.} Were the agent instead uninformed prior to testing, the agent would retain the full surplus and the intermediary's revenue would be $0$.\footnote{As an application of a novel result on Bayesian updating, \cite{kartik2020information} study a version of \citeauthor{lizzeri1999information}'s model in which the agent is privately and imperfectly informed about the asset's value.} Our work sheds light on a different and complementary economic force: we show that when \emph{both} testing and disclosure are voluntary, an intermediary can gain from using option-value to incentivize an uninformed agent to always have the asset tested. The intermediary then capitalizes on prejudicial inferences that are made if the agent chooses not to disclose the test score.

\section{An Example}\label{Section-Example}

Consider an agent who will sell an asset in a competitive asset market. The market value of the asset, $\theta$, is either $0$ or $1$, each with equal probability. Neither the agent nor the asset market know the asset's value, but an intermediary can run a test that generates information about the asset's value. The intermediary chooses the test, $T$, which stochastically maps the asset's value to an unbiased score $s$ in $[0,1]$, so $s=E[\theta|s]$. If the agent wants the asset tested, he has to pay the intermediary a testing fee $\phit$. If the asset is tested, the score is reported to the agent. The agent then chooses whether to pay an additional disclosure fee $\phid$ to disclose the score as hard information to the market; otherwise, no score is disclosed to the market. If no score is disclosed, the market cannot distinguish between (i) the asset not being tested and (ii) the asset being tested but the agent not disclosing the score. The intermediary chooses both the test $T$ and the fees (a test-fee structure), and her choice induces a disclosure game between the agent and the asset market. Some test-fee structures have multiple equilibria, and the intermediary's objective is to choose a test-fee structure that guarantees the highest payoff across equilibria. 

We use this example to illustrate several features of our analysis. We show that the intermediary benefits from using noisy tests that pool low and high asset values as intermediate scores. We visually depict how our problem maps to setting a price on an optimally chosen demand curve. And we show why the robustly optimal test uses an exponential score distribution. To illustrate each of these features, it suffices to assume that testing is free and the intermediary charges only a disclosure fee. As we later show, these features also arise when it is optimal to charge a testing fee.

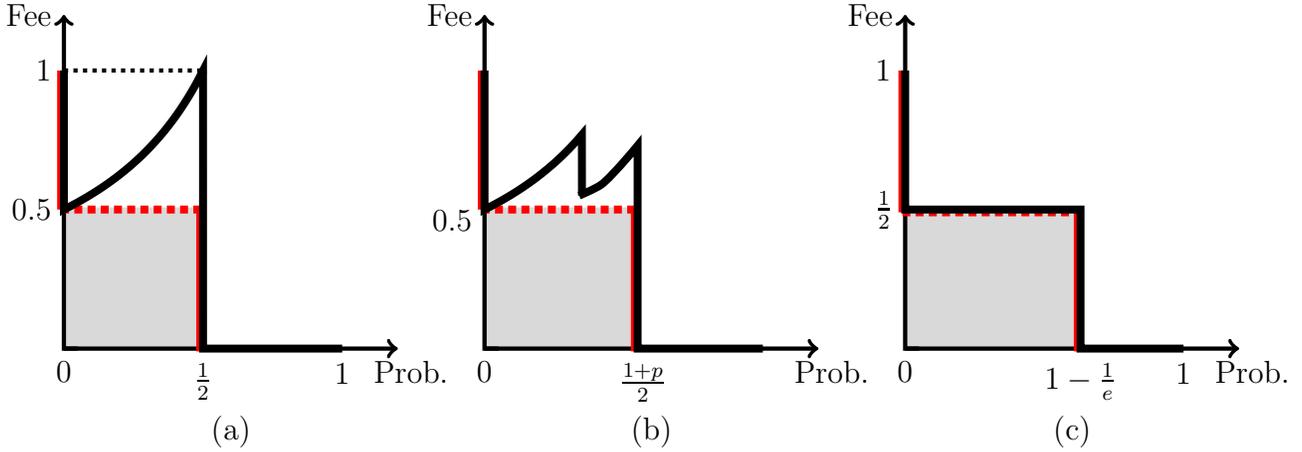
\begin{figure}[t]
	\centering
	
    \setlength{\tabcolsep}{-8pt}
	\begin{tabular}{ccc}

	\begin{tikzpicture}[domain=0:4, scale=3.7, ultra thick,decoration={
		markings,
		mark=at position 0.5 with {\arrow{>}}}]    
	\filldraw[black!15] (0.5,0) -- (.5,.5) -- (0,.5) -- (0,0) --cycle;
	
	\draw[<->] (1.2,0) node[below,xshift=5]{Prob.}-- (0,0) -- (0,1.2) node[left]{Fee};
	\draw (.05,0) -- (0,0) node[below]{0};

    \draw[line width = 3, red] (-0.01,1) -- (-0.01,0.5);
    
    \draw[line width = 3, red,dotted] (-0.01,0.5) -- (.49,0.5);
    
    \draw[line width = 3, red] (.49,0.5)-- (.49,-0.01);
	
	\draw[line width=3,domain=0:0.5,smooth,variable=\x] (0,1) -- (0,0.5) -- plot ({\x},{1/(2-2*\x)}) -- (0.5,0) node[below]{$\frac{1}{2}$} -- (1,0) node[below]{1};
    
    \draw[dotted] (0,1) node[left]{1} -- (.5,1);
    
    \draw (0,0.5) node[left]{$0.5$};

	\node at (0.6,-0.3) {(a)};
	
	\end{tikzpicture}
	&
		\begin{tikzpicture}[domain=0:4, scale=3.7, ultra thick,decoration={
		markings,
		mark=at position 0.5 with {\arrow{>}}}]    
	\filldraw[black!15] (0.55,0) -- (.55,.5) -- (0,.5) -- (0,0) --cycle;
	
	\draw[<->] (1.2,0) node[below,xshift=5]{Prob.}-- (0,0) -- (0,1.2) node[left]{Fee};
	\draw (.05,0) -- (0,0) node[below]{0};
	
	\draw[line width = 3, red] (-0.01,1) -- (-0.01,0.5);
	
	\draw[line width = 3, red,dotted] (-0.01,0.5) -- (.54,0.5);
	
	\draw[line width = 3, red] (.54,0.5)-- (.54,-0.01);
	
	\draw[line width=3,domain=0:0.35,smooth,variable=\x] (0,1) -- (0,0.5) -- plot ({\x},{1/(2-2*\x)}) -- (0.35,.55);
	
	\draw[line width = 3] plot [smooth] coordinates {(.34,.55) (.42,.59) (.49,.66) (.55,.73)} -- (.55,0) node[below,xshift=2]{$\frac{1+p}{2}$}-- (1,0);

	\draw (0,0.5) node[left,yshift=-4]{$0.5$};

	\node at (0.6,-0.3) {(b)};
	
	\end{tikzpicture}
	&
		\begin{tikzpicture}[domain=0:4, scale=3.7, ultra thick,decoration={
		markings,
		mark=at position 0.5 with {\arrow{>}}}]    
	\filldraw[black!15] (0.63,0) -- (.63,.5) -- (0,.5) -- (0,0) --cycle;	
	
	\draw[<->] (1.2,0) node[below,xshift=5]{Prob.}-- (0,0) -- (0,1.2) node[left]{Fee};
	\draw (.05,0) -- (0,0) node[below]{0};
	
	\draw[line width=3,red] (-0.01,1) -- (-0.01,0.49);

	\draw[line width=3,red,dotted]  (-0.01,0.49) -- (.62,.49);
	
	\draw[line width=3,red] (.62,.49) -- (.62,-0.01);

	\draw[line width=3,domain=0:0.5,smooth,variable=\x] (0,1) node[left]{1} -- (0,0.5) -- (.63,.5) -- (.63,0) node[below]{$1-\frac{1}{e}$} -- (1,0) node[below]{1};
    
    \draw (0,0.5) node[left]{$\frac{1}{2}$};

	\node at (0.6,-0.3) {(c)};
	
	\end{tikzpicture}
	\end{tabular}
	\vspace{-0.16in}
	\caption{The black curve graphs the demand correspondence, and the red curve represents the robust demand curve for (a) the fully revealing test, (b) the test with three scores, and (c) the robustly optimal test.}
	\label{figure-examplesection} 
\end{figure}

We first describe the intermediary's revenue guarantee if she used a fully revealing test, i.e., a test in which the score is equal to the asset's value $\theta$. 
\autoref{figure-examplesection}(a) depicts our analysis of this case using a \emph{(inverse) demand correspondence}, which for any disclosure fee in $[0,1]$ (on the vertical axis) traces the corresponding probabilities of disclosure consistent with equilibria of the induced game. 
For any such disclosure fee, there is an equilibrium in which the agent has the asset tested, discloses a score of $1$, and conceals a score of $0$; if no score is disclosed, the market offers a price of $0$ for the asset. The agent has no incentive to deviate because the payoff of $1 - \phid$ from disclosing a score of $1$ exceeds the payoff from not disclosing, which is in turn equal to that from disclosing a score of $0$. This is the unique equilibrium for any disclosure fee in $(0,1/2)$, and the resulting revenue for the intermediary is half the disclosure fee.\footnote{If the disclosure fee is $0$, then for any probability $p$, there is an equilibrium in which the agent discloses a score of $1$ and with probability $p$ discloses a score of $0$, since the agent's payoff is then $0$ whether or not he discloses.} 

For disclosure fees that weakly exceed $1/2$, other equilibria exist. There is an equilibrium in which the agent has the asset tested but never discloses the score; if no score is disclosed, the market offers a price of $1/2$. Given this market price, the agent prefers not to disclose a score of $1$ (or $0$) because the disclosure fee is at least $1/2$. Because the agent does not disclose any score, the intermediary's revenue is zero in this equilibrium. There is also a mixed strategy equilibrium in which the agent discloses a score of $1$ with an interior probability. For disclosure fees higher than $1/2$, the demand correspondence in \autoref{figure-examplesection}(a) depicts the three ``quantities of disclosure'' associated with these three equilibria.

Since we study the intermediary's robust revenue across equilibria, we identify the ``robust demand curve" for disclosure, which maps each disclosure fee (price) to the \emph{lowest} probability of disclosure (quantity) across all the equilibria associated with that fee. This is the red curve in \autoref{figure-examplesection}(a). The robust revenue for a given disclosure fee is the product of the disclosure fee and the lowest equilibrium probability of disclosure, i.e., the area of the rectangle under the robust demand curve at the point associated with that disclosure fee. The maximal robust revenue of $\approx 1/4$ for the fully revealing test is the area of the shaded rectangle in \autoref{figure-examplesection}(a), obtained from a disclosure fee that is slightly below $1/2$.\footnote{As is common with adversarial equilibrium selection, the intermediary cannot robustly achieve a payoff of exactly $1/4$ with a disclosure fee of $1/2$---since there is then an equilibrium in which she obtains a payoff of $0$.}

Let us now see how the intermediary can robustly improve her revenue by introducing an intermediate score that pools the asset's two possible values, $0$ and $1$. Consider a test with three possible scores---$0$, $3/4$, and $1$---generated by the conditional distribution in \cref{table:1.1}. The demand correspondence and the robust demand curve for this test are shown in \autoref{figure-examplesection}(b). For strictly positive disclosure fees less than $1/2$, there is a unique equilibrium: the agent has the asset tested, and discloses his score if it is $3/4$ or $1$; if no score is disclosed, the market offers a price of $0$. The probability of disclosure in this equilibrium is $(1+p)/2$. For disclosure fees higher than $1/2$, other equilibria exist, and all of them feature a lower probability of disclosure. Thus, by setting a disclosure fee slightly below $1/2$, the intermediary can obtain a revenue of $(1+p)/4$, which is more than her revenue from the robustly optimal fully revealing test. By introducing the intermediate score $3/4$, the intermediary increases the quantity of disclosure demanded because the agent pays the disclosure fee not only when the asset's value is $1$ but also, with some probability, when its value is $0$.

\begin{table}[t]
\centering
\makebox[0pt][c]{\parbox{1\textwidth}{\centering
        \begin{tabular}{  |c | c |c|c|}
           \hline
           &  $s=0$ & $s=3/4$ & $s=1$\\ \hline
           ${\theta=0}$ & $1-p$ & $p$ &$0$\\ \hline
          $\theta=1$ & $0$ & $3p$& $1-3p$\\ \hline
        \end{tabular}
        \caption{Distribution of score $s$ conditional on asset value $\theta$, where $p$ is in $(0,1/3)$.}
        \label{table:1.1}
}}
\end{table}

But the intermediary is limited in how much pooling she can introduce: a value of $p$ that is too high introduces a bad equilibrium in which the agent has the asset tested but only discloses a score of $1$. If no score is disclosed, the market concludes that the score is either $0$ or $3/4$ and offers a price of $3p/(1+3p)$. In this equilibrium, the probability of disclosure is $(1-3p)/2$. To eliminate this equilibrium, the intermediary sets the value of $p$ so that if the score is $3/4$, the agent prefers to disclose his score rather than obtain the non-disclosure price:
\begin{align*}
    \underbrace{\frac{3}{4}-\phid}_{\text{Disclose a score of $3/4$}}>\underbrace{\frac{3p}{1+3p}}_{\text{Nondisclosure price}}
\end{align*}
With a disclosure fee slightly below $1/2$, the above inequality holds for $p$ less than $1/9$. The robust revenue is attained by setting $p=1/9$ and charging a disclosure fee slightly below $1/2$. 
The restriction on $p$ is a ``cross-equilibrium'' constraint, which ensures that a low-revenue equilibrium is not created. Cross-equilibrium constraints lead to the exponential score distribution in the robustly optimal tests. 

To see why, let us return to the demand curve approach. These constraints lead to tests that induce a rectangular demand correspondence like the one in \autoref{figure-examplesection}(c): intuitively, whenever the demand correspondence or the robust demand curve are not a rectangle, there is some slack that the intermediary can use to change the test in a way that increases the probability of disclosure (by having more pooling) in the equilibrium that has the lowest probability of disclosure, , i.e., without violating cross-equilibrium constraints.

Which tests lead to rectangular demand correspondences? Those whose marginal score distribution includes an exponentially distributed component. To see why, notice that with a rectangular demand correspondence, there is a disclosure fee $\phid$ for which there is an interval of equilibrium disclosure probabilities. In \autoref{figure-examplesection}(c), this disclosure fee is $\phid=1/2$. This continuum of equilibria implies that there is a continuum of scores $s$ for which 
\begin{align}\label{Equation-ExamplesEqmThreshold}
    s-\phid=E[s'|s'\leq s].
\end{align}
The LHS above is the market price from disclosing a score $s$, and the RHS is that from not disclosing when the set of scores that do not disclose are those weakly below $s$. Thus, when \eqref{Equation-ExamplesEqmThreshold} holds there is an equilibrium with a disclosure threshold of $s$. Using $G(s)$ as the marginal CDF on scores and $G(s'|s'\leq s)$ as the conditional CDF on scores no higher than $s$, the RHS can be re-written as $\int_0^s (1-G(s'|s'\leq s))ds'$. It follows then that \autoref{Equation-ExamplesEqmThreshold} is equivalent to
\begin{align}\label{Equation-ExamplesDiff}
    \phid&=\int_0^s G(s'|s'\leq s)ds'=\frac{\int_0^{s} G(s')ds'}{G(s)}= \bigg(\frac{d}{d s}\Big(\ln(\int_0^{s} G(s')ds')\Big)\bigg)^{-1}. 
\end{align} 
Because the above is true for an interval of scores $s$, it defines a differential equation whose solution is $G(s)= \alpha e^{\frac{s}{\phid}}$ for some constant $\alpha$, , i.e., the exponential score distribution.\footnote{The exponential score distribution may bring to mind the work of \cite{roesler2017buyer} and \cite{condorelli2020information}. In those papers and in our work, a player optimally chooses a demand curve by manipulating an information structure or a distribution of values. In those papers, a buyer optimally chooses a unit-elastic demand curve that leads a seller to charge a low price and be indifferent between that price and higher prices. In a similar vein, \cite{ortner2018making} show that a principal can reduce the cost of monitoring by paying a monitor wages that generate a unit-elastic demand curve for bribes. In our paper, adversarial equilibrium selection leads the intermediary to choose a test that generates a rectangular demand curve, which corresponds to an exponential distribution of scores. The driving force is not to generate any indifference but to eliminate equilibria in which the agent does not disclose intermediate scores.}

\begin{figure}[t]
	\centering
		\setlength{\tabcolsep}{0 pt}
	\begin{tabular}{ccc}
	\begin{tikzpicture}[domain=0:3, scale=3, ultra thick,decoration={
		markings,
		mark=at position 0.5 with {\arrow{>}}}]    
	
	\draw[dotted] (0,1) node[left]{1} -- (1,1) -- (1,0) node[below]{$1$};

	\draw[<->] (0,1.2) node[left]{$G(s)$}-- (0,0) -- (1.2,0) node[below]{$s$};
	\draw (0,.05) -- (0,0) node[below]{$0$};
	\draw (0.57,0.6) node[right,rotate = 45]{$e^{2(x-1)}$};
	
	\draw[line width=3,domain=.5:1,smooth,variable=\x] plot ({\x},{(e^(2*(\x-1)))});
	
	\draw[loosely dotted] (.5,.3678) -- (.5,0) node[below]{$0.5$};
	
	\draw[line width = 3] (.5,.3678) -- (0,.3678) node[left]{$1/e$};
	
	\draw (0.5,-0.2) node[below]{(a)}   ;

	\end{tikzpicture}
	&
	\begin{tikzpicture}[domain=0:3, scale=3, ultra thick,decoration={
		markings,
		mark=at position 0.5 with {\arrow{>}}}]    
	
	\draw[dotted] (0,1) node[left]{1} -- (1,1) -- (1,0) node[below]{1};

	\draw[<->] (0,1.2) node[left]{$G(s|0)$}-- (0,0) -- (1.2,0) node[below]{$s$};
	\draw (0,.05) -- (0,0) node[below]{0};
	
	\draw[line width=3,domain=.49:1,smooth,variable=\x] plot ({\x},{(3-2*\x)*e^(2*(\x-1))});
	
	\draw[loosely dotted] (.5,.7356) -- (.5,0) node[below]{$0.5$};
	
	\draw[line width = 3] (.5,.7356) -- (0,.7356) node[left]{$2/e$};

		\draw (0.5,-0.2) node[below]{(b)}   ;
	\end{tikzpicture}
	&
	\begin{tikzpicture}[domain=0:3, scale=3, ultra thick,decoration={
		markings,
		mark=at position 0.5 with {\arrow{>}}}]    
	
	\draw[dotted] (0,1) node[left]{1} -- (1,1) -- (1,0) node[below]{1};

	\draw[<->] (0,1.2) node[left]{$G(s |1)$}-- (0,0) -- (1.2,0) node[below]{$s$};
	\draw (0,.05) -- (0,0) node[below]{0};
	
	\draw[line width=3,domain=.49:1,smooth,variable=\x] plot ({\x},{((2*\x-1)*e^(2*(\x-1)))});
	
	\draw (0.5,0) node[below]{$0.5$};
	
	\draw[line width = 3] (.5,0) -- (0,0);
	
	\draw (0.5,-0.2) node[below]{(c)}   ;
	\end{tikzpicture}
	
	\end{tabular}
	\caption{\footnotesize The robustly optimal distribution of scores where (a) $G(\cdot)$ is the marginal probability, (b) $G(\cdot|0)$ is the conditional probability of scores for $\theta=0$, and (c) $G(\cdot|1)$ is the conditional probability for $\theta=1$.}
	\label{figure-example} 
\end{figure}
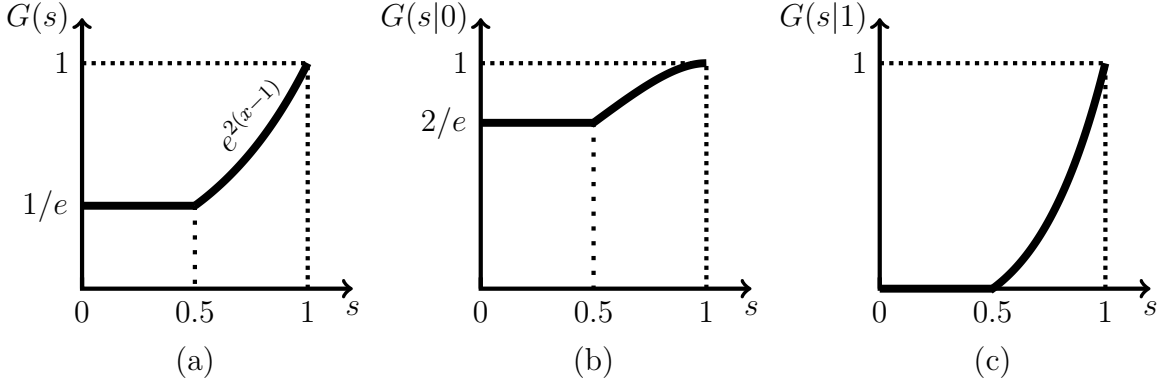

We illustrate this solution in \autoref{figure-example}, where (a) shows the marginal score distribution in the robustly optimal test. This marginal distribution is generated by a test in which if the value of the asset is $0$, with probability $2/e$ the test score is $0$ and with complementary probability the test score is distributed exponentially on $[1/2,1]$, and if the value of the asset is $1$, the test score is distributed exponentially on $[1/2,1]$. The testing fee is $0$ (by assumption) and the disclosure fee is slightly less than $1/2$. This test-fee structure induces a game with a unique equilibrium: the agent has the asset tested, and discloses the score if it exceeds the disclosure fee. If no score is disclosed the market offers a price of $0$. The resulting revenue for the intermediary is $\approx \frac{1}{2}\left(1-\frac{1}{e}\right)$, which is more than $1/4$ but less than the full surplus of $1/2$.

\section{Model}\label{Section-Model}
\subsection{The Setting}

A risk-neutral agent sells an asset in a competitive market comprising two risk-neutral buyers who have a common value for the asset. The agent and both buyers are symmetrically informed about the asset's market value, $\theta$, knowing only that it is drawn according to a distribution $F$ with support $\Theta\subseteq [\underline{\theta},\overline{\theta}]$, where $0\leq\underline\theta<\overline\theta<\infty$ are the lowest and highest values in $\Theta$, with an expected value of $\mu$. 

Prior to selling the asset, the agent can pay an intermediary to evaluate the asset, and if he does so, he has the option, for a fee, to disclose that evaluation to the market. The intermediary chooses how to evaluate the asset and sets the fees for her services, and we call such a scheme a \emph{test-fee structure}. A test-fee structure comprises:  
\begin{enumerate}[noitemsep,label=\alph*.]
   \item a \emph{test}, $T$, where $T:\Theta\rightarrow \Delta S$ is a measurable function that stochastically maps the asset's value to a score in some set $S$,
    \item a \emph{testing fee}, $\phit\in \Re$, which the agent pays for the asset to be evaluated, and
    \item a \emph{disclosure fee}, $\phid\in \Re$, which the agent pays to disclose the score to the market. 
\end{enumerate}

Because the test will only be used to determine the asset's expected market value, we henceforth assume without loss of generality that each test score is an unbiased estimate of the value.  That is, $S\equiv [\underline{\theta},\overline{\theta}]$ and for every market value $\theta$ in $\Theta$ and score $s$ in $S$, $E[\theta|T(\theta)=s]=s$. We denote the set of all tests by $\mathcal T$.
For each test $T$, let $G_T(s)$ denote the probability that if the intermediary evaluates the asset using test $T$, the resulting score is at most $s$. We refer to $G_T$ as the \emph{marginal score distribution}.

Turning to fees, we denote the pair of testing and disclosure fees by $\phi\equiv(\phit,\phid)$. The testing fee is the price the intermediary charges to evaluate and generate information about the asset and show it to the seller. The disclosure fee is the additional price to make that information hard or verifiable so that the agent can disclose it to the buyers.  

We denote the test-fee structure by $\tfstruct$. Each test-fee structure $\tfstruct$ induces a game $\game$ between the agent and the buyers with the following timeline:
\begin{enumerate}[noitemsep]
    \item Nature chooses the value $\theta$ of the asset according to the distribution $F$. 
    \item The agent, without observing $\theta$, decides whether to have the asset tested. If so, he pays the testing fee $\phit$ and observes a score $s$ drawn according to $T(\theta)\in\Delta S$.
    \item The agent decides whether to disclose the score to the market. If so, he pays the disclosure fee $\phid$ and the buyers observe the score. If he does not disclose, or if he does not take the test, the buyers observe a null message, $N$.
    \item The buyers bid simultaneously for the asset, and the asset is sold to the highest bidder, with ties broken uniformly. 
\end{enumerate}
The agent's payoff is the price at which he sells the asset minus any testing and disclosure fees that he pays. The payoff of the buyer that buys the asset at price $p$ is $\theta-p$; the other buyer's payoff is zero.
\footnote{We have framed our analysis in terms of this market game for concreteness. But the issues and our analysis apply to other settings in which an agent obtains evidence to persuade others. For example, a principal may decide how much to invest in a project based on her beliefs about the project's success, and the agent may acquire evidence to persuade the principal to invest more. Our model is isomorphic to such a setting if the principal's investment increases linearly in her posterior expectation and the agent's payoff increases linearly in the principal's investment.}

In the induced game $\game$, an agent's (behavioral) strategy is $(\sigma_{\testfee},\sigma_{\disclosurefee})$, where $\sigma_{\testfee}$ in $[0,1]$ is the probability with which he has the asset tested and $\sigma_{\disclosurefee}:S\rightarrow [0,1]$ is the score-contingent probability with which he discloses the score. Let $\hat{S}(T)\equiv\cup_{\theta\in\Theta} Im(T(\theta))$ denote the set of scores in the image of $T$. Buyer $i$'s (pure) strategy $\sigma_i:\hat{S}(T)\cup \{N\}\rightarrow \Re$ specifies buyer $i$'s bid following a disclosure of score $s$ or the null message $N$. A belief system $\rho_i:\hat{S}(T)\cup\{N\}\rightarrow \Delta \Theta$ specifies buyer $i$'s posterior belief about the asset's value following a disclosure of score $s$ or the null message $N$. We denote a strategy profile $\sigma=(\sigma_{\testfee},\sigma_{\disclosurefee},\sigma_1,\sigma_2)$ and belief system $\rho=(\rho_1,\rho_2)$ by $(\sigma,\rho)$. 

For a game $\game$, we denote the set of Perfect Bayesian Equilibria (henceforth PBE) by $\Sigma\tfstruct$.\footnote{A PBE corresponds to each player behaving in a way that is sequentially rational, beliefs $\mu_i$ being derived via Bayes Rule whenever possible, and off-path beliefs satisfying ``You can't signal what you don't know" \citep{fudenberg1991tirole}. The important implication for our setting is that both buyers have the same beliefs ($\rho_1(s)=\rho_2(s)$ for any score $s$) both on and off the equilibrium path.} Because the buyers have the same beliefs and compete in a first-price auction, each buyer's bid is equal to the asset's expected value given all the available information.

\subsection{Maximal Revenue Guarantees}\label{Section-ModelRPG}

We study the intermediary's \emph{maximal revenue guarantee}, namely the highest payoff that she can guarantee herself by choosing a test-fee structure, i.e., assuming that the equilibrium in the induced game is chosen adversarially to her interests. 
\begin{definition}\label{Definition-RobustPayoffGuarantee}
The intermediary's \textbf{maximal revenue guarantee} is
\begin{align}\label{Equation-RobustPayoffGuarantee}
    R_M\equiv\sup_{\tfstruct\in \mathcal T\times \Re^2}\, \inf_{(\sigma,\rho)\in \Sigma\tfstruct} \sigma_{\testfee}\left(\phit + \phid \int_S \sigma_{\disclosurefee}(s)dG_T\right).
\end{align}
\end{definition}
Recall that $\sigma_{\testfee}$ is the probability with which the agent pays for the intermediary to test the asset and $\sigma_{\disclosurefee}(s)$ is the probability with which the agent pays to disclose a score of $s$. The first term in the parenthesis in \eqref{Equation-RobustPayoffGuarantee} is the intermediary's revenue from the testing fee, and the second term is the revenue from the disclosure fee.

We compare this maximal revenue guarantee with the \emph{full informational surplus} that the intermediary could (conceivably) extract in equilibrium, which is $R_F\equiv \mu-\underline\theta$, where $\mu$ is the expected value of the asset. To see why, notice that because the market price of the asset is never lower than $\underline\theta$, the agent can guarantee himself that payoff by never taking the test. And since the total surplus in the economy is $\mu$, the intermediary cannot obtain more than $R_F$ in any equilibrium. 

Why do we study the maximal revenue guarantee? Although there are test-fee structures that have equilibria in which the intermediary obtains the full informational surplus, such test-fee structures have other equilibria in which the intermediary obtains $0$. We alluded to this fact in the introduction and in the example in \Cref{Section-Example}. In fact, for any test-fee structure in which the intermediary obtains close to the full informational surplus in some equilibrium, there is another equilibrium in which the intermediary obtains close to $0$. 
\begin{proposition}\label{prop-whyrobust}
There exists $\varepsilon^*>0$ such that for any $\varepsilon<\varepsilon^*$, if there exists an equilibrium of a test-fee structure $(T,\phi)$ in which the intermediary's revenue is at least $R_F-\varepsilon$, then there exists another equilibrium of that test-fee structure in which the intermediary's revenue is no more than $\delta(\varepsilon)$, where $\lim_{\varepsilon\rightarrow 0}\delta(\varepsilon)= 0$.

\end{proposition}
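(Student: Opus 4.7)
The plan is to prove the proposition constructively: given a test-fee structure $(T,\phi)$ admitting an equilibrium with revenue at least $R_F - \varepsilon$, I exhibit another equilibrium of $\game$ with revenue at most $\delta(\varepsilon)$. The natural candidate is the alternative profile in which the agent refuses to test with probability one. The whole argument rests on a revenue-sharing identity and on the agent's freedom to deviate to ``never test.''

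The first step is a revenue-sharing identity for any PBE of $\game$. Because the two buyers compete in a first-price auction with common beliefs, each winning bid equals the expected asset value given the market's information; applying Bayes' rule on-path, the unconditional expected market price equals $\mu$. Hence in any PBE the agent's expected payoff plus the intermediary's expected revenue equals $\mu$. Combining this with the hypothesis that the original equilibrium's revenue is at least $R_F - \varepsilon = \mu - \lb - \varepsilon$ gives an agent payoff of at most $\lb + \varepsilon$; since the agent could always deviate to not testing and obtain the original non-disclosure price $p_N^{\mathrm{orig}}$, this forces $p_N^{\mathrm{orig}} \leq \lb + \varepsilon$.

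The second step is to analyze the candidate alternative profile with $\sigma_t = 0$. Since only the null message is sent on-path, Bayes' rule assigns belief $F$ to $N$, giving a non-disclosure price of $\mu$ and zero intermediary revenue. The only deviation to rule out is the agent's deviation to test, whose best continuation payoff is $-\phit + \int \max(s-\phid,\mu)\,dG_T(s)$. The alternative is thus an exact PBE provided
\begin{equation*}
\phit \ \geq\ \int (s-\phid-\mu)^+\,dG_T(s).
\end{equation*}
I would derive this inequality (up to a small slack) by combining the revenue identity with the agent's disclosure IC at the original threshold $s^{\ast}_{\mathrm{orig}} = \phid + p_N^{\mathrm{orig}}$ and the Bayes identity $p_N^{\mathrm{orig}} = E[s \mid s \leq s^{\ast}_{\mathrm{orig}}]$. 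Using $p_N^{\mathrm{orig}} \leq \mu$ to compare the integrands $(s-\phid-p_N^{\mathrm{orig}})^+$ and $(s-\phid-\mu)^+$, and then applying the bound $p_N^{\mathrm{orig}} \leq \lb + \varepsilon$, a routine manipulation yields $\phit \geq \int (s-\phid-\mu)^+\,dG_T(s) - \varepsilon$.

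The main obstacle is the residual case in which the slack above is strictly positive, so the ``never test'' profile is only an $\varepsilon$-PBE rather than an exact PBE. In this case the agent's deviation gain in the candidate is bounded by $\varepsilon$, and I would construct an exact alternative PBE by parametrizing candidate mixed profiles by $\sigma_t \in [0,1]$: the Bayes-consistent non-disclosure price $p_N(\sigma_t)$ varies continuously from $\mu$ at $\sigma_t = 0$ down to $p_N^{\mathrm{orig}}$ at $\sigma_t = 1$, and the agent's indifference condition $\phit = \int (s-\phid-p_N)^+\,dG_T(s)$ is met at some $\sigma_t^{\mathrm{alt}}$ by an intermediate-value argument. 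The intermediary's revenue $\sigma_t^{\mathrm{alt}} \cdot (\phit + \phid \cdot P(\mathrm{disclose}))$ then inherits the smallness of $\sigma_t^{\mathrm{alt}}$, which itself scales with the slack $\varepsilon$, giving a bound $\delta(\varepsilon) \to 0$. The case $\sigma_t^{\mathrm{orig}} \in (0,1)$ in the original equilibrium is handled analogously, using the agent's indifference condition in place of the strict IC.
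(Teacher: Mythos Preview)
Your revenue-sharing identity and the derivation that $p_N^{\mathrm{orig}}\le\lb+\varepsilon$ are correct, and the bound $\phit\ge\int(s-\phid-\mu)^+\,dG-\varepsilon$ follows as you outline. The problem is the intermediate-value construction in the residual case. When the slack is strictly positive, i.e.\ when $\phit<\int(s-\phid-\mu)^+\,dG$, the participation constraint \eqref{eq:2} holds strictly, and then \emph{no} equilibrium of $\game$ has $\sigma_t<1$: the agent's gain from testing is $\int(s-\phid-p_N)^+\,dG-\phit$, which is \emph{decreasing} in $p_N$; since any Bayes-consistent $p_N$ is at most $\mu$, the gain is everywhere at least its value at $p_N=\mu$, which is strictly positive. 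So as you parametrize $\sigma_t$ from $0$ to $1$, the deviation gain never crosses zero---it starts positive and only grows---and there is no $\sigma_t^{\mathrm{alt}}\in(0,1)$ at which the agent is indifferent. Your claim that $\sigma_t^{\mathrm{alt}}$ ``scales with the slack $\varepsilon$'' is therefore vacuous in exactly the case you need it.

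The paper handles this case by a completely different construction: it keeps $\sigma_t=1$ and instead finds an alternative equilibrium with a \emph{high disclosure threshold}. From the near-full-extraction hypothesis it deduces geometrically that the area above $G$ on $[\lb+\phid,\mu+\phid]$ is at most $\varepsilon$, which forces both $\phit\le C\varepsilon$ and $1-G(\lb+\phid+\sqrt{\varepsilon})\le\sqrt{\varepsilon}$. It then verifies that $\thr:=\lb+\phid+\sqrt{\varepsilon}$ satisfies $E_G[s\mid s\le\thr]>\thr-\phid$, so by \cref{lemma-highest-threshold} there is an equilibrium threshold above $\thr$, in which testing occurs with probability one but disclosure probability is at most $\sqrt{\varepsilon}$. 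The low revenue comes from a small testing fee \emph{and} a small disclosure probability, not from a small testing probability. To repair your argument you would need to abandon the ``small $\sigma_t$'' route and construct a high-threshold equilibrium instead.
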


\cref{prop-whyrobust} formalizes the challenge that multiple equilibria present to the intermediary: choosing a test-fee structure because its most favorable equilibrium generates a high revenue leaves the intermediary vulnerable to an equilibrium that generates low or zero revenue. An immediate implication of \cref{prop-whyrobust} is that the maximal revenue guarantee, $R_M$, is bounded away from the full information surplus, $R_F$, by at least $\varepsilon^*$. Our main result, \cref{proposition-arbitrary-prior-exponential}, derives a tight bound on $R_M$ that, for every $\mu$, applies across distributions for which the asset's expected value is $\mu$. Because the proof of \cref{prop-whyrobust} uses techniques that we develop later in the paper, we do not develop the intuition here. The proof and a graphical intuition are in the Appendix.  

A separate rationale for focusing on the maximal revenue guarantee is that for any test-fee structure, the equilibrium that minimizes the intermediary's revenue also maximizes the agent's payoff. Thus, if the intermediary fears that the agent and the asset market will coordinate on the agent's preferred equilibrium, she would choose a test-fee structure that attains her maximal revenue guarantee.\footnote{Thus, in the game induced by a test-fee structure, the agent's preferred equilibrium is the uniquely Pareto efficient equilibrium from the perspective of players of that game.} 

Our analysis also shows how to maximize the intermediary's revenue across test-fee structures that admit a \emph{unique} equilibrium. The value of this related problem is at most $R_M$, but because the robustly optimal test-fee structure that we identify in \cref{proposition-arbitrary-prior-exponential} has a unique equilibrium, it is also a solution to this related problem.

\section{Preliminary Steps}\label{Section-MainResult}\label{Section-SimplifyingProblem}
To simplify the problem of solving for the maximal revenue guarantee, we take the following preliminary steps: 
\begin{enumerate}[noitemsep]
    \item We frame the analysis of tests purely in terms of marginal score distributions.
    \item We show that it suffices to consider only those test-fee structures in which the agent pays the testing fee with probability $1$ in every equilibrium.
    \item For every such test-fee structure, we characterize the equilibrium with the lowest revenue for the intermediary in terms of a score threshold for disclosure. 
    \item Because finding the maximal revenue guarantee involves optimizing with strict constraints, we formulate a relaxed problem with weak constraints that has the same solution.
\end{enumerate}

\subsection{Shifting from Tests to Marginal Score Distributions}

For the intermediary's revenue, all that matters about a test is its marginal score distribution: for all fees, if two tests have the same marginal score distribution, then they have the same equilibria.\footnote{Take two test-fee structures $\tfstruct$ and $(T',\phi)$ such that $T$ and $T'$ have the same marginal score distribution. Take an equilibrium $(\sigma,\rho)$ of $\game$. Suppose that $(\sigma,\rho)$ were played in $\mathcal G(T',\phi)$ and observe that given $\rho$ and that $G_T=G_T'$, the strategy profile $\sigma$ remains sequentially rational for the agent and the buyers. Moreover, given $\sigma$, $\rho$ continues to satisfy Bayes' Rule and the appropriate consistency condition. Therefore, $(\sigma,\rho)$ is an equilibrium of $\mathcal G(T',\phi)$.} Thus, henceforth, we refer to $\diststruct$ as a test-fee structure, where $G$ is a CDF on $[\underline{\theta},\overline{\theta}]$ that corresponds to the marginal score distribution of some test $T$ and $\phi$ is a pair of testing and disclosure fees. We denote by $\underline{s}_G$ the lowest score in the support of $G$.

Focusing on the set of marginal score distributions induced by all possible tests is useful because this set is easy to characterize. Recall that a distribution $G$ is a \emph{mean-preserving contraction} of $F$ if its support is in $[\underline\theta,\overline\theta]$ and $\int_{\underline{\theta}}^{s'} G(s)ds \leq \int_{\underline{\theta}}^{s'} F(s)ds$ for all $s' \in [\underline{\theta},\bar{\theta}]$, with equality at $s' = \bar{\theta}$. We denote by $\sosd(F)$ the set of distributions that are mean-preserving contractions of $F$. We then have the following classical result.\footnote{This result is in \cite{rothschild1970increasing} and \cite{blackwell1979theory}, and features in recent work on information design \citep[e.g.][]{gentzkow2016rothschild,roesler2017buyer}.
}

\setcounter{lemma}{-1}
\begin{lemma}\label{Proposition-MPC}
A marginal score distribution $G$ is induced by an (unbiased) test if and only if $G$ is in $\sosd(F)$.
\end{lemma}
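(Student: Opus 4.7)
The plan is to prove both directions of this classical characterization, relying on the standard equivalence between mean-preserving contractions and the martingale / Blackwell order.

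For the \emph{only if} direction, suppose $G = G_T$ for some unbiased test $T$. Let $(\Theta, S)$ denote the joint distribution induced by drawing $\theta \sim F$ and then $s \sim T(\theta)$, so the marginal of $\Theta$ is $F$ and the marginal of $S$ is $G$. Unbiasedness says $E[\Theta \mid S] = S$ a.s. First I would apply the law of iterated expectations to get $\int s\, dG(s) = E[S] = E[\Theta] = \mu$, which pins down the mean. For the integral inequality defining $\sosd(F)$, I would use the equivalent convex-order formulation: for any convex $\varphi$, Jensen's inequality gives $\varphi(S) = \varphi(E[\Theta \mid S]) \leq E[\varphi(\Theta) \mid S]$, so $\int \varphi\, dG \leq \int \varphi\, dF$. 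Specializing to $\varphi(x) = (s' - x)^+$ and using integration by parts yields $\int_{\un\theta}^{s'} G(s)\, ds \leq \int_{\un\theta}^{s'} F(s)\, ds$, with equality at $s' = \ub$ from the mean-matching.

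For the \emph{if} direction, suppose $G \in \sosd(F)$. I need to exhibit an unbiased test $T$ whose marginal score distribution is $G$. The key step is to construct a joint distribution $\pi$ on $\Theta \times S$ with marginals $F$ and $G$ that forms a martingale in the sense $E_{\pi}[\Theta \mid S] = S$. Once such a $\pi$ is available, defining $T(\theta)$ to be the conditional distribution $\pi(\cdot \mid \theta)$ yields a measurable kernel whose marginal on scores is $G$ and which satisfies $E[\theta \mid T(\theta) = s] = s$ by construction. The existence of such a coupling is exactly Strassen's theorem in its convex-order form, which states that $G$ is a mean-preserving contraction of $F$ if and only if $\Theta$ and $S$ can be coupled so that $S = E[\Theta \mid S]$ a.s. I would invoke this classical result (or equivalently the Blackwell--Rothschild--Stiglitz equivalence already cited in the footnote) to complete the argument.

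The main obstacle is the \emph{if} direction, since constructing the martingale coupling from the integral inequalities is nontrivial. However, because this is precisely the content of Strassen's theorem / Blackwell's theorem, I would simply cite \cite{blackwell1979theory} and \cite{rothschild1970increasing} rather than reprove it, as the authors indicate in their footnote. If a self-contained argument were needed, one natural route is to construct the coupling iteratively by ``mean-preserving spreads,'' first for discrete approximations of $F$ and $G$ and then passing to the limit using tightness of the set of couplings with prescribed marginals; another is to use the dual linear-programming formulation of optimal transport with the martingale constraint.
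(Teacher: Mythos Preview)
Your proposal is correct. The paper does not actually prove this lemma: it states it as a classical result and cites \cite{rothschild1970increasing} and \cite{blackwell1979theory} (see the footnote following the lemma statement). Your sketch is the standard argument---Jensen/convex order for the forward direction, Strassen's martingale coupling for the converse---and is exactly what those references contain, so you have supplied more than the paper itself does.
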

We use this formulation to rewrite revenue guarantees: for a test-fee structure $(G,\phi)$, let $\hat\Sigma(G,\phi)$ be the equilibrium set of the induced game between the agent and the market. The \emph{revenue guarantee} of $(G,\phi)$ is the lowest revenue generated in any equilibrium of this test-fee structure: $R\diststruct\equiv\inf_{(\sigma,\rho)\in \hat\Sigma\diststruct} \sigma_{\testfee}\left(\phit + \phid \int_S \sigma_{\disclosurefee}(s)dG\right)$. The maximal revenue guarantee is thus $R_M=\sup_{\diststruct\in \sosd(F)\times \Re^2}R\diststruct$.

\subsection{Using Option-Value as a Carrot}
We show that in every test-fee structure, either the asset is tested with probability $1$ in every equilibrium or there exists an equilibrium in which the asset is tested with probability $0$ (and the intermediary's revenue is $0$). 
\begin{lemma}\label{lemma-worker-takes-test-wp1-in-all}
If a test-fee structure $\diststruct$ satisfies
	\begin{equation}
	\phit< \int_{\mu + \phid}^{\bar{\theta}} [s - (\mu+\phid)] dG,  	\label{eq:2}    \tag{P}
	\end{equation}
then the asset is tested with probability $1$ in every equilibrium; otherwise, there exists an equilibrium in which the asset is tested with probability 0.
\end{lemma}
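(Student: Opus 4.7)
The plan is to prove the two directions separately, with the sufficiency direction (condition (P) implies testing with probability $1$) being the substantive half, and the "otherwise" direction being a straightforward equilibrium construction.

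For the sufficiency direction, I would argue by contradiction: suppose $(G,\phi)$ satisfies (P) but there is some equilibrium $(\sigma,\rho)$ with $\sigma_t < 1$. Let $p_N$ denote the non-disclosure price in this equilibrium. The first key step is to show that $p_N \leq \mu$. Since each buyer bids his posterior expectation and scores are unbiased, $p_N$ equals a Bayesian average of two components: the prior mean $\mu$ (the contribution from the event that the asset is not tested, which occurs with positive probability $1-\sigma_t$) and the expected score among the tested-and-concealed scores. Because the agent's disclosure decision is sequentially rational, he (weakly) conceals exactly the scores $s$ with $s-\phi_d \leq p_N$, i.e., the lower scores; thus the average of concealed scores is at most the unconditional mean $\mu$. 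So $p_N$ is a convex combination of $\mu$ and a quantity no larger than $\mu$, giving $p_N \leq \mu$.

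The next step is to derive the contradiction using $p_N \leq \mu$. The agent's continuation payoff from testing, net of $\phi_t$, equals
\begin{equation*}
-\phi_t + p_N + \int_{p_N+\phi_d}^{\bar\theta}\bigl(s - p_N - \phi_d\bigr)\,dG(s),
\end{equation*}
whereas not testing yields $p_N$. Optimality of the mixed action $\sigma_t \in [0,1)$ requires the testing payoff to not exceed $p_N$, i.e., $\phi_t \geq \int_{p_N+\phi_d}^{\bar\theta}(s-p_N-\phi_d)\,dG(s)$. But the integrand $(s-p_N-\phi_d)^+$ is weakly decreasing in $p_N$, so using $p_N \leq \mu$ yields $\phi_t \geq \int_{\mu+\phi_d}^{\bar\theta}(s-\mu-\phi_d)\,dG(s)$, directly contradicting (P).

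For the converse, suppose (P) fails. I would construct an equilibrium in which $\sigma_t = 0$. Specify that the agent never tests; off path, if he were to test, he discloses exactly when $s > \mu + \phi_d$. Specify buyer beliefs so that on seeing the null message $N$, the posterior is $F$ (which is consistent since $N$ occurs with probability $1$ on path), making the non-disclosure price $\mu$; on seeing a disclosed score $s$, each buyer's posterior expectation is $s$. Sequential rationality of the disclosure decision and of the buyers' bids is then immediate. The only remaining check is that the agent does not want to deviate to testing. His payoff from not testing is $\mu$; his payoff from testing is $-\phi_t + \mu + \int_{\mu+\phi_d}^{\bar\theta}(s-\mu-\phi_d)\,dG(s)$, which is at most $\mu$ precisely because (P) fails. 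Hence not testing is optimal and we have the desired equilibrium.

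The main obstacle is the inequality $p_N \leq \mu$: it must hold in every candidate equilibrium with $\sigma_t < 1$, without any regularity assumption on $G$ or sign restriction on $\phi_d$. The cleanest way I see is the convex-combination argument above, relying only on (i) Bayesian updating with two summands (tested vs.\ not tested) and (ii) the fact that concealed scores are a lower-tail event under the unbiased score distribution. Minor care is needed to handle indifference at $s = p_N + \phi_d$ (a measure-zero set under $G$ is harmless, and atoms at the threshold can be absorbed into either side without changing the conditional means relevant to the inequality).
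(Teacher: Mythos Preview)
Your proposal is correct and follows essentially the same approach as the paper's proof: both directions are argued identically, with the key step being $p_N\leq\mu$ combined with the monotonicity of $\int_{x+\phi_d}^{\bar\theta}(s-x-\phi_d)\,dG$ in $x$. Your justification of $p_N\leq\mu$ via the convex-combination decomposition into the untested and tested-but-concealed events is in fact a bit more explicit than the paper's, which asserts the inequality directly from the observation that disclosed scores are the high ones.
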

The logic of \cref{lemma-worker-takes-test-wp1-in-all} is that \eqref{eq:2} is a \emph{participation constraint} that must hold for the asset to be tested with positive probability in every equilibrium. To see why, consider a test-fee structure $(G,\phi)$ and an equilibrium in which the asset is tested with probability $0$. Because the market expects non-disclosure with probability $1$, the price of the asset conditional on non-disclosure is $\mu$.

If the agent deviates and has the asset tested, then he optimally pays $\phid$ and discloses the score whenever it is higher than $\mu+\phid$. This is strictly profitable if
\begin{align*}
    \mu <-\phit + \underbrace{\int_{\underline\theta}^{\overline\theta}\max\{\mu,s-\phid\}}_{\text{Option Value}} dG,
\end{align*}
which, by re-arranging, yields \eqref{eq:2}. Thus, if \eqref{eq:2} holds, the asset is tested with positive probability in every equilibrium. The proof of \cref{lemma-worker-takes-test-wp1-in-all} shows if \eqref{eq:2} holds, the asset is in fact tested with probability $1$ in every equilibrium.

Using this result, we show that it suffices to restrict attention to non-negative fees: for any test-fee structure with a negative testing or disclosure fee, the intermediary can improve her revenue guarantee by using a test-fee structure that has non-negative fees. 
\begin{lemma}\label{lemma-nonnegative}
    For any test-fee structure $(G,\phi)$ with $\phid < 0$ or $\phit<0$, there exists a test-fee structure $(G',\phi')$ with non-negative fees such that $R(G,\phi)$ is strictly less than $R(G',\phi')$.
\end{lemma}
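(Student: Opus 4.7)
The plan is to split into cases by the signs of $\phit, \phid$ and, in each case, exhibit a non-negative-fee test-fee structure (keeping $G$ unchanged) that strictly dominates.

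\textbf{Case 1: $\phit < 0$, $\phid \ge 0$.} Take $\phi' = (0, \phid)$. In the generic sub-case $\phid < \ub - \mu$, the right-hand side of \eqref{eq:2} is strictly positive, so (P) holds at both $\phi$ and $\phi'$; by \cref{lemma-worker-takes-test-wp1-in-all} testing happens with probability $1$ in every equilibrium of both induced games. Since the agent's disclosure decisions and the market's pricing depend only on $(G, \phid)$ once testing is guaranteed, the two games share the same equilibrium disclosure behavior, and thus $R(G, \phi') - R(G, \phi) = -\phit > 0$. In the boundary sub-case $\phid \ge \ub - \mu$ the disclosure fee is prohibitive (no score can beat the non-disclosure price), so the agent never discloses and $R(G, \phi) = \phit < 0$, while $\phi' = (0, 0)$ guarantees $R(G, \phi') \ge 0$.

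\textbf{Case 2: $\phid < 0$.} The key observation is that when $\phit + \phid \le \mu - \lb$, the profile in which the agent tests and discloses every score---supported by off-path non-disclosure belief concentrated at $\lb$---is an equilibrium of $(G, \phi)$ yielding revenue $\phit + \phid$; when $\phit + \phid > \mu - \lb = R_F$, the intermediary's revenue is in any event bounded above by $R_F < \phit + \phid$. Thus $R(G, \phi) \le \phit + \phid$ throughout this case. If $\phit + \phid \le 0$, take $\phi' = (0, 0)$: non-degeneracy of $G$ yields $h(\mu) := \int_\mu^{\ub}(s - \mu)\,dG > 0$, so (P) holds at $\phi'$ and $R(G, \phi') = 0 > \phit + \phid \ge R(G, \phi)$.

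If instead $\phit + \phid > 0$, I take $\phi' = (\phit + \phid, \epsilon)$ for small $\epsilon > 0$, which has non-negative fees. Verifying (P) at $\phi'$ reduces, as $\epsilon \to 0$, to $\phit + \phid < h(\mu)$; this follows by combining the original (P), $\phit < h(\mu + \phid)$, with the identity $h(\mu + \phid) - h(\mu) = \int_{\mu + \phid}^{\mu}(1 - G(x))\,dx < |\phid|$, where the strict inequality uses that $G$ has positive mass on $(\lb, \mu)$ by non-degeneracy and the mean-$\mu$ constraint. (If the original (P) happens to fail, then $R(G, \phi) \le 0$ and $(G, (0, \epsilon))$ beats it directly.) By continuity, (P) holds at $\phi'$ for $\epsilon$ sufficiently small, so \cref{lemma-worker-takes-test-wp1-in-all} gives testing with probability $1$ in every equilibrium, and revenue there equals $(\phit + \phid) + \epsilon p_d$. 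For $\epsilon < \ub - \mu$ no no-disclosure equilibrium can exist (the market's non-disclosure price is at most $\mu$, but any score above $\mu + \epsilon < \ub$ then strictly prefers disclosure), so $p_d > 0$ in every equilibrium and $R(G, \phi') > \phit + \phid \ge R(G, \phi)$.

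The main obstacle is this last sub-case: establishing the strict inequality $h(\mu + \phid) - h(\mu) < |\phid|$, which is what keeps (P) slack at the perturbed fees, together with the argument that the adversarial equilibrium at the small positive $\epsilon$ still features strictly positive disclosure. The earlier cases are essentially bookkeeping once \cref{lemma-worker-takes-test-wp1-in-all} is invoked.
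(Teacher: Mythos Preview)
Your plan to keep $G$ fixed is clean when it works, but it fails at a boundary you overlook. In Case~2 with $\phit + \phid = 0$ you set $\phi' = (0,0)$ and write $R(G,\phi') = 0 > \phit + \phid$; but $\phit + \phid = 0$, so the inequality is not strict. Indeed, with $\phid < 0$ the agent discloses every score in any equilibrium, so revenue is $\sigma_t(\phit + \phid) = 0$ regardless of $\sigma_t$, and $R(G,\phi) = 0$ exactly. To strictly improve you would need $R(G',\phi') > 0$, which for non-degenerate $G$ is achievable (e.g.\ via $\phi' = (\epsilon',0)$ for small $\epsilon' > 0$), but you never take that route. More seriously, you invoke ``non-degeneracy of $G$'' twice---for $h(\mu) > 0$ and for the strict bound $h(\mu+\phid) - h(\mu) < |\phid|$---without justification. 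The completely uninformative test ($G$ degenerate at $\mu$) is a valid test, and for it every non-negative-fee structure with the same $G$ yields $R(G,\phi') = 0$; no strict improvement is possible without changing $G$. The paper's proof handles exactly these cases by switching to $G' = F$, which is non-degenerate by the standing assumption $\lb < \ub$.

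A separate gap in Case~1: you assert that $\phid < \ub - \mu$ makes the right side of \eqref{eq:2} strictly positive. This requires $G$ to place mass strictly above $\mu + \phid$, which is not guaranteed (again, $G$ degenerate at $\mu$, or any $G$ with support contained in $[\lb,\mu+\phid]$, is a counterexample). When the right side is zero, \eqref{eq:2} fails at $\phi' = (0,\phid)$ and your equilibrium-equivalence argument collapses. The conclusion still survives because then $R(G,\phi) = \phit < 0$ while any non-negative-fee structure has revenue at least $0$, but the reasoning you give is incorrect as stated.
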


\subsection{Adversarial Disclosure Thresholds}

For any test-fee structure that satisfies \eqref{eq:2}, an adversarial equilibrium is one that minimizes the disclosure probability, but in which the asset is tested with probability $1$.
We show that such equilibria are threshold equilibria in which the agent does not disclose his score if he obtains a score at the threshold.

With \cref{lemma-worker-takes-test-wp1-in-all,lemma-nonnegative} in hand, we restrict attention to test-fee structures that satisfy \eqref{eq:2} and have non-negative disclosure fees. Given a test-fee structure $(G,\phi)$, consider thresholds $\thr$ that weakly exceed $\underline{s}_G$, where $\underline{s}_G$ is the lowest score in the support of $G$. We say that such a threshold $\thr$ is an \emph{equilibrium threshold} for a test-fee structure $(G,\phi)$ if
\begin{align}\label{Equation-EqmThresholdDefinition}
    \thr-\phid=E_{G}[s|s\leq \thr].
\end{align}
If $\thr$ satisfies \eqref{Equation-EqmThresholdDefinition}, then there is an equilibrium in which the asset is tested with probability $1$ and the agent discloses the score if and only if it strictly exceeds $\thr$. To see why, suppose that the agent behaves in this way and \eqref{Equation-EqmThresholdDefinition} holds. Then the LHS of \eqref{Equation-EqmThresholdDefinition} is the difference between the market price of the asset when a score of $\thr$ is disclosed and the disclosure fee, and the RHS is the market price for the asset when no score is disclosed. Thus, the agent is indifferent between disclosing and not disclosing a score of $\thr$. Because the market price following disclosure increases in the score but the market price following non-disclosure is constant in the score, the agent strictly prefers not to disclose scores lower than $\thr$ and to disclose scores higher than $\thr$. 

A test-fee structure may have multiple equilibrium thresholds. It may also have other equilibria, in which if the agent obtains a threshold score, he chooses to disclose his score with strictly positive probability rather than probability $0$. The mixed strategy equilibria for the fully revealing and the $3$-score tests in \Cref{Section-Example} are examples of such equilibria. We prove in \cref{lemma-highest-threshold} below that from the standpoint of adversarial equilibrium selection, it suffices to focus on equilibria in which the agent discloses his score only if \emph{strictly} exceeds the threshold. 

We proceed as follows. For each test-fee structure, we show that a highest equilibrium threshold exists and provide a characterization of the highest equilibrium threshold that we later use to find the robustly optimal test-fee structure. Finally, we show that this highest equilibrium threshold corresponds to an adversarial equilibrium in which the agent discloses his score if and only if it strictly exceeds the highest equilibrium threshold.
\begin{lemma}\label{lemma-highest-threshold}
If a test-fee structure $\diststruct$ satisfies \eqref{eq:2} and $\phid \geq 0$, then the following are true:  
\begin{enumerate}[nolistsep,label=\normalfont(\alph*)]
    \item \label{bullet-exist}A highest equilibrium threshold $\thr$ exists.
    \item \label{bullet-highest} A score threshold $\thr\geq\underline{s}_G$ is the highest equilibrium threshold if and only if 
    \begin{align}
\begin{aligned}
    \thr-\phid&=E_{G}[s|s\leq \thr],\\
    \thr'-\phid&> E_{G}[s|s\leq \thr']\qquad \forall \thr'>\thr.
\end{aligned}
\tag{HE}\label{eq:robustimplementation}	 
\end{align}
    \item \label{bullet-eqm}There exists an adversarial equilibrium in which the agent discloses score $s$ if and only if $s>\thr$ where $\thr$ is the highest equilibrium threshold.

\noindent 
\end{enumerate}
\end{lemma}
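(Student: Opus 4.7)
The plan is to reduce the lemma to the sign structure of a single scalar function. Define
\[
h(\thr) \equiv \thr - \phid - E_{G}[s \mid s \leq \thr]
\]
on $[\underline{s}_G,\ub]$, so that an equilibrium threshold in the sense of \eqref{Equation-EqmThresholdDefinition} is exactly a zero of $h$. Because $G$ and the partial expectation $\Psi(\thr) = \int_{[\underline{s}_G,\thr]} s\,dG$ are right-continuous, so is $h$; at an atom of $G$, the conditional mean jumps upward, so $h$ jumps downward. The boundary values are $h(\underline{s}_G) = -\phid \leq 0$ and $h(\ub) = \ub - \phid - \mu > 0$, where the latter uses that the integrand in \eqref{eq:2} vanishes unless $\mu + \phid < \ub$.

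For parts (a) and (b), I would set $\thr^{*} \equiv \sup\{\thr \in [\underline{s}_G,\ub] : h(\thr) \leq 0\}$. This supremum exists and, by the boundary calculation, lies in $[\underline{s}_G,\ub)$. The claim is that $\thr^{*}$ is the unique value satisfying \eqref{eq:robustimplementation}. The strict inequality $h(\thr') > 0$ for $\thr' > \thr^{*}$ is immediate from the definition of the supremum; right-continuity of $h$ at $\thr^{*}$ then forces $h(\thr^{*}) \geq 0$, and approaching $\thr^{*}$ by a sequence $\thr_n \uparrow \thr^{*}$ with $h(\thr_n) \leq 0$ gives $h(\thr^{*}) \leq 0$, using that a $G$-atom at $\thr^{*}$ only weakly raises $E_{G}[s \mid s \leq \thr]$ relative to its left limit. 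Hence $h(\thr^{*}) = 0$ and $\thr^{*}$ satisfies \eqref{eq:robustimplementation}. Conversely, any $\thr$ satisfying \eqref{eq:robustimplementation} is a zero of $h$ above which $h$ is strictly positive, so by the characterization of the supremum $\thr = \thr^{*}$.

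For part (c), I would exhibit the candidate equilibrium in which the agent tests with probability $1$ and discloses iff $s > \thr^{*}$; the buyers bid the score upon disclosure and bid $E_{G}[s \mid s \leq \thr^{*}] = \thr^{*} - \phid$ upon non-disclosure. The disclosure rule is optimal by inspection. Testing is strictly optimal because the agent's gain over the non-disclosure payoff is $-\phit + \int_{\thr^{*}}^{\ub}(s - \thr^{*})\,dG$; since $E_{G}[s \mid s \leq \thr^{*}] \leq \mu$ implies $\thr^{*} \leq \mu + \phid$, and $\thr \mapsto \int_{\thr}^{\ub}(s - \thr)\,dG$ is decreasing, this gain is at least $-\phit + \int_{\mu+\phid}^{\ub}(s - \mu - \phid)\,dG > 0$ by \eqref{eq:2}. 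To show the candidate is adversarial, \cref{lemma-worker-takes-test-wp1-in-all} implies testing occurs with probability $1$ in every equilibrium, and sequential rationality forces the agent to disclose only scores above some cutoff $\thr$ (with mixing possible only at a $G$-atom at $\thr$); Bayesian consistency then requires either $h(\thr) = 0$ or, at an atom, $\thr - \phid \in [E_{G}[s \mid s < \thr], E_{G}[s \mid s \leq \thr]]$. Both conditions fail for $\thr > \thr^{*}$ because $h(\thr) > 0$ puts $\thr - \phid$ strictly above even the larger of these means. Hence every equilibrium has threshold $\leq \thr^{*}$ and non-disclosure probability at most $G(\thr^{*})$, so the candidate minimizes disclosure probability and intermediary revenue.

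The main obstacle is the careful accounting for $G$-atoms: both at $\thr^{*}$ (to argue $h(\thr^{*}) \leq 0$ despite potential downward jumps of $h$) and at cutoffs above $\thr^{*}$ (to rule out mixed-strategy equilibria with higher effective thresholds).
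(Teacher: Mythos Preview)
Your approach is essentially the same as the paper's: both arguments study the sign of $\thr-\phid-E_G[s\mid s\le\thr]$, exploit that the conditional mean is increasing (so the function can only jump downward) while $\thr-\phid$ is continuous, and locate the highest zero as the supremum of the non-positive set. The paper packages this via a helper lemma about an increasing function crossing a continuous one; you work directly with $h$. For part~(c), your argument that any equilibrium cutoff above $\thr^*$ would require $\thr-\phid\le E_G[s\mid s\le\thr]$, contradicting $h(\thr)>0$, matches the paper's proof exactly.

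One small gap: you restrict $h$ to $[\underline{s}_G,\ub]$ and assert $h(\ub)>0$ because \eqref{eq:2} forces $\mu+\phid<\ub$. That inference requires $\phit\ge0$, which the lemma does not assume; if $\phit<0$, \eqref{eq:2} can hold with $\mu+\phid\ge\ub$. The paper avoids this by allowing thresholds above $\ub$ (where $E_G[s\mid s\le\thr]=\mu$, so $h(\thr)\to\infty$) and taking the supremum over $[\underline{s}_G,\infty)$. Extending your domain in the same way fixes the argument without further changes.
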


\Cref{figure-highest-threshold-characterization} illustrates our characterization of the highest equilibrium threshold.

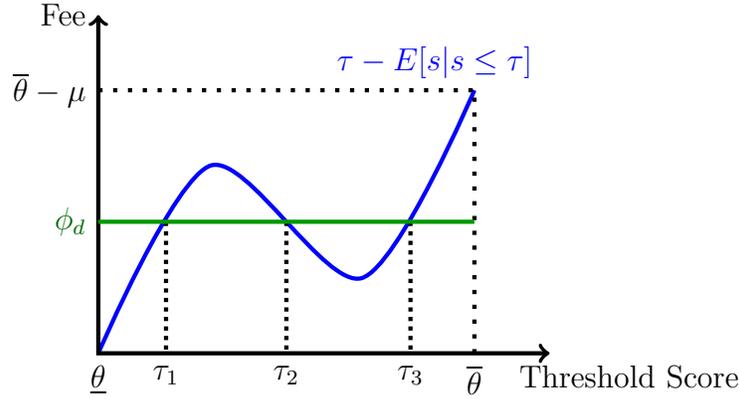
\begin{figure}[h]
	\centering
		\setlength{\tabcolsep}{-2pt}
	\begin{tikzpicture}[domain=0:3, scale=5, ultra thick,decoration={
		markings,
		mark=at position 0.5 with {\arrow{>}}}]    
	
	\draw [blue] plot [smooth] coordinates {(0,-.2) (0.3,.3) (.7,0) (1,.5)} node[above,xshift=-15]{$\thr - E[s | s \leq \thr]$};

	\draw[<->] (0,0.7) node[left]{Fee}-- (0,-.2) node[below]{$\lb$} -- (1.2,-0.2) node[below,xshift=30]{Threshold Score};

	\draw[dotted] (0.18,.15) -- (.18,-.2) node[below]{$\thr_1$};
	
	\draw[dotted] (0.83,.15) -- (.83,-.2) node[below]{$\thr_3$};

	\draw[dotted] (0.5,.15) -- (.5,-.2) node[below]{$\thr_2$};
	
		\draw[loosely dotted] (1,-0.2)node[below]{$\ub$} -- (1,.5);
	\draw[loosely dotted] (0,.5)node[left]{$\ub - \mu$} -- (1,0.5);
	
	\draw [green!60!black] (1,.15) -- (0,.15) node[left]{\textcolor{green!40!black}{$\phi_d$}};

	\end{tikzpicture}
	\caption{\footnotesize The worst equilibrium at a disclosure fee is that with the highest equilibrium threshold. For example, at a disclosure fee of $\phid$, the highest equilibrium threshold is $\thr_3$.}
	\label{figure-highest-threshold-characterization}
\end{figure}

\subsection{A Relaxation with an Identical Value}\label{section-problem-formulation}

We have seen that if a test-fee structure satisfies \eqref{eq:2}, then the agent takes the test with probability $1$ and the adversarial equilibrium is characterized by the highest equilibrium threshold. The revenue guarantee in the robustly optimal test-fee structure is therefore
\begin{align*}
    R_M=\sup_{(G,\phi,\thr)\in \sosd(F)\times \Re^3}\,\phit + \phid (1-G(\thr))
  &\qquad \text{s.t. } \eqref{eq:2}\text{ and }\eqref{eq:robustimplementation}.
\end{align*}
The intermediary's maximization problem has constraints with strict inequalities, but we show that the problem's value $R_M$ is unchanged if those inequalities are made weak. These are the \emph{weak participation constraint}
\begin{align}
\phit \leq \int_{\mu + \phid}^{\bar{\theta}} [s - (\mu+\phid)] dG\label{eq-phiT-relaxed-opt}\tag{w-P}
\end{align}
\noindent and the \emph{weak-highest equilibrium constraint} (which defines the weak-highest equilibrium threshold),
\begin{align}
\begin{aligned}
\thr-\phid&=E_{G}[s|s\leq \thr]\\
\thr' - \phid &\geq E_{G}[s|s\leq \thr']\qquad \forall \thr'>\thr.
\end{aligned}
\label{eq-relaxed-highest}\tag{w-HE}
\end{align}
We write the relaxed problem as
\begin{align*}
  \max_{(G,\phi,\thr)\in \sosd(F)\times \Re^3}  \phit + \phid(1-G(\thr))\qquad \text{s.t. } \eqref{eq-phiT-relaxed-opt}\text{ and }\eqref{eq-relaxed-highest}.
\end{align*}

In addition to having a solution and the same value as the original problem, the relaxed problem has several attractive features. First, for any test-fee structure that solves the relaxed problem, there is a ``nearby'' test-fee structure whose revenue guarantee is close to the maximal revenue guarantee $R_M$. Second, for any convergent sequence of test-fee structures that achieves the maximal revenue guarantee, the limiting test-fee structure is a solution to the relaxed problem. Thus, solutions to the relaxed problem identify necessary and sufficient features of test-fee structures whose revenue guarantees approximate $R_M$. We call these solutions \emph{robustly optimal test-fee structures}.

Let us formalize this discussion. A sequence of test-fee structures $\{(G^n,\phi^n)\}_{n=1,2,\ldots}$ converges to a test-fee structure $(G,\phi)$ if $G^n$ converges weakly to $G$ and $\phi^n$ converges to $\phi$. For a test-fee structure $(G,\phi)$ and threshold $\thr$, we denote the associated equilibrium revenue for the intermediary by $\hat R(G,\phi,\thr)$. We use this notation to link the relaxed and original problems. 

\begin{lemma}\label{lemma-relaxvalue}
\mbox{}
\begin{enumerate}
    \item[(a)] An optimal solution $(G,\phi,\thr)$ to the relaxed problem exists and $R_M = \hat{R}(G,\phi,\thr)$.
    \item[(b)]  Consider any optimal solution $(G,\phi,\thr)$ to the relaxed problem.  Then there exists a sequence of test-fee structures and thresholds $\{(G^n,\phi^n,\thr^n)\}_{n=1,2,\ldots}$ such that (i) for each $n$, $(G^n,\phi^n,\thr^n)$ satisfy \eqref{eq:2} and \eqref{eq:robustimplementation}, (ii) $(G^n,\phi^n)$ converges to $(G,\phi)$, and (iii) $R_M = \lim_{n\rightarrow\infty} \hat R(G^n,\phi^n,\thr^n)$.
    \item[(c)] Consider any sequence of test-fee structures and thresholds $\{(G^n,\phi^n,\thr^n)\}_{n=1,2,\ldots}$ such that (i) for each $n$, $(G^n,\phi^n,\thr^n)$ satisfy \eqref{eq:2} and \eqref{eq:robustimplementation}, (ii) $(G^n,\phi^n)$ converges to a test-fee structure $(G,\phi)$, and (iii) $R_M = \lim_{n\rightarrow\infty} \hat R(G^n,\phi^n,\thr^n)$.  Then there exists $\thr$ such that $(G,\phi,\thr)$ satisfy \eqref{eq-phiT-relaxed-opt} and \eqref{eq-relaxed-highest}, and $R_M = \hat R(G,\phi,\thr)$.
\end{enumerate}
\end{lemma}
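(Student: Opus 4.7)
The plan is to prove the three parts in an order that exploits their logical dependence: existence in (a) by compactness, (c) for the limit-of-sequences direction, and (b) for the explicit perturbation, with (b) and (c) together closing the value-equality step of (a).

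For part (a), $\sosd(F)$ is compact in the weak-convergence topology on CDFs supported on $[\lb,\ub]$ (Helly's selection theorem plus closedness of the mean-preserving-contraction condition), the fees may be restricted WLOG to a bounded set since the adversarial revenue cannot exceed the full surplus $R_F$, and $\thr\in[\lb,\ub]$ is compact. The constraints \eqref{eq-phiT-relaxed-opt} and \eqref{eq-relaxed-highest} are closed under weak convergence of $G^n$ with $(\phi^n,\thr^n)\to(\phi,\thr)$, since they reduce to inequalities involving $\int_{\lb}^{\thr'}G(s)\,ds$, which is jointly continuous in $(G,\thr')$ by integration by parts. The objective $\phit+\phid(1-G(\thr))$ is upper semi-continuous (atoms of $G$ at $\thr$ are handled via right-continuity), so a maximizer exists.

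For part (c), extract a subsequence along which $\thr^n\to\thr\in[\lb,\ub]$. Passing \eqref{eq:2} to the limit yields \eqref{eq-phiT-relaxed-opt} by continuity; passing \eqref{eq:robustimplementation} yields \eqref{eq-relaxed-highest} since strict inequalities become weak in the limit. The subtle step is revenue convergence: $G^n(\thr^n)$ can fail to converge to $G(\thr)$ if $G$ has an atom at $\thr$. But the pre-limit structures satisfy strict \eqref{eq:robustimplementation}, which pins down the approach direction of $\thr^n$ toward $\thr$, ensuring $G^n(\thr^n)\to G(\thr)$. Then $\hat R(G^n,\phi^n,\thr^n)\to\hat R(G,\phi,\thr)=R_M$, and the limit satisfies the weak constraints.

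For part (b), given an optimal $(G,\phi,\thr)$ of the relaxed problem, set $\phit^n=\phit-1/n$ so that \eqref{eq:2} holds with strict slack. To strengthen \eqref{eq-relaxed-highest} to \eqref{eq:robustimplementation}, perturb the test: a concrete choice is $G^n=(1-1/n)G+(1/n)F$, which remains in $\sosd(F)$ and converges weakly to $G$, combined if necessary with a vanishing downward shift of $\phid$ so that the highest solution $\thr^n$ of $\thr'-\phid^n=E_{G^n}[s\mid s\leq\thr']$ satisfies strict inequality for $\thr''>\thr^n$ and converges to $\thr$. The perturbed revenue then converges to $\hat R(G,\phi,\thr)$; since each perturbed structure satisfies \eqref{eq:2} and \eqref{eq:robustimplementation}, its adversarial revenue is at most $R_M$ by \Cref{lemma-highest-threshold}, so the relaxed value is at most $R_M$, closing the value-equality step in (a).

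The hard part will be the perturbation in (b): one must simultaneously ensure that $G^n\in\sosd(F)$, that all secondary equalities in \eqref{eq-relaxed-highest} above $\thr$ are destroyed so that the highest equilibrium threshold $\thr^n$ becomes isolated, and that $\thr^n\to\thr$ with no loss of revenue. Handling atoms of $G$ at $\thr$ is especially delicate; the ``mix with $F$'' perturbation smooths atoms while preserving mean-preserving-contraction, which is why it is a natural starting point, though a more local bump near $\thr$ may be needed when $F$ itself has an atom there.
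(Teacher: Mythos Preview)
Your overall architecture (existence by compactness, (c) by passing to limits, (b) by perturbation, with (b) and (c) together pinning down the value equality) matches the paper's. But two of the three parts contain real gaps or unnecessary detours.

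\textbf{Part (b): the perturbation is much simpler than you think.} You propose mixing $G$ with $F$ and then, ``if necessary,'' lowering $\phid$. The mixing is a red herring: it is neither necessary nor sufficient (if $G=F$ it does nothing; in general it may break the equality at $\thr$ in the wrong direction, forcing you to locate a new highest threshold and control it). The paper keeps $G$ fixed and simply sets $\phid^n=\phid-1/n$, $\phit^n=\phit-1/n$. Lowering $\phid$ alone turns every weak inequality $\thr'-\phid\geq E_G[s\mid s\leq\thr']$ for $\thr'\geq\thr$ into a strict one, so the highest equilibrium threshold $\hat\thr^n$ of $(G,\phi^n)$ lies strictly below $\thr$. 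Hence $G(\hat\thr^n)\leq G(\thr)$ and
\[
\hat R(G,\phi^n,\hat\thr^n)=\phit^n+\phid^n(1-G(\hat\thr^n))\geq \phit^n+\phid^n(1-G(\thr))\to \hat R(G,\phi,\thr).
\]
No atom at $\thr$ causes trouble because $\hat\thr^n<\thr$, and $G^n=G\in\sosd(F)$ trivially. Your ``hard part'' dissolves once you see that the disclosure-fee shift, not a test perturbation, is what strictifies \eqref{eq-relaxed-highest}.

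\textbf{Part (c): the ``approach direction'' claim is not an argument.} Strictness in \eqref{eq:robustimplementation} does not by itself pin down from which side $\thr^n$ approaches $\thr$, so it does not yield $G^n(\thr^n)\to G(\thr)$. The paper instead combines two facts. First, upper semi-continuity: if $G^n\to G$ weakly and $\thr^n\to\thr$, then $\limsup_n G^n(\thr^n)\leq G(\thr)$ (argue via the L\'evy metric and right-continuity of $G$). Second, the \emph{equality} part of \eqref{eq:robustimplementation} gives $\phid^n G^n(\thr^n)=\int_{\lb}^{\thr^n}G^n\to\int_{\lb}^{\thr}G$. After showing (from the strict inequalities and right-continuity) that $\phid G(\thr)\leq\int_{\lb}^{\thr}G$, these two together force $\phid G(\thr)=\int_{\lb}^{\thr}G$ and $\lim_n G^n(\thr^n)=G(\thr)$, which is exactly the revenue-convergence you need.
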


\section{Robustly Optimal Test-fee Structures}\label{Section-RobustlyOptimalTestFeeStructures}

This section contains our main result. We show that robustly optimal test-fee structures use tests that have a ``step-exponential-step" form; i.e., the distributions over scores are exponential over an interval, have up to two mass points, one above and one below the interval, and have zero density everywhere else. We also  show that the optimal disclosure fees are strictly positive, and we derive a tight bound on the intermediary's maximal revenue guarantee.

\begin{figure}[h]
	\centering
	
	\begin{tikzpicture}[domain=0:3, scale=4, ultra thick,decoration={
		markings,
		mark=at position 0.5 with {\arrow{>}}}]    
	\draw[dotted] (0,1) node[left]{1} -- (1,1) -- (1,0) node[below]{$\ub$};

	\draw[<->] (0,1.2) node[left]{$G(s)$}-- (0,0) -- (1.2,0) node[below]{$s$};
	\draw (0,.05) -- (0,0) node[below]{$\lb$};

	\draw[line width=3,domain=.3:.7,smooth,variable=\x] (0,0) -- (0.1,0) node[below]{$\thr_0$} -- (.1,.2) -- (.3,.2) -- plot ({\x},{0.2*(e^((\x-0.3)/0.4))}) -- (.7,.5436) -- (0.85,.5436) -- (.85,1) -- (1,1);

	\draw[loosely dotted] (.3,.2) -- (.3,0) node[below]{$\thr_1$};
	
	\draw[loosely dotted] (.7,.5436) -- (.7,0) node[below]{$\thr_2$};
	
	\draw[loosely dotted] (.85,.5436) -- (.85,0) node[below]{$\thr_3$};
	
	\draw[dotted] (.3,.2) -- (0,.2) node[left]{$g$};
	
	\end{tikzpicture}
	\caption{\footnotesize A step-exponential-step distribution $G$.}
	\label{figure eq-G(x)-arbitary-prior} 
\end{figure}
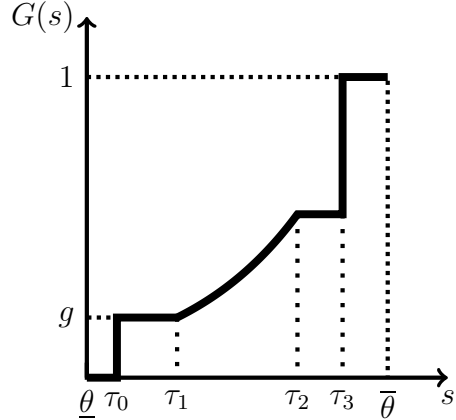
As illustrated in \autoref{figure eq-G(x)-arbitary-prior}, a test-fee structure $(G,\phi)$ is in the \emph{step-exponential-step} class if there exists $g \in [0,1]$ and thresholds $\thr_0<\thr_1<\thr_2\leq\thr_3$ such that 
\begin{equation}\label{eq-G(x)-arbitary-prior}
G(s)=\begin{cases}g &\text{ if }s \in  [\thr_0,\thr_1) \\
ge^{(s - \thr_1)/(\thr_1-\thr_0)} & \text{ if } s \in [\thr_1,\thr_2)\\
1 &\text{ if } s\geq \thr_3,
\end{cases}
\end{equation}
$G$ assigns probability $0$ to $[0,\thr_0)\cup (\thr_2,\thr_3)$, and the fees are 
\begin{align}
&\phid = \thr_1 - \thr_0,\label{eq-G(x)-arbitary-prior-feeD}\\
&\phit = (1-ge^{(\thr_2 - \thr_1)/\phid})(\thr_3 - (\mu + \phid)).\label{eq-G(x)-arbitary-prior-feeT}
\end{align}

\begin{proposition}\label{proposition-arbitrary-prior-exponential}
    For any distribution $F$ of the asset's value, the following hold:
    \begin{enumerate}[noitemsep,label=\normalfont(\alph*)]
        \item \label{Bullet-ExistenceSES}There exists a test-fee structure in the step-exponential-step class that is robustly optimal.
        
        \item \label{Bullet-StrictlyPositive} Every robustly optimal test-fee structure has a strictly positive disclosure fee.
        
        \item \label{Bullet-TestingFee} If $(G,\phi)$ is a robustly optimal test-fee structure, then the testing fee $\phit$ is strictly positive if and only if scores in $(\mu+\phid,\overline\theta]$ arise with positive probability.
        
        \item \label{Bullet-Surplus} The maximal revenue guarantee is at most $(\ub-\mu)(1-e^{{(\lb-\mu)}/{(\ub-\mu)}})$, and this bound is attained when the support of $F$ is binary, i.e., $\{\lb,\ub\}$.
    \end{enumerate}
\end{proposition}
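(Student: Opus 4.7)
My plan is to work inside the relaxed problem from \cref{lemma-relaxvalue}, for which a maximizer $(G,\phi,\thr)$ exists and attains $R_M$. Since the objective is strictly increasing in $\phit$, I would first argue that the weak participation constraint \eqref{eq-phiT-relaxed-opt} must bind at any optimum, so $\phit = \int_{\mu+\phid}^{\ub}[s-(\mu+\phid)]\,dG$. Substituting, the problem reduces to maximizing
\[
\int_{\mu+\phid}^{\ub}(s-\mu-\phid)\,dG \;+\; \phid\bigl(1-G(\thr)\bigr)
\]
over $G\in\sosd(F)$, $\phid\geq 0$, and $\thr$ that, after integration by parts in \eqref{eq-relaxed-highest}, satisfies $\phid\,G(\thr)=\int_{\lb}^{\thr}G(s)\,ds$ together with $\phid\,G(\thr')\geq\int_{\lb}^{\thr'}G(s)\,ds$ for all $\thr'>\thr$.

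For part \ref{Bullet-ExistenceSES}, I would show that any optimal $G$ can be replaced by one in the step-exponential-step class. Fix $\phid>0$ at its optimal value and let $H(s):=\phid\,G(s)-\int_{\lb}^{s}G(s')\,ds'$; feasibility requires $H(\thr)=0$ and $H\geq 0$ on $(\thr,\ub]$. On any maximal interval where $H$ vanishes identically, differentiation yields the ODE $\phid\,G'(s)=G(s)$, whose solutions are exponential with rate $1/\phid$---this identifies the exponential portion $[\thr_1,\thr_2]$, with $\thr_2$ becoming the highest equilibrium threshold. Below this region, a local analysis of $H$ together with the boundary equation $\thr_1-\phid=E_G[s|s\leq\thr_1]$ forces $G$ to be constant on $[\thr_0,\thr_1)$ with an atom at $\thr_0$ of size $g=G(\thr_1^-)$, yielding $\phid=\thr_1-\thr_0$ as in \eqref{eq-G(x)-arbitary-prior-feeD}. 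Above the exponential region, any mass in $(\thr_2,\ub)$ can be transported outward to a single upper atom at some $\thr_3\leq\ub$ via a mean-preserving spread that weakly increases $\int_{\mu+\phid}^{\ub}(s-\mu-\phid)\,dG$ (because the integrand is convex) while preserving $H\geq 0$ on $(\thr,\ub]$; this delivers the upper atom and zero density on $(\thr_2,\thr_3)$.

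Parts \ref{Bullet-StrictlyPositive} and \ref{Bullet-TestingFee} then follow from this structure. For \ref{Bullet-StrictlyPositive}, if $\phid=0$ the binding equation in \eqref{eq-relaxed-highest} forces all mass of $G$ at or below $\thr$ to concentrate at $\thr$, so the objective collapses to at most $E_F[(s-\mu)^+]$; comparing with the value achieved by a small-$\phid$ step-exponential-step structure (as in \cref{Section-Example}) shows this is strictly suboptimal, hence every optimum has $\phid>0$. For \ref{Bullet-TestingFee}, the binding participation constraint gives $\phit=\int_{\mu+\phid}^{\ub}[s-(\mu+\phid)]\,dG$, which is strictly positive if and only if $G$ places positive mass on $(\mu+\phid,\ub]$.

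Finally, for part \ref{Bullet-Surplus}, substituting the step-exponential-step form into the revenue and using \eqref{eq-G(x)-arbitary-prior-feeT} reduces it to $(1-ge^{(\thr_2-\thr_1)/\phid})(\thr_3-\mu)$. I would maximize this subject to $\thr_0\geq\lb$, $\thr_3\leq\ub$, and $G\in\sosd(F)$; taking $\thr_0=\lb$, $\thr_3=\ub$, and pushing the exponential region as far as the MPC constraint permits (which governs how small $ge^{(\thr_2-\thr_1)/\phid}$ can be made) yields the bound $(\ub-\mu)(1-e^{(\lb-\mu)/(\ub-\mu)})$. Equality is attained when $F$ is supported on $\{\lb,\ub\}$, since the MPC constraint is then easiest to saturate with a three-point contraction. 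The step I expect to be the main obstacle is the structural argument in \ref{Bullet-ExistenceSES}: justifying rigorously that arbitrary interior mass on $(\thr_2,\thr_3)$ and any non-exponential shape on $[\thr_1,\thr_2]$ can be eliminated without violating the cross-equilibrium inequality, which should require a careful exchange-of-mass argument in the presence of the MPC constraint.
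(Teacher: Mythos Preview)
Your outline for part~\ref{Bullet-TestingFee} matches the paper exactly, and your reduction via the binding participation constraint is right. But there are two substantive gaps in the remaining parts.

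\medskip
\textbf{Part \ref{Bullet-ExistenceSES}: the upper atom.} You propose to move mass in $(\thr_2,\ub)$ ``outward'' to an atom at $\thr_3$ by a \emph{mean-preserving spread}, arguing this increases $\int_{\mu+\phid}^{\ub}(s-\mu-\phid)\,dG$ because the integrand is convex. This goes in the wrong direction: spreading $G$ can push it outside $\sosd(F)$, since the MPC constraint $\int_{\lb}^{s'}G\le\int_{\lb}^{s'}F$ is precisely a bound on how spread $G$ may be. The paper instead \emph{contracts} (pools) the mass above $\thr_2$ into a single atom at its conditional mean $\thr_3$. This automatically keeps $G$ in $\sosd(F)$; the subtlety is why pooling does not reduce the testing-fee integral. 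The paper handles this by first replacing $G$ with a distribution $G_\alpha$ that is \emph{flat} on $[\thr_2,\mu+\phid]$, so that all mass above $\thr_2$ already lies in $(\mu+\phid,\ub]$. On that region the integrand $s-(\mu+\phid)$ is linear, so pooling to $\thr_3=E_{G_\alpha}[s\mid s>\thr_2]$ preserves the integral exactly. Your spread argument does not secure feasibility, and a single atom cannot be obtained by spreading in any case.

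\medskip
\textbf{Part \ref{Bullet-ExistenceSES}: the exponential region.} The ODE observation---that $H\equiv 0$ on an interval forces $G'=G/\phid$---identifies a necessary shape where the cross-equilibrium constraint binds, but it does not show that an optimal $G$ must have such an interval, nor that a non-exponential shape on $[\thr_1,\thr_2]$ can be improved. The paper's argument is constructive: for any $(G,\phi)$ with $\phid>0$, it builds $G_\alpha(s)=\min\{\alpha G(\thr_1)e^{(s-\thr_1)/\phid},\,G(\mu+\phid)\}$ on $(\thr_1,\mu+\phid)$ (and $\alpha G$ below, $G$ above), chooses $\alpha\le 1$ so that the mean is preserved, and verifies via the growth bound $\int_{\lb}^{\thr_b}G\le e^{(\thr_b-\thr_a)/\phid}\int_{\lb}^{\thr_a}G$ (which follows from \eqref{eq-relaxed-highest}) that $G_\alpha$ is a mean-preserving contraction of $G$ with weakly higher revenue. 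This growth bound is the missing ingredient in your approach: it is what lets you compare an arbitrary $G$ to the exponential shape and conclude the latter is no worse.

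\medskip
\textbf{Part \ref{Bullet-Surplus}.} Your revenue expression $(1-ge^{(\thr_2-\thr_1)/\phid})(\thr_3-\mu)$ is what you get if you evaluate at the threshold $\thr_2$, but in the relaxed problem one optimizes over $\thr$ satisfying \eqref{eq-relaxed-highest}, and $\thr_1$ also satisfies it and gives strictly higher revenue. The paper sidesteps the parametric optimization entirely: it relaxes $G\in\sosd(F)$ to the single constraint $E_G[s]=\mu$, solves that problem completely (showing the unique optimum is the step-exponential distribution with $\thr_0=\lb$, $\phid=\ub-\mu$, $\phit=0$), and reads off the bound. This is both cleaner and avoids the delicate issue of which MPC constraints bind for general $F$.
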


\Cref{proposition-arbitrary-prior-exponential} partially characterizes the robustly optimal test-fee structures, and provides a tight upper bound on the maximal revenue guarantee. There is always a robust solution in the step-exponential-step class, disclosure fees are strictly positive in any robust solution, and testing fees are strictly positive only if the robustly optimal score distribution takes a particular form. We show in the proof of \Cref{proposition-arbitrary-prior-exponential}\ref{Bullet-ExistenceSES} that if the intermediary uses the robustly optimal test in the step-exponential-step class, charges the testing fee in \eqref{eq-G(x)-arbitary-prior-feeT} and a disclosure fee slightly below those in \eqref{eq-G(x)-arbitary-prior-feeD}, then that test-fee structure has a unique equilibrium. In that equilibrium, the asset is tested with probability $1$ and the agent discloses his score whenever it exceeds $\thr_1$. Our previous result, \cref{prop-whyrobust}, already established that the maximal revenue guarantee is bounded away from full surplus, but the bound provided was not tight; \ref{Bullet-Surplus} specifies a tight bound.

\Cref{proposition-arbitrary-prior-exponential} clarifies the interaction between the testing fee, the disclosure fee, and the test. \Cref{proposition-zero-disclosure-fee-then-full-revelation} in the Appendix shows that if the intermediary could not charge a disclosure fee, she would choose a fully revealing test and a testing fee that the agent pays with probability $1$. In the robustly optimal solution, the intermediary uses noisy tests because she can charge a disclosure fee. If she could not charge a testing fee, there would be a solution in the step-exponential-step class that would not have a step at the top (above the exponential part of the curve). Her ability to charge a testing fee in addition to a disclosure fee leads to a solution in the step-exponential-step class that may have a step at the top. 
\begin{figure}[t]
	\centering
	\setlength{\tabcolsep}{-2pt}
	\begin{tabular}{cc}
	
	\begin{tikzpicture}[domain=0:3, scale=5, ultra thick,decoration={
		markings,
		mark=at position 0.5 with {\arrow{>}}}]    
	
	\draw [blue] plot [smooth] coordinates {(1,0) (0.65,.4) (.35,0.3) (0,.7)};

    \draw[red] (1,0) -- (.95,.25) -- (.15,.25) -- (.15,.49) -- (0,.69);
    
    \draw[dotted] (.95,.25) -- (0,.25) node[left]{$\phid$};
    
    \draw[dotted] (.8,.25) -- (.8,0) node[below]{$q$};

	\draw[->] (.6,.38) -- (.6,.27);
	
	\draw[->] (.88,.2) -- (.94,.2);

	\draw[<->] (0,0.9) node[left]{Fee} -- (0,0) node[below,xshift=-3]{0} -- (1.2,0) node[below,xshift=2]{Prob. of};
	
    \draw (1.03,-.14) node[right]{Disclosure};

	\draw[loosely dotted] (1,0)node[below]{1} -- (1,.7);
	\draw[loosely dotted] (0,.7) node[left]{$\ub - \mu$} -- (1,0.7);

	\node at (0.5,-0.2) {(a)};

	\end{tikzpicture}
	&
	\begin{tikzpicture}[domain=0:3, scale=5, ultra thick,decoration={
		markings,
		mark=at position 0.5 with {\arrow{>}}}]

	\draw[<->] (0,0.9) node[left]{CDF} -- (0,0) node[below,yshift=1]{$\lb$} -- (1.2,0) node[below,xshift=-5]{Score};
	
	\draw[blue] (0,0) -- (1,.7);
	
	\draw[domain=5/12:.75,smooth,variable=\x, red] (0,0) -- plot ({\x-0.2},{(0.4*e^(2*(\x*1.7-1)))-.18}) -- (.76,.52) -- (1,.69);
	
	\draw[dotted] (.76,.52) -- (.76,0) node[below]{$\mu + \phid$};
	
	\draw[loosely dotted] (1,0)node[below,yshift=1]{$\ub$} -- (1,.7);
	\draw[loosely dotted] (0,.7) node[left]{1} -- (1,0.7);
	
	\draw[dotted] (.21,.14) -- (.21,0) node[below,yshift=-1]{$\thr$};

	\node at (0.5,-0.2) {(b)};

	\end{tikzpicture}
	\end{tabular}
	\caption{\footnotesize How an intermediary gains from ``flattening'' the robust demand curve. (a) depicts a demand curve in blue, and by flattening the demand curve the intermediary can induce a higher probability of disclosure at disclosure fee $\phid$. (b) shows that this can be done without changing the score distribution above $\mu+\phid$, which guarantees that \eqref{eq:2} continues to be satisfied without changing the testing fee.}
	\label{figure-arbitrary-priors-improvement}
\end{figure}
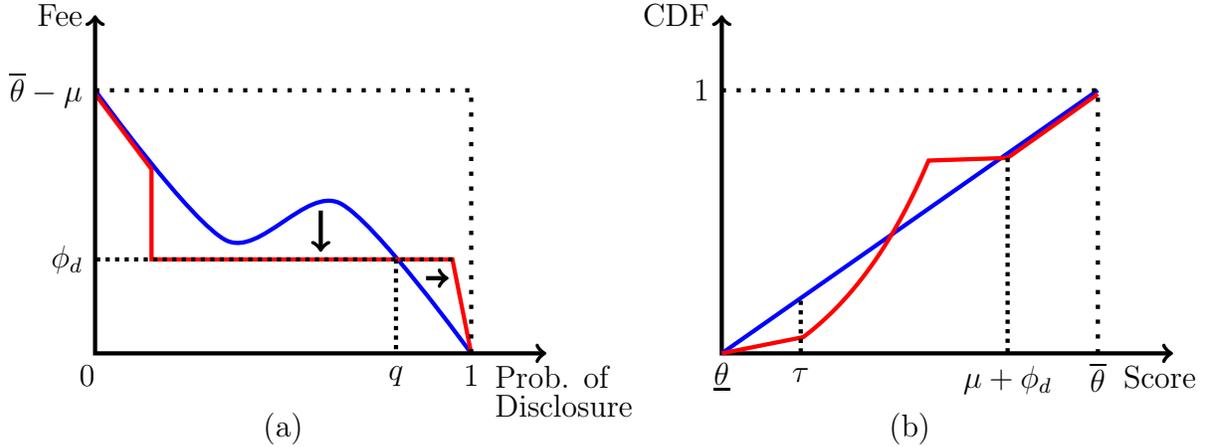

The intuition for \cref{proposition-arbitrary-prior-exponential}\ref{Bullet-ExistenceSES} builds on the robust demand curve approach described in \autoref{Section-Example}.  Consider a test-fee structure with disclosure fee $\phid$ in which the probability of disclosure in an adversarial equilibrium is $q$.  Suppose, as is shown in \autoref{figure-arbitrary-priors-improvement}(a), that the demand correspondence is not flat to the left of the point $(q,\phid)$.  Then we can modify the test-fee structure and improve the intermediary's robust revenue as follows: flatten the demand correspondence to the left of $(q,\phid)$ and push it to the right of $(q,\phid)$. Doing this increases the disclosure probability in the adversarial equilibrium for any disclosure fee slightly lower than $\phid$. Moreover, as seen in \autoref{figure-arbitrary-priors-improvement}(b), this modification can be done without changing the score distribution on scores above $\mu + \phid$, so the option value of a test, given by the RHS of \eqref{eq:2}, remains unchanged. 
Therefore, this modification does not generate a new adversarial equilibrium in which the agent doesn't pay the testing fee. 
Thus, the intermediary can increase the probability of disclosure by decreasing the disclosure fee arbitrarily slightly and without changing the testing fee, which improves her revenue guarantee.\footnote{A reader may wonder how this is possible while the total surplus is constant and \eqref{eq-phiT-relaxed-opt} (defined on p. \pageref{eq-phiT-relaxed-opt}) binds. This is because \eqref{eq-phiT-relaxed-opt} is a non-standard participation constraint. The modification to the test decreases the agent's payoff in the adversarial equilibrium without affecting his indifference between paying the testing fee and not paying it if the market expects him not to pay it, which is what \eqref{eq-phiT-relaxed-opt} represents.}

A similar logic shows that every robustly optimal test-fee structure has a score distribution that is exponential over a non-degenerate interval (\cref{lemma-exponential-middle} in the Appendix). Outside of that interval, there is some flexibility because scores below $\thr_1$ are those that the agent strictly prefers not to disclose, and scores above $\mu+\phid$, which exceeds $\thr_2$, are those for which \eqref{eq:robustimplementation} is slack. In the step-exponential-step distribution scores below $\thr_1$ are pooled as a single score and scores above $\mu+\phid$ are pooled as a single score. This creates a score distribution that is a mean-preserving contraction of the distribution $F$ of the asset value. But there may be other robustly optimal distributions that are the same on the interval $[\thr_1,\thr_2]$ and are also mean-preserving contractions of $F$. 

Turning to disclosure fees, we prove \Cref{proposition-arbitrary-prior-exponential}\ref{Bullet-StrictlyPositive} by showing that for the intermediary to charge a testing fee, the test must induce a high option value by having high scores in its support. But then the gain of adding a small disclosure fee outweighs the resulting reduction in the option value and the associated reduction in the testing fee.

\Cref{proposition-arbitrary-prior-exponential}\ref{Bullet-TestingFee} follows from the participation constraint \eqref{eq:2}. If scores in $(\mu+\phid,\overline\theta]$ arise with positive probability., then even if the market expects the asset to not be tested, the agent strictly prefers to have the asset tested at a testing fee of $0$ and a disclosure fee of $\phid$. Increasing the testing fee slightly while keeping the disclosure fee unchanged increases the intermediary's revenue in the adversarial equilibrium and does not generate a new equilibrium in which the asset is not tested. We use this logic in \Cref{proposition-log-concave-testing-fee-positive} below to provide conditions on primitives that guarantee a strictly positive testing fee. 

We prove \Cref{proposition-arbitrary-prior-exponential}\ref{Bullet-Surplus} by considering a relaxed problem in which the score distribution need not be a mean-preserving contraction of distribution $F$, while still having the same expectation, i.e., $E_G[s]=E_F[\theta]$. We solve this relaxed problem completely, and its value then provides an upper bound on the maximal revenue guarantee. This bound is tight when the support of $F$ is binary because the mean-preserving contraction condition is then equivalent to $E_G[s]=E_F[\theta]$. The following result describes the complete solution for this case. 
\begin{proposition}\label{prop-binaryrobustlyoptimal}
Suppose that the support of distribution $F$ is $\{\lb,\ub\}$. The unique robustly optimal test-fee structure involves: 
\begin{enumerate}[nolistsep,label=\normalfont(\alph*)]
    \item No testing fee but a strictly positive disclosure fee: $\phit^*=0$ and $\phid^*=\ub-\mu$.
    \item The following marginal score distribution that has an atom at $\lb$, a gap above it, and then an exponential form (with no atom at the top): 
   	\begin{equation}\label{eq-G(x)-binary}
G^*(s)=\begin{cases} e^{\frac{\lb-\mu}{\ub-\mu}}&\text{ if }s\in[\lb,\lb+\ub-\mu)\\
e^{\frac{s-\ub}{\ub-\mu}}&\text{ if }s\in[\lb+\ub-\mu,\ub].
\end{cases}
\end{equation}    
\end{enumerate}
\end{proposition}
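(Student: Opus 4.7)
My plan has two parts: first, verify that $(G^*,\phi^*)$ achieves the upper bound $(\ub-\mu)(1-e^{(\lb-\mu)/(\ub-\mu)})$ from \cref{proposition-arbitrary-prior-exponential}\ref{Bullet-Surplus}, which yields robust optimality; then establish uniqueness by exploiting that for binary-support $F$ the mean-preserving contraction constraint reduces to the single mean constraint $E_G[s]=\mu$.

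For optimality, I first check feasibility. Since the support of $F$ is $\{\lb,\ub\}$, any CDF on $[\lb,\ub]$ with mean $\mu$ is a mean-preserving contraction of $F$ and hence is induced by some test (\cref{Proposition-MPC}). I verify that $G^*$ is continuous at $\lb+\ub-\mu$, equals $1$ at $\ub$, and has mean $\mu$ (the last via integration by parts on the exponential piece: the atom at $\lb$ contributes $\lb\,e^{(\lb-\mu)/(\ub-\mu)}$ and the exponential piece contributes $\mu-\lb\,e^{(\lb-\mu)/(\ub-\mu)}$). I then set $\thr^*=\lb+\ub-\mu$ and verify the weak constraints \eqref{eq-phiT-relaxed-opt} and \eqref{eq-relaxed-highest}. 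Since $\mu+\phid^*=\ub$, the option-value integral in \eqref{eq-phiT-relaxed-opt} vanishes, matching $\phit^*=0$. At $\thr^*$, only the atom at $\lb$ contributes, giving $E_{G^*}[s\mid s\leq\thr^*]=\lb=\thr^*-\phid^*$; and for any $\thr'\in(\thr^*,\ub]$, a direct calculation on the exponential piece yields $E_{G^*}[s\mid s\leq\thr']=\thr'-\phid^*$, so \eqref{eq-relaxed-highest} holds (indeed with equality throughout). The resulting revenue $\phid^*(1-G^*(\thr^*))=(\ub-\mu)(1-e^{(\lb-\mu)/(\ub-\mu)})$ matches the upper bound, so $(G^*,\phi^*)$ is robustly optimal.

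For uniqueness, I would retrace the derivation of the upper bound in \cref{proposition-arbitrary-prior-exponential}\ref{Bullet-Surplus}. For binary $F$ the MPC constraint collapses to $E_G[s]=\mu$, so the original and relaxed problems coincide. Any robustly optimal structure has $\phid>0$ by \cref{proposition-arbitrary-prior-exponential}\ref{Bullet-StrictlyPositive}, and attaining the bound forces the weak highest-equilibrium condition \eqref{eq-relaxed-highest} to bind on an interval of thresholds; as in the differential-equation argument of \cref{Section-Example}, this pins $G$ to be exponential with rate $1/\phid$ on that interval. Combining \cref{proposition-arbitrary-prior-exponential}\ref{Bullet-TestingFee}, the revenue formula, and the mean constraint then forces $\phit=0$, $\phid=\ub-\mu$, the atom size $e^{(\lb-\mu)/(\ub-\mu)}$, and the exponential shape on $[\lb+\ub-\mu,\ub]$, recovering $G^*$.

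\textbf{Main obstacle.} The main obstacle is uniqueness. Feasibility and optimality follow from a direct check against \cref{proposition-arbitrary-prior-exponential}\ref{Bullet-Surplus}, but ruling out alternative optimizers---especially those with a strictly positive testing fee and scores above $\mu+\phid$, or with a different exponential rate---requires extracting tight equality conditions from the upper-bound argument and combining them with the mean constraint to pin down every free parameter. The binary support is what makes this rigidity possible: for richer supports the MPC constraints would generally be slack enough to admit multiple optimizers, which is consistent with \cref{proposition-arbitrary-prior-exponential} only providing partial characterization in general.
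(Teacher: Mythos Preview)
Your proposal is correct and follows essentially the same route as the paper. The paper's proof is a one-liner invoking \cref{lem:the-relaxed-problem}, which solves the relaxed problem \eqref{eq-mpc-relaxed} (MPC replaced by the mean constraint $E_G[s]=\mu$) and establishes both the value and the uniqueness of its solution; since binary support makes the MPC and mean constraints equivalent, the relaxed and original problems coincide. Your feasibility check mirrors the first half of the proof of \cref{lem:the-relaxed-problem}, and your uniqueness strategy---tracing equality cases through the upper-bound derivation---is exactly what the second half of that lemma does, via \cref{lemma-flat-above-mu-plus-c} and \cref{lemma-boundspeed} rather than via \cref{proposition-arbitrary-prior-exponential}\ref{Bullet-TestingFee} as you suggest, but the substance is the same.
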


We know from \Cref{proposition-arbitrary-prior-exponential} that the robustly optimal disclosure fee is always strictly positive; \Cref{prop-binaryrobustlyoptimal} shows that the robustly optimal testing fee may be $0$. We now provide a sufficient condition on the distribution $F$ for the optimal testing fee to be strictly positive. 

\begin{proposition}\label{proposition-log-concave-testing-fee-positive}
	If the distribution $F$ of the asset's value is log-concave, then any robustly optimal test-fee structure includes a strictly positive testing fee.
\end{proposition}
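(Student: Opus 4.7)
The plan is to argue by contradiction. Suppose a robustly optimal test-fee structure $(G^*,\phi^*)$ has $\phit^* = 0$. By \Cref{proposition-arbitrary-prior-exponential}\ref{Bullet-TestingFee}, the support of $G^*$ lies in $[\lb,\mu+\phid^*]$. I will construct a perturbation of $(G^*,\phi^*)$ that strictly improves the revenue guarantee, contradicting optimality. The main idea is to add mass at $\ub$ to create option value above $\mu+\phid^*$ and thereby permit a strictly positive testing fee through the participation constraint \eqref{eq-phiT-relaxed-opt}.

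The key case is $\mu+\phid^* < \ub$. Here I take a small mass $\epsilon > 0$ from the top of the support of $G^*$ (say, the atom at $\tau_3^* = \mu+\phid^*$) and redistribute it via a mean-preserving spread, placing $\alpha\epsilon$ at $\ub$ and $(1-\alpha)\epsilon$ at $\tau_0^*$, where $\alpha=(\mu+\phid^*-\tau_0^*)/(\ub-\tau_0^*)$ is chosen to preserve the mean. The resulting $G^\epsilon$ remains in $\sosd(F)$ for small $\epsilon$ because the MPC inequality has strict slack $\int_\lb^{\mu+\phid^*}(F-G^*)(s)\,ds > 0$: this holds because $G^* = 1$ on $[\mu+\phid^*, \ub]$ while log-concavity of $F$ (which implies continuity and positive density on $[\lb,\ub]$) gives $F(s) < 1$ there. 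Keeping $\phid$ fixed and raising $\phit$ to $\alpha\epsilon(\ub-\mu-\phid^*) > 0$ is then feasible under \eqref{eq-phiT-relaxed-opt}, and a direct computation shows the net revenue change equals $\epsilon(\mu-\lb)(\ub-\mu-\phid^*)/(\ub-\lb) > 0$, contradicting the optimality of $(G^*,\phi^*)$.

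The boundary case $\phid^* = \ub - \mu$ must be handled separately, and it is here that log-concavity enters decisively. By \Cref{proposition-arbitrary-prior-exponential}\ref{Bullet-ExistenceSES}, $G^*$ may be taken in step-exponential-step form; mean preservation at $\phid = \ub-\mu$ forces $\tau_2^* = \ub$ (no top atom) and pins down $g^* = e^{(\tau_0^*-\mu)/(\ub-\mu)}$. The MPC constraint at $s' = \tau_1^* = \tau_0^* + (\ub-\mu)$ then requires $(\ub-\mu)\,e^{(\tau_0^*-\mu)/(\ub-\mu)} \leq \int_\lb^{\tau_1^*} F(s)\,ds$. The plan is to exploit log-concavity of $F$ to show this inequality forces $\tau_0^* > \lb$ and hence $g^* > e^{(\lb-\mu)/(\ub-\mu)}$, yielding revenue $(\ub-\mu)(1-g^*)$ strictly below the binary bound in \Cref{proposition-arbitrary-prior-exponential}\ref{Bullet-Surplus}. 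A matching-order argument then shows this revenue is strictly dominated by what can be achieved with $\phid < \ub-\mu$ and $\phit > 0$ via the main-case perturbation, completing the contradiction. The chief obstacle is this last step: unlike the binary prior of \Cref{prop-binaryrobustlyoptimal}, for which the binding MPC inequality holds with equality at $\tau_0 = \lb$, log-concave $F$ makes $\int_\lb^{\lb+t} F$ grow strictly more slowly than the corresponding step-function binary integral, and this gap must be translated into a strict revenue gap via careful algebra on the step-exponential-step parameters.
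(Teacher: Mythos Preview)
Your high-level strategy—assume $\phit^*=0$, then perturb by shifting mass to $\ub$ to create option value—matches the paper's. But the execution has genuine gaps.

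First, in your main case you take mass ``from the atom at $\tau_3^*=\mu+\phid^*$.'' No such atom exists. By \Cref{lemma-exponential-middle}, any robustly optimal $G^*$ is exponential on $[\thr_1,\thr_2]$ with $G^*(\thr_2)=G^*(\mu+\phid^*)$; when $\phit^*=0$ this value equals $1$, so the exponential piece reaches $1$ continuously at the top (this is the content of the paper's preparatory \Cref{lemma-log-concave-testing-fee-positive-lem1}). You would have to take mass from the density, which changes the threshold analysis.

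Second, and more seriously, your mean-preserving spread places $(1-\alpha)\epsilon$ at $\tau_0^*$, which raises $\int_{\lb}^{s'}G^\epsilon$ above $\int_{\lb}^{s'}G^*$ for \emph{every} $s'>\tau_0^*$. You verify MPC slack only at the single point $\mu+\phid^*$; nothing rules out the MPC constraint binding somewhere on $(\tau_0^*,\thr_1)$, in which case $G^\epsilon\notin\Gamma(F)$ and the perturbation is infeasible. This is exactly where log-concavity does real work. The paper's \Cref{lemma-log-concave-testing-fee-positive-lem1} uses log-concavity via single-crossing of $\log F$ (concave) with $\log G^*$ (linear on $[\thr_1,\thr_2]$) to prove strict MPC slack on all of $[\thr_1,\ub)$. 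Your main case invokes log-concavity only for ``continuity and positive density,'' which is too weak and would, if sufficient, prove more than the proposition claims.

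Third, your boundary case $\phid^*=\ub-\mu$ is a sketch: you name the ``chief obstacle'' but leave the matching-order argument and the strict domination unproven. The paper avoids a case split entirely. It works within the step-exponential-step family with $\thr_3=\ub$ fixed, treats the zero-testing-fee configuration as the special case $\thr_2=\mu+(\thr_1-\thr_0)$, and shows that decreasing $\thr_1$ strictly increases revenue by computing
\[
\frac{d}{d\thr_1}\bigl(\phit+\phid(1-G(\thr_1))\bigr)=1+\frac{\thr_0-\mu}{\thr_1-\thr_0}-e^{(\thr_0-\mu)/(\thr_1-\thr_0)}<0,
\]
using $e^x>1+x$. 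The MPC slack from \Cref{lemma-log-concave-testing-fee-positive-lem1} guarantees feasibility of this perturbation. This single derivative handles both your ``main'' and ``boundary'' cases at once.
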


Several commonly studied distributions such as the uniform distribution over $[\lb,\ub]$ and the truncated Normal and Pareto distributions are log-concave \citep{bagnoli2005log}.  Log-concavity requires the tail of the distribution to be less heavy than the tail of the exponential distribution.

\section{Extensions}\label{Section-Extension}
This section shows that our results also hold when testing is costly for the intermediary and when the intermediary can offer the agent multiple pieces of evidence that he can disclose.

\subsection{Costly Tests}\label{section-extension-testingtechnology}

Our analysis assumed that testing is costless, but in reality testing is often costly. Suppose that the cost to the intermediary of running a test with marginal score distribution $G$ is $c(G)$. We assume that $c$ is lower semi-continuous in the weak* topology and monotone in the Blackwell order, that is, garbling a test weakly reduces its cost; in other words,  whenever $G'$ is a mean-preserving contraction of $G$, $c(G')\leq c(G)$.\footnote{Lower semi-continuity guarantees that solutions to the relevant maximization problems exist.} This condition is standard when information acquisition is costly, and corresponds to a less informative test being less costly to generate.\footnote{Recent examples of analyses that assume monotonicity in the Blackwell order are \cite*{de2017rationally} and \cite*{pomatto2019cost}.} A special case is when the intermediary starts with a finite set of initial tests and can garble those tests to obtain additional tests.  

The following is a corollary of our existing results.
\begin{proposition}\label{proposition-arbitrary-prior-exponential-adding-noise}
	If costs are lower semi-continuous and monotone in the Blackwell order, there exists a robustly optimal test-fee structure in the step-exponential-step class, and every robustly optimal test-fee structure uses a strictly positive disclosure fee.
\end{proposition}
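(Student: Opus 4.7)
The plan is to reduce this extension to \cref{proposition-arbitrary-prior-exponential} by exploiting the Blackwell monotonicity of $c$. Writing the intermediary's net revenue as $R(G,\phi) - c(G)$ and noting that the preliminary lemmas in \cref{Section-SimplifyingProblem} concern only the equilibrium structure of the induced game and are thus unaffected by cost considerations, the problem becomes $\sup_{(G,\phi)} R(G,\phi) - c(G)$ over $\sosd(F) \times \Re_+^2$ subject to \eqref{eq-phiT-relaxed-opt} and \eqref{eq-relaxed-highest}. For existence, fees exceeding $\ub - \lb$ are never optimal, so we may restrict to a compact fee set; combined with weak* compactness of $\sosd(F)$, continuity of $R$, and lower semi-continuity of $c$, the objective $R - c$ is upper semi-continuous on a compact set and therefore attains its supremum.

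The central step is to show that any optimum $(G^*, \phi^*)$ for the costly problem admits a step-exponential-step modification $(G', \phi')$ that (i) is weakly better in the costless problem ($R(G',\phi') \geq R(G^*, \phi^*)$) and (ii) satisfies $G' \in \sosd(G^*)$, i.e., $G'$ is a mean-preserving contraction of $G^*$. Granting these, Blackwell monotonicity gives $c(G') \leq c(G^*)$, so $(G', \phi')$ achieves at least the same net revenue as $(G^*, \phi^*)$ and lies in the step-exponential-step class. To construct $G'$, I would apply the garbling operations underlying the proof of \cref{proposition-arbitrary-prior-exponential}\ref{Bullet-ExistenceSES}: pool the mass of $G^*$ below the highest equilibrium threshold $\thr_1$ into an atom at its conditional mean, pool the mass above $\mu + \phi^*_d$ into an atom at its conditional mean, and replace the middle portion $[\thr_1, \mu + \phi^*_d]$ with an exponential of slope $1/\phi^*_d$ carrying the same total mass and the same mean. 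The first two operations are conditional-mean replacements and therefore canonical Blackwell garblings; the middle replacement preserves both mass and mean on the relevant interval, matching $G^*$ at its endpoints.

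Strict positivity of the disclosure fee then follows exactly as in \cref{proposition-arbitrary-prior-exponential}\ref{Bullet-StrictlyPositive}: at any putative optimum with $\phi_d = 0$, perturbing to a small $\phi_d > 0$ strictly increases gross revenue while leaving the test $G$ (and hence $c(G)$) unchanged, so net revenue strictly increases, contradicting optimality. The main obstacle is the middle-interval step of the construction: verifying that replacing the distribution of $G^*$ on $[\thr_1, \mu + \phi^*_d]$ by an exponential segment of matching mass and mean yields an MPC of $G^*$ on this interval. The subtlety is that the exponential slope is pinned down by $\phi^*_d$, so only two degrees of freedom ($\thr_1, \thr_2$, after setting the scale $g$) remain to match the two moments; I would verify by a direct integration of CDFs that the resulting exponential lies below $G^*$ in the second-order stochastic dominance sense on this interval. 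Once this is checked, combined with the straightforward garblings on the tails, $G' \in \sosd(G^*)$ follows and the result reduces to a clean application of Blackwell monotonicity.
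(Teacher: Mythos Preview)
Your high-level strategy is exactly the paper's: the result is a corollary of \cref{lemma-exponential-middle}, which shows that for any $(G,\phi)$ there is a step-exponential-step test-fee structure $(G',\phi')$ with $G'$ a mean-preserving contraction of $G$ and weakly higher gross revenue guarantee; Blackwell monotonicity then gives $c(G')\leq c(G)$, and the strictly-positive-disclosure-fee argument from \cref{proposition-arbitrary-prior-exponential}\ref{Bullet-StrictlyPositive} carries over unchanged because that perturbation alters only the fees, not the test.

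The gap is in your explicit construction of $G'$. You propose to (i) pool $G^*$ below $\thr_1$, (ii) pool above $\mu+\phi_d^*$, and (iii) replace the middle by an exponential of slope $1/\phi_d^*$ with matching mass and mean. But step (i) forces the exponential to start at level $g=G^*(\thr_1)$ (else the atom at $\thr_0$ does not have the pooled mass and the step-exponential-step form fails), and step (ii) forces its terminal level to be $G^*(\mu+\phi_d^*)$; together these pin down $\thr_2$, leaving no degree of freedom to match the mean of the middle piece. In general the resulting $G'$ will not have mean $\mu$ and hence will not lie in $\sosd(G^*)$. Your count of ``two degrees of freedom $(\thr_1,\thr_2)$'' is off because $\thr_1$ is already fixed as the highest equilibrium threshold of $(G^*,\phi^*)$ by your own pooling step. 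The paper's \cref{lemma-exponential-middle} resolves this by first \emph{scaling} $G^*$ below $\thr_1$ by a factor $\alpha\le 1$, starting the exponential at level $\alpha G^*(\thr_1)$, and choosing $\alpha$ so that the total mean is preserved; showing the scaled-then-exponential distribution is an MPC of $G^*$ then rests on the integral growth bound of \cref{lemma-boundspeed}, not merely on the fact that pooling is a garbling. Only after this scaling step are the tails pooled to land in the step-exponential-step class.
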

\autoref{proposition-arbitrary-prior-exponential-adding-noise} is a corollary of \cref{lemma-exponential-middle} in the Appendix, which is one of the steps in the proof of \Cref{proposition-arbitrary-prior-exponential}. \cref{lemma-exponential-middle} shows that for any test-fee structure $(G,\phi)$, there exists a test-fee structure in the step-exponential-step class that uses a mean-preserving contraction of $G$ and has a weakly higher revenue guarantee. Because such a test-fee structure is a garbling of $G$, it has a weakly lower cost. Therefore, there must exist a robustly optimal test-fee structure in this class.\footnote{If costs are strictly monotone in the Blackwell order, then every robustly optimal test-fee structure is in the step-exponential-step class. } 
The argument used to prove \autoref{proposition-arbitrary-prior-exponential} proves that the optimal disclosure fee is strictly positive.

\subsection{Multiple Pieces of Evidence}\label{Section-Evidence}

Our analysis assumed that the intermediary provided the agent with a single piece of evidence that he could verifiably disclose. One could envision the intermediary providing the agent with multiple pieces of evidence and a choice of which to disclose. We show that this additional generality does not change the robustly optimal test-fee structures.

To see this, suppose that the intermediary designs, along with the test-fee structure, an arbitrary evidence structure, which specifies a message space $\mathcal{M}$ and the set of messages available for each score, described by $M: S\rightrightarrows \mathcal{M}$. We call this an \emph{evidence-test-fee structure}. Our baseline model corresponds to the special case in which $M(s)=\{s\}$ for every score $s$. The agent first decides whether to have the asset tested and pay the testing fee $\phit$. If he pays the testing fee, then he observes the test score $s$ and then chooses whether to disclose each message $m$ in $M(s)$ to the buyers. To disclose any of these messages, the agent pays the disclosure fee $\phid$ to the intermediary. If the agent does not have the asset tested or does not disclose any message, then the buyers observe the null message $N$. The following result shows that using evidence-test-fees structures does not improve the maximal revenue guarantee.

\begin{proposition}\label{prop-evidence}
For every adversarial equilibrium in the game induced by an evidence-test-fee structure, there exists a test-fee structure that has an adversarial equilibrium with the same revenue for the intermediary. 
\end{proposition}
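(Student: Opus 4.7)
The plan is, given an adversarial equilibrium $\sigma^{*}$ of the evidence-test-fee structure $(T,M,\phi)$ with revenue $r$, to construct a test-fee structure $(G',\phi)$ whose marginal score distribution coincides with the distribution of posterior market prices induced by $\sigma^{*}$, and to argue that its adversarial equilibrium also yields revenue $r$.

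\textbf{Construction and a matching equilibrium.} If $r=0$, any test-fee structure violating \eqref{eq:2}---e.g., an uninformative test paired with a positive testing fee---has an adversarial equilibrium with zero revenue by \cref{lemma-worker-takes-test-wp1-in-all}. Otherwise, the argument behind that lemma extends to evidence-test-fee structures (the option-value computation is identical once one maximizes over messages in $M(s)\cup\{N\}$) and forces the agent to pay the testing fee with probability one in $\sigma^{*}$. Writing $\mu^{*}$ for the agent's messaging rule, let $p(m)=E_{\sigma^{*}}[\theta\mid m]$ for each on-path message $m$ and let $p^{N}=E_{\sigma^{*}}[\theta\mid N]$. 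Define $T'(\theta)$ to be the distribution of $p(\mu^{*}(T(\theta)))$; this test is unbiased by Bayes' rule, so its marginal $G'$ lies in $\sosd(F)$ by \cref{Proposition-MPC}. Consider the candidate profile $\sigma'$ in $(G',\phi)$ in which the agent pays the testing fee and discloses $s'$ iff $s'>p^{N}$. Since $\sigma^{*}$'s optimality yields $p(m)\geq p^{N}+\phid$ for every disclosed $m$, the partition of $\mathrm{supp}(G')$ into disclosed and non-disclosed scores under $\sigma'$ mirrors the disclosure structure of $\sigma^{*}$. Market posteriors then equal $s'$ after disclosure and $p^{N}$ after non-disclosure, and the agent's best responses follow immediately, so $\sigma'$ is a PBE of $(G',\phi)$ and its disclosure probability, hence revenue, equals $r$ by construction.

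\textbf{Main obstacle: showing the revenue guarantee is $r$.} It remains to show that no equilibrium of $(G',\phi)$ yields less than $r$. My plan is to lift any equilibrium $\tilde\sigma$ of $(G',\phi)$ with disclosure set $D\subseteq\mathrm{supp}(G')$ to a PBE of $(T,M,\phi)$: the lifted agent pays the testing fee and sends $\mu^{*}(s)$ whenever $p(\mu^{*}(s))\in D$, otherwise sends $N$, with off-path beliefs for unused messages $m\in M(s)$ fixed at $p(m)$. A short Bayesian computation shows that prices of used messages remain $p(m)$ and that the lifted non-disclosure price equals $E_{G'}[s'\mid s'\notin D]$, which coincides with the non-disclosure price in $\tilde\sigma$; hence the two have equal revenue. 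The hardest step will be verifying agent best responses in the lift---particularly that agents routed to $N$ do not want to deviate to an available $m\in M(s)$---by combining $\sigma^{*}$'s incentive conditions at $\sigma^{*}$-prices with how the lifted non-disclosure price compares to $p^{N}$. Once the lift is established, adversariality of $\sigma^{*}$ in $E$ forces the lifted equilibrium, and hence $\tilde\sigma$, to have revenue at least $r$; combined with $\sigma'$ attaining $r$, this makes $\sigma'$ an adversarial equilibrium of $(G',\phi)$ with the same revenue.
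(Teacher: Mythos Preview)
Your approach is essentially the paper's: both take $G'$ to be the distribution of realized prices under $\sigma^{*}$ and then argue that any candidate lower-revenue equilibrium of $(G',\phi)$ can be ``lifted'' to an equilibrium of the evidence structure, contradicting adversariality of $\sigma^{*}$. The paper frames the lift as a proof that $\tau=p^{N}+\phid$ satisfies the highest-threshold condition \eqref{eq:robustimplementation} (its display (eq2) is exactly your lift restricted to threshold equilibria, which by \cref{lemma-highest-threshold} is all that matters), but the construction is the same.

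One gap to close: your lift is written only for equilibria of $(G',\phi)$ in which the agent pays the testing fee, so you still need to rule out equilibria with $\sigma_{\testfee}<1$. It is not enough to invoke the extension of \cref{lemma-worker-takes-test-wp1-in-all} to the \emph{evidence} structure; that gives $\sigma_{T}^{*}=1$, but what you need is that $(G',\phi)$ itself satisfies \eqref{eq:2}. The paper's fix is the right one: if $\phit\geq\int_{\mu+\phid}^{\ub}[x-(\mu+\phid)]\,dG'$, build a non-testing equilibrium of the \emph{evidence} game by setting $p'(N)=\mu$, $p'(m)=p(m)$ for $m\neq N$, and having the (off-path) tested agent follow $\mu^{*}(s)$ when $p(\mu^{*}(s))>\mu+\phid$ and conceal otherwise. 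The deviation payoff from testing is then exactly $-\phit+\int_{\mu+\phid}^{\ub}[x-(\mu+\phid)]\,dG'\leq 0$, so this profile is an equilibrium with revenue $0$, contradicting $r>0$. Hence $(G',\phi)$ satisfies \eqref{eq:2} and your disclosure-stage lift covers all remaining equilibria. (Two small cosmetic points: your disclosure rule should read ``$s'>p^{N}+\phid$'' rather than ``$s'>p^{N}$''---they coincide on $\mathrm{supp}(G')$ but the latter is not an equilibrium threshold---and $p(m)$ for messages off-path in $\sigma^{*}$ should be taken from $\sigma^{*}$'s off-path beliefs, which you implicitly need for the incentive comparison in the lift.)
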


If the intermediary chose the equilibrium in addition to the test-fee structure, \autoref{prop-evidence} would follow from the logic of the revelation principle. But adversarial equilibrium selection introduces a new consideration: because in the game induced by an evidence-test-fee structure the agent has more actions than in the game induced by a test-fee structure, it also has more deviations, and these deviations may exclude strategy profiles that would otherwise be equilibria. Thus, it is conceivable that under adversarial equilibrium selection some evidence-test-fee structure generates more revenue for the intermediary than any test-fee structure. We show this is not the case by finding a way to prune the evidence-test-fee structure to obtain a test-fee structure in a way that does not introduce any equilibria with lower revenue to the resulting game.

\section{Conclusion}\label{Section-Conclusion}

When assets are traded, it is common for sellers to disclose to buyers information obtained from a third party about the value of the assets. One rationale for the existence of such information intermediaries is that their presence generates economic value by, for example, mitigating moral hazard or facilitating assortative matching. A less obvious rationale for the presence of such intermediaries, even if they provide no economic value, is that once sellers of assets have the option to obtain hard information from an intermediary, potential buyers may have an unfavorable view of assets whose sellers do not disclose favorable information.

Our paper investigates the scope of this second rationale. In a setting in which information has no social value, we study how an intermediary designs and prices evidence for a disclosure game between an asset owner and the asset market. We show that even if the equilibria of the disclosure game are chosen to minimize the intermediary's revenue, she can still guarantee herself a high revenue across equilibria. We study how she accomplishes this.

First, she uses option value as a carrot. Because the agent prefers not to have the asset tested and the market to anticipate this, the intermediary chooses a test and fees so that the agent cannot credibly refrain from having the asset tested. We show that this corresponds to the intermediary creating option value by including high score realizations in the test and charging sufficiently low testing and disclosure fees. The market then correctly expect the agent to have the asset tested, and treats non-disclosure with prejudice. In this way, the intermediary exploits the agent's commitment problem to guarantee herself revenue. 

Second, the intermediary uses positive disclosure fees and noisy tests. Disclosure fees and noisy tests are closely related; were the intermediary constrained to charging only testing fees, she could optimally use tests that are fully revealing. To develop an intuition for the optimal combination of test and fees we propose a demand curve approach in which every test corresponds to a robust demand curve, and the intermediary can be thought of as choosing an optimal price on an optimal robust demand curve. The optimal disclosure fee maximizes the area of a rectangle under the robust demand curve, so the optimal robust demand curve has a rectangular component, which leads to an exponential distribution of scores. Finally, while we have focused on two-part tariffs, which are often used in practice, it would be interesting to study a broader range of pricing structures, including disclosure fees that depend on the realized test score and fees paid by prospective buyers of the asset.

\addcontentsline{toc}{section}{References}
\bibliographystyle{jpe}
\bibliography{certificates}

\addtocontents{toc}{\protect\setcounter{tocdepth}{1}} 
\appendix

\section{Appendix}\label{Section-Appendix}

\subsection{Proofs for \Cref{Section-Model}}\label{proof-Section-Model}
\subsubsection{Proof of \autoref{prop-whyrobust} on p. \pageref{prop-whyrobust}}

\begin{proof}
    
    Consider a test-fee structure $(G,\phi)$.  If \eqref{eq-phiT-relaxed-opt} is violated, then \autoref{lemma-worker-takes-test-wp1-in-all} shows that there is an equilibrium with zero revenue, and the proposition follows.
    Suppose that \eqref{eq-phiT-relaxed-opt} holds. Using integration by parts, we can rewrite \eqref{eq-phiT-relaxed-opt} as $\phit \leq  \int_{\mu + \phid}^{\ub} [1-G(s)]ds$. This expression implies that 

    any testing fee that satisfies \eqref{eq-phiT-relaxed-opt} is at most the area above the score 
    distribution $G$ from $\mu + \phid$ to $\ub$, shaded dark in 
    \autoref{figure-if-FSE-then-also-epsilon}.  The revenue from disclosure is at most $\phid \Pr[s 
    \geq \lb + \phid]$, shaded light in \autoref{figure-if-FSE-then-also-epsilon}.  This is because 
    in any equilibrium, a score strictly less than $\lb + \phid$ strictly prefers to conceal.  So the total revenue is at most the shaded area above $G$.  Since $G$ is a mean-preserving contraction 
    of the prior distribution, 
    \begin{align*}
        \mu = \int_{\lb}^{\ub} sdG(s) = \int_{\lb}^{\ub} [1-G(s)]ds + \ub
    \end{align*}
    \noindent and therefore the area above $G$ is equal to $R_F = \mu - \ub$.  
    \begin{figure}[h]
	\centering

	\begin{tikzpicture}[domain=0:3, scale=5, ultra thick,decoration={
		markings,
		mark=at position 0.5 with {\arrow{>}}}]    

    \filldraw[black!10] (.4,.28) -- (.4,.7) -- (0,.7) -- (0,.28) --cycle;
    
    \filldraw[black!40] (.7,.49) -- (1,.7) -- (.7,.7) -- cycle;

	\draw[<->] (0,0.9) node[left]{CDF} -- (0,0)-- (1.2,0) node[below,xshift=-5]{Score};
	
	\draw[blue] (0,0) node[below]{$\color{black} \lb$} -- (1,.7) node[above]{$G$};
	
	\draw[dotted] (.7,.49) -- (.7,0) node[below]{$\mu + \phid$};
	
	\draw[loosely dotted] (1,0)node[below]{$\ub$} -- (1,.7);
	\draw[loosely dotted] (0,.7) node[left]{$1$} -- (1,0.7);
	
	\draw[dotted] (.4,.28) -- (.4,0) node[below]{$\lb + \phid$};
	
	\end{tikzpicture}
	\caption{\footnotesize The revenue from testing fee fee is shaded dark.  The revenue from disclosure fee is at most the area shaded light.}
	\label{figure-if-FSE-then-also-epsilon}
\end{figure}
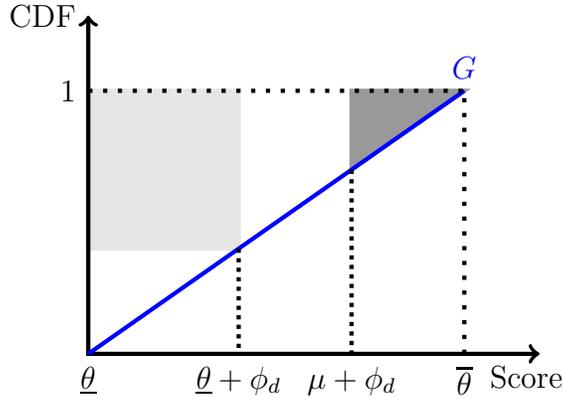
    
    Now suppose that there exists an equilibrium in which revenue is $R_F - \epsilon$.  Since revenue is at most the shaded area above $G$ and the total area above $G$ is $R_F$, then the un-shaded area above $G$ must be at most $\epsilon$.  In particular, the area above $G$ from $\lb + \phid$ to $\mu + \phid$ is at most $\epsilon$.  Since $G$ is monotone,
    \begin{align*}
        \phit \leq \int_{\mu + \phid}^{\ub} [1-G(s)]ds &\leq (\ub - (\mu + \phid))(1 - G(\mu + \phid)) \\ &\leq \left(\frac{\ub - (\mu + \phid)}{\mu - \lb}\right) \int_{\lb + \phid}^{\mu + \phid} [1-G(s)]ds \leq \left(\frac{\ub - \mu}{\mu - \lb}\right)\epsilon
    \end{align*}
    \noindent Therefore, as $\epsilon$ goes to zero, the revenue from testing goes to zero as well.  Thus to complete the proof we only need to show that the revenue from disclosure also goes to zero.
    
    For $\epsilon$ small enough so that $\mu + \phid \geq \lb + \phid+ \sqrt{\epsilon}$, we have
    \begin{align*}
        \epsilon \geq \int_{\lb + \phid}^{\mu + \phid} [1-G(s)]ds \geq \int_{\lb + \phid}^{\lb + \phid+ \sqrt{\epsilon}} [1-G(s)]ds \geq \sqrt{\epsilon}(1-G(\lb + \phid + \sqrt{\epsilon})),
    \end{align*}
    where the third inequality follows since $G$ is monotone.
    That is, the probability that the score is more than $\thr := \lb + \phid + \sqrt{\epsilon}$ is at most $\sqrt{\epsilon}$.  Thus if there exists an equilibrium threshold above $\thr$, the disclosure probability in that equilibrium is at most $\sqrt{\epsilon}$.  To show that there exists an equilibrium threshold above $\thr$, we apply \autoref{lemma-highest-threshold} by showing that $E[s | s \leq \thr] > \thr - \phid$.

    The expectation of $G$ can be written as
\begin{align*}
\mu&=G(\thr)E[s|s\leq \thr]+(1-G(\thr))E[s|s>\thr] \leq G(\thr)E[s|s\leq \thr]+(1-G(\thr))\ub.
\end{align*}
Rearranging terms yields
$$\ub-E[s|s\leq\thr]\leq \frac{\ub-\mu}{G(\thr)} \leq \frac{\ub-\mu}{1-\sqrt{\varepsilon}}.$$
Therefore since $\thr=\lb + \phid + \sqrt{\epsilon}$ and $\lb < \mu$, for small enough $\epsilon$ we have\footnote{Precisely, for $\varepsilon\leq \varepsilon^*\equiv (\frac{\mu-\lb}{1+\ub})^2$.}
$$\thr - \phid = \lb+\sqrt{\varepsilon} < \frac{\mu-\ub\sqrt{\varepsilon}}{1-\sqrt{\varepsilon}} = \ub-\frac{\ub-\mu}{1-\sqrt{\varepsilon}} \leq E[s|s\leq\thr],$$
and therefore by \autoref{lemma-highest-threshold} there must exist an equilibrium threshold higher than $\thr$.

To complete the proof, recall that $\phid<\ub-\mu$, so the revenue from disclosure fee is no more than $\sqrt{\varepsilon}(\ub-\mu).$ Thus the total revenue is bounded by
$\delta(\varepsilon)\equiv(\frac{\ub - \mu}{\mu - \lb})\epsilon+\sqrt{\varepsilon}(\ub-\mu).$
\end{proof}

\subsection{Proofs for \Cref{Section-SimplifyingProblem}}
\subsubsection{Proof of \autoref{lemma-worker-takes-test-wp1-in-all} on p. \pageref{lemma-worker-takes-test-wp1-in-all}.}

\begin{proof}
	Consider a test-fee structure $\diststruct$ that satisfies \eqref{eq:2}.  Assume towards a contradiction that there exists an equilibrium in which the agent has the asset tested with probability strictly less than $1$. In this equilibrium, let $p_N$ be the price that the agent obtains when he does not disclose his score. In any equilibrium, a agent discloses the score if $s-\phid>p_N$ and does not disclose his score if $s-\phid<p_N$. Therefore, the expected value conditional on non-disclosure is no higher than the prior expected value: $p_N\leq\mu$.
	
	Because the agent has the asset tested with probability strictly less than $1$, his equilibrium payoff equals $p_N$. Consider the following deviation: having the asset tested with probability 1, and disclosing if and only if the score $s$ satisfies $s> p_N+\phid$. The payoff from this deviation is $ -\phit+\int_{p_N+\phid}^{\overline\theta} (s-\phid)dG+\int_{\underline\theta}^{p_N+\phid}p_N dG $ where the first term is the testing fee, the second is the payoff from disclosing a high score, and the third is the payoff from concealing a low score. Taking the difference between this deviation payoff and the equilibrium payoff of $p_N$ yields
	\begin{align*}
	-\phit+\int_{p_N+\phid}^{\overline\theta} (s-\phid)dG+\int_{\underline\theta}^{p_N+\phid}p_N dG-p_N&=-\phit+\int_{p_N+\phid}^{\overline\theta} (s-p_N-\phid)dG\\
	&\geq -\phit+\int_{\mu+\phid}^{\overline\theta} (s-\mu-\phid)dG\\
	&> 0.
	\end{align*}
\noindent where the equality follows from algebra, the first inequality follows from $p_N\leq \mu$, and the second inequality follows from \eqref{eq:2}. Therefore, the deviation is strictly profitable, and this strategy profile is not an equilibrium.

Now consider a test-fee structure $\diststruct$ that violates \eqref{eq:2}, and where
\begin{align}\label{Equation-ParticipationViolated}
    \phit\geq \int_{\mu+\phid}^{\overline\theta}[s-(\mu+\phid)]dG.
\end{align}

\noindent	We show that now there exists an equilibrium where the agent has the asset tested with probability $0$. Consider a strategy profile in which the agent's strategy involves having the asset tested with probability 0, and therefore, he has no opportunity to disclose on the path of play. Off-path, the agent discloses when obtaining a score $s\geq \mu+\phid$.  

In this strategy profile, the market price from non-disclosure is $\mu$. We claim that this strategy profile is an equilibrium. First, the agent's behavior is sequentially rational off-path. Second, by deviating to take the test, the agent receives an expected payoff lower than $\mu$:
	\begin{align*}
	 \int_\lb^{\mu+\phid}\mu dG+\int_{\mu+\phid}^\ub [s-\phid]dG-\phit&=\mu+\int_{\mu+\phid}^\ub[s-(\mu+\phid)]dG-\phit\leq \mu
	\end{align*}
where the equality is algebra, and the inequality follows from \eqref{Equation-ParticipationViolated}. So the agent has no profitable deviation, and the strategy profile is an equilibrium.

\end{proof}

\subsubsection{Proof of \autoref{lemma-nonnegative} on p. \pageref{lemma-nonnegative}}

\begin{proof}

Consider a test-fee structure $(G,\phi)$ where at least one of the fees is negative.

If $\phit-\int_{\mu+\phid}^\ub[s-(\mu+\phid)]dG<0$, from \autoref{lemma-worker-takes-test-wp1-in-all}, the asset is tested with probability 1 in all equilibria. Suppose $\phid<0$, then in the unique equilibrium all scores disclose, so $R(G,\phi)=\phit+\phid$. Let $\phid'=0$ and $\phit'=\phit+(1-G(\mu+\phid))\phid$. Then $ \phit'-\int_{\mu+\phid'}^\ub[s-(\mu+\phid')]dG\leq \phit-\int_{\mu+\phid}^\ub[s-(\mu+\phid)]dG<0$, so we can find $\phit''>\phit'$ and $\phid''=\phid'=0$ such that $\phit''-\int_{\mu+\phid''}^\ub[s-(\mu+\phid'')]dG<0$. Then  \autoref{lemma-worker-takes-test-wp1-in-all} implies that under $(G,\phi'')$ the asset is tested with probability 1 and $R(G,\phi'')=\phit''>\phit'=\phit+(1-G(\mu+\phid))\phid\geq \phit+\phid=R(G,\phi)$. Now if $\phit''\geq 0$, we have constructed a test-fee structure with non-negative fees that generates a strictly higher revenue. If $\phit''<0$, since $\phid''=0$, it implies $R(G,\phi'')\leq 0$. But then we can easily construct a test-fee strcuture $(G''',\phi''')$ that generates a strictly positive revenue: let $G'''=F$ and $\phid'''=0$. By assumption $F$ is not degenerated (i.e., $\lb<\ub$), so 
$\int_{\mu}^\ub (s-\mu)  dF>0.$
Let $\phit'''\in(0,\int_{\mu}^\ub (s-\mu)  dF)$, and from \autoref{lemma-worker-takes-test-wp1-in-all} in all equilibria the asset is tested with probability 1 under $(G''',\phi''')$, and $R(G''',\phi''')=\phit'''>0$. Now suppose $\phit<0$ and $\phid\geq 0$. Let $\phit''''=0$ and $\phid''''=\phid$, then $(G,\phi'''')$ satisfies \eqref{eq:2} unless  $\int_{\mu+\phid}^\ub[s-(\mu+\phid)]dG=0$. So if $\int_{\mu+\phid}^\ub[s-(\mu+\phid)]dG>0$,  a higher testing fee is charged while the revenue from disclosure fee is unchanged, because the  disclosure decision doesn't depend on the testing fee (this can also be seen from $\phit$ does not appear in \eqref{eq:robustimplementation}). If $\int_{\mu+\phid}^\ub[s-(\mu+\phid)]dG=0$, then $G(\mu+\phid)=1$, which implies $\thr=\mu+\phid$ is the threshold satisfying \eqref{eq:robustimplementation}, and $R(G,\phi)=\phit+0<0$. But then the $(G''',\phi''')$ constructed above gives a strictly higher revenue with non-negative fees.

If $\phit-\int_{\mu+\phid}^\ub[s-(\mu+\phid)]dG\geq 0$, from \autoref{lemma-worker-takes-test-wp1-in-all}, $R(G,\phid)\leq 0$. Then the $(G''',\phi''')$ constructed above gives a strictly higher revenue with non-negative fees.

\end{proof}

\subsubsection{Proof of \autoref{lemma-highest-threshold} on p. \pageref{lemma-highest-threshold}}

The following lemma is used in our proof.

\begin{lemma}\label{lemma-inter}
	Suppose $f$ is an increasing function defined on $[a,b]\subset \mathbb{R}$, and $g$ is a continuous function defined on $[a,b]$. If $f(a)>g(a)$ and $f(b)<g(b)$, there exists $x^*\in(a,b)$ such that $f(x^*)=g(x^*)$. Moreover, $\bar{x}=\max\{x|f(x)=g(x)\}$ exists and $\bar{x}=\sup\{x\in[a,b]|f(x)\geq g(x)\}$.
\end{lemma}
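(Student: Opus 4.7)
The plan is to prove both claims by working with the single candidate point
$$\bar{x} \;=\; \sup\bigl\{\, x \in [a,b] \,:\, f(x) \geq g(x) \,\bigr\},$$
and showing that $\bar{x}$ itself is the crossing point $x^{*}$ as well as the maximum of $\{x : f(x)=g(x)\}$. The hypothesis $f(a)>g(a)$ makes the defining set nonempty (it contains $a$), and $b$ is an upper bound, so $\bar{x}$ is well-defined in $[a,b]$. The main work is showing $f(\bar{x})=g(\bar{x})$ despite $f$ being only monotone, not continuous.

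First I would rule out $\bar{x}\in\{a,b\}$. To exclude $\bar{x}=b$: if it held, there would be $x_{n}\uparrow b$ with $f(x_{n})\geq g(x_{n})$; monotonicity of $f$ gives $f(b)\geq f(x_{n})\geq g(x_{n})$, and by continuity of $g$ the right side tends to $g(b)$, contradicting $f(b)<g(b)$. To exclude $\bar{x}=a$: otherwise $f(x)<g(x)$ for all $x>a$; combining $f(x)\geq f(a)$ (monotonicity) with $g(x)\to g(a)$ (continuity) as $x\downarrow a$ would give $f(a)\leq g(a)$, contradicting $f(a)>g(a)$. Hence $\bar{x}\in(a,b)$.

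Next, to establish $f(\bar{x})=g(\bar{x})$, I would argue by contradiction in two sub-cases. If $f(\bar{x})>g(\bar{x})$, pick $x_{n}\downarrow\bar{x}$ with $x_{n}>\bar{x}$; by the definition of the supremum these satisfy $f(x_{n})<g(x_{n})$. But monotonicity gives $f(x_{n})\geq f(\bar{x})$, while continuity of $g$ gives $g(x_{n})\to g(\bar{x})<f(\bar{x})$, so eventually $g(x_{n})<f(x_{n})$, a contradiction. If instead $f(\bar{x})<g(\bar{x})$, pick $x_{n}\uparrow\bar{x}$ in the defining set, so $f(x_{n})\geq g(x_{n})$; monotonicity gives $f(x_{n})\leq f(\bar{x})$, hence $g(x_{n})\leq f(\bar{x})<g(\bar{x})$, which contradicts $g(x_{n})\to g(\bar{x})$. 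Hence $f(\bar{x})=g(\bar{x})$, proving existence of $x^{*}$ and giving $\bar{x}\in\{x:f(x)=g(x)\}$.

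Finally, for the characterization as a maximum, I would observe that any $x>\bar{x}$ lies outside the defining set (since $\bar{x}$ is its supremum), so $f(x)<g(x)$ and in particular $f(x)\neq g(x)$. Combined with $f(\bar{x})=g(\bar{x})$, this shows $\bar{x}=\max\{x:f(x)=g(x)\}$ and simultaneously that $\bar{x}=\sup\{x\in[a,b]:f(x)\geq g(x)\}$, which is the second equality asserted. The single delicate step is the case analysis for $f(\bar{x})=g(\bar{x})$: because $f$ can jump, one must extract the correct one-sided bounds from monotonicity and combine them with continuity of $g$ on the appropriate side of $\bar{x}$, and that is the main obstacle.
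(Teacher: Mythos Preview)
Your proof is correct and follows essentially the same approach as the paper: both define $\bar{x}$ as the supremum of $S=\{x:f(x)\geq g(x)\}$, use monotonicity of $f$ together with continuity of $g$ to pin down $f(\bar{x})=g(\bar{x})$, and then read off the maximum characterization. The only cosmetic difference is organizational: the paper first shows $f(\bar{x})\geq g(\bar{x})$ via an increasing sequence in $S$ (which immediately rules out $\bar{x}=b$) and then contradicts the strict inequality, whereas you handle the endpoints first and then run a symmetric two-case contradiction; the underlying one-sided limit arguments are the same.
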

\begin{proof}
Define $S=\{x\in[a,b]|f(x)\geq g(x)\}$. Becase $S$ is non-empty and bounded above by $b$, there exists a unique supremum $\bar{x}=\sup S$. Since $f$ is increasing, $f(a)>g(a)$, and $g$ is continuous, it follows that for sufficiently small $\varepsilon>0$, $f(a+\varepsilon)>f(a)>g(a+\varepsilon)$, which implies $\bar{x}> a$.
Consider an increasing sequence $x_n\in  S$ such that $x_n\rightarrow \bar{x}$. Since $f$ is an increasing function,  $f(\bar{x})\geq \limsup_{n\rightarrow \infty} f(x_n)$. Recall that $f(x_n)\geq g(x_n)$, so taking limsup for both sides yields
\begin{align*}
    f(\bar{x})\geq \limsup_{n\rightarrow \infty}f(x_{n})\geq \limsup_{n\rightarrow \infty}g(x_{n})=g(\bar{x})
\end{align*}
where the equality holds because $g$ is continuous. So $f(\bar{x})\geq g(\bar{x})$, which also implies $\bar{x}\neq b$.

	Now we prove that $f(\bar{x})= g(\bar{x})$. Suppose towards a contradiction that $f(\bar{x})> g(\bar{x})$. Because $g$ is continuous, there exists $\varepsilon>0$ such that $f(\bar{x})> g(\bar{x}+\varepsilon)$. Since $f$ is increasing, it then follows that
	$$f(\bar{x}+\varepsilon)\geq f(\bar{x})> g(\bar{x}+\varepsilon)$$
	which means $\bar{x}+\varepsilon\in S$, contradicting $\bar{x}=\sup S$. Therefore, it follows that $\bar{x}=\max\{x|f(x)=g(x)\}$.    
\end{proof}

\begin{proof}[Proof of \autoref{lemma-highest-threshold}\ref{bullet-exist}]
 	A threshold $\thr$ is an equilibrium if and only if
	$$E_G[s|s\leq \thr]=\thr-\phid.$$
	
	Define $a(\thr)=E_G[s|s\leq \thr]$ and $b(\thr)=\thr-\phid$. Since $a(\thr)$ is bounded by $\mu$ and $b(\thr)\rightarrow\infty$ as $\thr\rightarrow\infty$, there exists $\bar{\thr}$ large enough such that $a(\thr)<b(\thr)$ for all $\thr\geq \bar{\thr}$. So if an equilibrium exists, the threshold $\thr$ must be in $[\underline{s}_G,\bar{\thr}]$.

	We have $a(\underline{s}_G)=\underline{s}_G\geq b(\underline{s}_G)$ and $a(\bar{\thr})\leq b(\bar{\thr})$, $a$ is increasing and $b$ is continuous. From \autoref{lemma-inter}, there exists $\thr\in[\underline{s}_G,\bar{\thr}]$ such that $a(\thr)=b(\thr)$, and $\max\{\thr|a(\thr)=b(\thr)\}$ exists. So the set of equilibrium thresholds is non-empty and a highest equilibrium threshold exists.
\end{proof}

\begin{proof}[Proof of \autoref{lemma-highest-threshold}\ref{bullet-highest}]
    The ``if" part follows directly from definition, so we prove the ``only if" part. Suppose that $\thr$ is the highest threshold of $(G,\phi)$, so $\thr-\phid=E[s|s\leq \thr]$ and for all $\thr'>\thr$, $\thr'-\phid\neq E[s|s\leq \thr']$.  We show that indeed we must have $\thr'-\phid > E[s|s\leq \thr']$.
	
	Suppose for contradiction that there exists $\thr'>\thr$ such that $\thr'-\phid<E[s|s\leq \thr']$. Since $a(\thr')>b(\thr')$, $a(\bar{\thr})\leq b(\bar{\thr})$, $a$ is increasing and $b$ is continuous, from \autoref{lemma-inter}, there exists $\thr^*\in(\thr',\bar{\thr}]\subset (\thr,\bar{\thr}]$ such that $a(\thr^*)=b(\thr^*)$.  It implies threshold $\thr^* > \thr$ is an equilibrium, which leads to a contradiction. Therefore, for all $\thr'>\thr$, we must have $\thr'-\phid> E[s|s\leq \thr'].$
\end{proof}

\begin{proof}[Proof of \autoref{lemma-highest-threshold}\ref{bullet-eqm}]
    We show that the agent discloses and only discloses scores greater than the highest threshold $\thr$ is an adversarial equilibrium. 
	Suppose there exists another equilibrium that gives the intermediary a lower revenue. In this equilibrium, let $\tilde{\thr}=p_N+\phid$, where $p_N$ is the price of no disclosure. Since all scores strictly greater than $\tilde{\thr}$ must disclose and all scores strictly lower than $\tilde{\thr}$ must conceal in equilibrium, the equilibrium price satisfies 
	$$E[s< \tilde{\thr}]\leq p_N \leq E[s|s\leq \tilde{\thr}].$$
	
	The intermediary's revenue from this equilibrium is $\phit+\phid[1-G(\tilde{\thr})+\lambda(G(\tilde{\thr})-\sup_{s<\thr}G(s))]$,
	for some $\lambda\in[0,1]$, which denotes the probability that the agent discloses when $s=\thr$. Since this equilibrium gives the intermediary a lower revenue, 
	$$1-G(\tilde{\thr})+\lambda(G(\tilde{\thr})-\sup_{s<\thr}G(s))< 1-G(\thr),$$
	which implies $\tilde{\thr}>\thr$. But then from part (b) the characterization of $\thr$ and the fact that $\tilde{\thr}>\thr$,
	$$\tilde{\thr}-\phid>E[s|s\leq\tilde{\thr}]\geq p_N $$
	which contradicts $\tilde{\thr}=p_N+\phid$.
\end{proof}

\subsubsection{Proof of \autoref{lemma-relaxvalue} on p. \pageref{lemma-relaxvalue}}

We prove \autoref{lemma-relaxvalue} by first proving  part (c), then (a), and then (b).

We use the following lemma to prove \autoref{lemma-relaxvalue}(c).

\begin{lemma}\label{lemma-usc}
    	If $(G^n,\thr^n)\rightarrow (G,\thr)$, and $\lim_{n\rightarrow \infty}G^n(\thr^n)$ exists, then $\lim_{n\rightarrow \infty}G^n(\thr^n)\leq G(\thr)$.
\end{lemma}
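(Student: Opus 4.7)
The plan is to use a standard upper-semicontinuity argument for CDFs under weak convergence. The two facts I will exploit are that weak convergence $G^n \Rightarrow G$ implies $G^n(x) \to G(x)$ at every continuity point $x$ of $G$, and that $G$ itself is right-continuous. The main subtlety is that one cannot hope for equality in the conclusion: if $G$ has an atom at $\thr$ and $\thr^n \to \thr$ from below, then $G^n(\thr^n)$ will undercount that atom, so the inequality can be strict. The right tool is therefore to bound $G^n(\thr^n)$ from above using monotonicity of each $G^n$, rather than trying to evaluate convergence at $\thr$ itself.

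First, I would fix an arbitrary $\epsilon>0$ and choose a continuity point $y$ of $G$ with $\thr<y<\thr+\epsilon$; such a $y$ exists because $G$ is a CDF and so has at most countably many discontinuities on $[\lb,\ub]$. Since $\thr^n\to\thr$, for all sufficiently large $n$ we have $\thr^n\leq y$, and because $G^n$ is non-decreasing this yields
\[
G^n(\thr^n)\leq G^n(y).
\]
By the portmanteau characterization of weak convergence at the continuity point $y$, $G^n(y)\to G(y)$. Taking limits in $n$ (using that $\lim_n G^n(\thr^n)$ is assumed to exist) gives
\[
\lim_{n\to\infty} G^n(\thr^n)\;\leq\; G(y)\;\leq\; G(\thr+\epsilon),
\]
where the second inequality uses monotonicity of $G$.

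Finally, I would let $\epsilon\downarrow 0$. Right-continuity of $G$ gives $G(\thr+\epsilon)\to G(\thr)$, and combining with the previous display yields the desired bound $\lim_{n\to\infty} G^n(\thr^n)\leq G(\thr)$. The only step that requires any care is the selection of the continuity point $y$: without it, one could not invoke pointwise convergence of $G^n$, since $\thr$ itself may be a discontinuity point of $G$. Everything else is routine manipulation of monotonicity and weak-convergence definitions.
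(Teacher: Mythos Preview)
Your proof is correct. The overall strategy matches the paper's---use monotonicity of $G^n$ to replace $\thr^n$ by a nearby point to the right, invoke a characterization of weak convergence at that point, and then use right-continuity of $G$ to pass to the limit---but the technical tool differs. You pick a continuity point $y\in(\thr,\thr+\epsilon)$ of $G$ and apply the portmanteau theorem ($G^n(y)\to G(y)$ at continuity points), whereas the paper instead invokes the L\'evy metric: for any $c>G(\thr)$ it chooses $\bar\varepsilon$ with $c>G(\thr+2\bar\varepsilon)+\bar\varepsilon$, and then uses $L(G,G^n)<\bar\varepsilon$ to conclude $G^n(\thr+\bar\varepsilon)\leq G(\thr+2\bar\varepsilon)+\bar\varepsilon$. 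Your continuity-point argument is arguably the more standard and streamlined of the two, since it avoids introducing the L\'evy metric and the associated $2\bar\varepsilon$ bookkeeping; the paper's version has the minor advantage of being fully quantitative (it gives an explicit $N$ in terms of the L\'evy distance), but that is not needed here.
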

\begin{proof}
    It suffices to show that for any $c>G(\thr)$, there exists $N$ such that for $n\geq N$, $G^n(\thr^n)< c$. Because $G$ is right-continuous, it follows that for any $c>G(\thr)$, there exists $\bar\varepsilon>0$ such that 
	\begin{equation}\label{eq-rightcontinuous-1}
	c>G(\thr+2\bar\varepsilon)+\bar\varepsilon
	\end{equation}
Notice that $\thr^n\rightarrow \thr$, so there exists $N_1$ such that for $n>N_1$, $\thr^n\leq \thr+\bar\varepsilon$, which implies
	\begin{equation}\label{eq-rightcontinuous-2}
	G^n(\thr^n)\leq G^n(\thr+\bar\varepsilon)
	\end{equation}
	Because weak convergence is metrized by the Levy metric,
 	$$L(G,F)=\inf\{\varepsilon>0|F(x-\varepsilon)-\varepsilon\leq G(x)\leq F(x+\varepsilon)+\varepsilon\text{ for all }x\}.$$
	Thus, if $G^n$ converges weakly to $G$, there exists $N_2$ such that for all $n>N_2$, $L(G,G^n)<\bar\varepsilon$, and hence
	\begin{equation} \label{eq-rightcontinuous-3}
	G(\thr+2\bar\varepsilon)+\bar\varepsilon\geq G^n(\thr+\bar\varepsilon)
	\end{equation}
Combining \eqref{eq-rightcontinuous-1}, \eqref{eq-rightcontinuous-2} and \eqref{eq-rightcontinuous-3}, it follows that for every $n \geq \max\{N_1,N_2\}$,
\begin{align*}
c>G(\thr+2\bar\varepsilon)+\bar\varepsilon\geq G^n(\thr+\bar\varepsilon)\geq G^n(\thr^n).
\end{align*}
\end{proof}

\begin{proof}[Proof of \autoref{lemma-relaxvalue}(c)] 
Let us begin by verifying \eqref{eq-phiT-relaxed-opt}. We have

    $$\phit=\lim_{n\rightarrow \infty}\phit^n\leq \lim_{n\rightarrow \infty}\int_{\mu+\phid^n}^\ub[s-(\mu+\phid^n)]dG^n(s)=\int_{\mu+\phid}^\ub[s-(\mu+\phid)]dG(s)$$
    where the equalities follow from taking limits and the inequality follows since $(G^n,\phi^n,\thr^n)$ satisfies \eqref{eq:2}.  So $(G,\phi)$ satisfies \eqref{eq-phiT-relaxed-opt}.

    Now we turn to \eqref{eq:robustimplementation}. Since $\thr_{n}-\phi_d^n=E[s|s\leq\thr^n]\leq \mu$ and $\phid^n\rightarrow \phid$, we have $\thr^n$ being bounded above. On the other hand $\thr^n\geq \underline{s}_G\geq \lb$ so $\thr^n$ is bounded below. From Bolzano-Weierstrass theorem, there exists a subsequence $n_k$ and $\thr$ such that $\lim_{k\rightarrow \infty}\thr_{n_k}=\thr$. 	
	
	For any $\thr'>\thr$, we show that $\phid G(\thr')\leq \int_\lb^
	{\thr'} G(s)ds$. If $\phid G(\thr')> \int_\lb^{\thr'} G(s)ds$, there exists small enough $\varepsilon$ such that $\phid G(\thr'+\varepsilon)> \int_\lb^{\thr'+\varepsilon} G(s)ds$ and $G$ is continuous at $\thr'+\varepsilon$. So $\lim_{n\rightarrow \infty}\phid^nG^n(\thr'+\varepsilon)=\phid G(\thr'+\varepsilon)$. But then we have a contradiction:
		$$\phid G(\thr'+\varepsilon)>\int_\lb^{\thr'+\varepsilon} G(s)ds=\lim_{k\rightarrow\infty}\int_\lb^{\thr'+\varepsilon} G_{n_k}(s)ds\geq \lim_{k\rightarrow \infty } \phid^{n_k} G_{k_n} (\thr'+\varepsilon)=\phid G(\thr'+\varepsilon) $$
	where the second inequality holds because $\thr'>\thr$ implies for large enough $k$, $\thr'+\varepsilon>\thr'>\thr_{n_k}$ and thus $ \phid^{n_k} G_{n_k}(\thr'+\varepsilon)\leq \int_\lb^
	{\thr'+\varepsilon} G_{n_k}(s)ds$. 
	
	Next we show that $\phid G(\thr)=\int_\lb^\thr G(s)ds$. Since $G$ is right continuous and for all $\thr'>\thr$, $\phid G(\thr')\leq \int_\lb^
	{\thr'} G(s)ds$, we must have $\phid G(\thr)\leq \int_\lb^\thr G(s)ds$. Since  $\phid^{n_k} G_{n_k}(\thr_{n_k})=\int_\lb^{\thr_{n_k}} G_{n_k}(s)ds$ for all $k$, $\lim_{k\rightarrow\infty}\phid^{n_k}G_{n_k}(\thr_{n_k})$ exists and equals $\lim_{k\rightarrow\infty}\int_\lb^{\thr_{n_k}} G_{n_k}(s)ds=\int_\lb^{\thr} G(s)ds$. Therefore
	$$\phid G(\thr)\geq \lim_{k\rightarrow\infty}\phid^{n_k}G_{n_k}(\thr_{n_k})=\int_\lb^{\thr} G(s)ds$$
	where the inequality holds from \autoref{lemma-usc}. Notice that the argument above also implies $\phid G(\thr)= \lim_{k\rightarrow\infty}\phid^{n_k}G_{n_k}(\thr_{n_k})$.
	
	Now we have shown $(G,(\phid,\phit))$ and $\thr$ satisfies \eqref{eq-phiT-relaxed-opt} and \eqref{eq-relaxed-highest}, and next we consider the revenues comparison:
	$$V=\lim_{n\rightarrow \infty} \phit^n+\phid^n(1-G^n(\thr^n))= \phit+\lim_{k\rightarrow \infty} \phid^{n_k}(1-G_{n_k}(\thr_{n_k}))=\phit+\phid(1- G(\thr)).$$
	
	So $(G,(\phid,\phit))$ and $\thr$ indeed generate a revenue as the limit of the revenues generated by $(G^n,(\phid^n,\phit^n))$ and $\thr^n$.
\end{proof}

We use the following lemma to prove \autoref{lemma-relaxvalue}(a) and \autoref{lemma-relaxvalue}(b).

\begin{lemma}\label{lemma-relaxvalue-4b-old}
For every test-fee structure and threshold $(G,\phi,\thr)$ that satisfy \eqref{eq-phiT-relaxed-opt} and \eqref{eq-relaxed-highest}, there exists a sequence of test-fee structures and thresholds $\{(G^n,\phi^n,\thr^n)\}_{n=1,2,\ldots}$ such that (i) for each $n$, $(G^n,\phi^n,\thr^n)$ satisfy \eqref{eq:2} and \eqref{eq:robustimplementation}, (ii) $(G^n,\phi^n)$ converges to $(G,\phi)$, and (iii) $\hat R(G,\phi,\thr)\leq \lim_{n\rightarrow\infty} \hat R(G^n,\phi^n,\thr^n)$.
\end{lemma}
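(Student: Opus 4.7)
The plan is to exhibit the sequence via a small perturbation of the disclosure fee alone. Given $(G,\phi,\thr)$ satisfying \eqref{eq-phiT-relaxed-opt} and \eqref{eq-relaxed-highest}, I set $G^n = G$, $\phid^n = \phid - 1/n$ (for $n$ large so that $\phid^n \geq 0$ when $\phid > 0$), $\phit^n = \phit$ when \eqref{eq-phiT-relaxed-opt} is strict and $\phit^n = \phit - 1/n$ when it binds, and take $\thr^n$ to be the largest equilibrium threshold of $(G,\phid^n)$ --- which will automatically lie at or below $\thr$.

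To verify \eqref{eq:robustimplementation}, consider the functions
\begin{equation*}
h(\thr') := \thr' - \phid - E_G[s \mid s \leq \thr'], \qquad h^n(\thr') := h(\thr') + 1/n.
\end{equation*}
The weak-highest constraint \eqref{eq-relaxed-highest} gives $h(\thr') \geq 0$ for every $\thr' > \thr$, so $h^n(\thr') \geq 1/n > 0$ strictly on $(\thr, \bar\theta]$. Moreover $h^n(\thr) = 1/n > 0$ while $h^n$ is negative near $\underline{s}_G$ (since $E_G[s \mid s \leq \underline{s}_G+\epsilon] \to \underline{s}_G$ as $\epsilon \downarrow 0$ and $\phid^n > 0$), so the intermediate-value argument underlying \autoref{lemma-inter} yields a largest root $\thr^n$ of $h^n$ in $[\underline{s}_G, \thr)$. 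The largest-root property forces $h^n > 0$ strictly on $(\thr^n, \thr]$, and combined with the bound above this gives strict inequality on $(\thr^n, \bar\theta]$, which is exactly \eqref{eq:robustimplementation}. Strict \eqref{eq:2} follows because the right-hand side of \eqref{eq:2} is strictly decreasing in $\phid$ --- its derivative in $\phid$ equals $-(1 - G(\mu + \phid))$, which is negative whenever $G$ places positive mass above $\mu + \phid$; in the boundary case $\phid = \bar\theta - \mu$ (so $G$ has no mass above $\mu + \phid$), the auxiliary decrease $\phit^n = \phit - 1/n$ forces strictness.

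Convergence in (ii) is immediate since $G^n = G$ and $\phi^n \to \phi$. For (iii), because $\thr^n \leq \thr$ implies $G(\thr^n) \leq G(\thr)$,
\begin{equation*}
\hat R(G^n,\phi^n,\thr^n) = \phit^n + \phid^n\bigl(1 - G(\thr^n)\bigr) \;\geq\; \phit^n + \phid^n\bigl(1 - G(\thr)\bigr) \;\longrightarrow\; \phit + \phid\bigl(1-G(\thr)\bigr) = \hat R(G,\phi,\thr),
\end{equation*}
so taking $\liminf_n$ yields the desired inequality.

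The main technical obstacle is handling atoms of $G$, since $E_G[s \mid s \leq \cdot]$ is then only right-continuous and $h^n$ inherits jump discontinuities. Care is needed to apply \autoref{lemma-inter} to locate $\thr^n$ and to verify that no zero of $h^n$ sneaks in on $(\thr^n,\thr]$ across such a jump; this is handled by the same supremum argument used in the proof of \autoref{lemma-inter}, combined with the fact that $h^n$ is right-continuous. A secondary edge case is $\phid = 0$, where the relaxed revenue is entirely from the testing fee and $\thr$ must equal $\underline{s}_G$; here the construction simplifies because $G(\thr^n) = 0$ for any $\thr^n$ in the support, and strict \eqref{eq:2} is secured by the single perturbation $\phit^n = \phit - 1/n$.
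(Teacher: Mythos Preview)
Your proposal is correct and follows essentially the same route as the paper: keep $G$ fixed, lower the fees by $1/n$, and observe that the highest equilibrium threshold drops strictly below $\thr$, so revenue can only rise in the limit. The paper's version is marginally simpler in that it always sets both $\phit^n=\phit-1/n$ and $\phid^n=\phid-1/n$ without case-splitting on whether \eqref{eq-phiT-relaxed-opt} binds, and it invokes \autoref{lemma-highest-threshold} directly for the existence and location of $\thr^n$ rather than re-deriving the intermediate-value argument.
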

\begin{proof} 
    Consider a test-fee structure $(\phit,\phid,G)$ and $\thr$ satisfying \eqref{eq-phiT-relaxed-opt} and \eqref{eq-relaxed-highest}. Therefore,
\begin{align*}
\phit&\leq \int_{\max\{\mu+\phid,\ub\}}^\ub (s-\mu-\phid)dG(s)\text{ and}\\
\thr-\phid&=E[s|s\leq \thr]\\
\thr'-\phid&\geq E[s|s\leq \thr']\quad\text{for all }\thr'>\thr.
\end{align*}

Let $\phid^n=\phid-\frac{1}{n}$ and $\phit^n=\phit-\frac{1}{n}$, then 
$$\phit^n<\phit\leq \int_{\mu+\phid}^\ub (s-\mu-\phid)dG(s)\leq  \int_{\mu+\phid^n}^\ub (s-\mu-\phid^n)dG(s),$$
so \eqref{eq:2} is satisfied.

From \autoref{lemma-highest-threshold}, we know that under $(\phit^n,\phid^n,G)$, the highest equilibrium threshold $\hat{\thr}_n$ exists. Moreover, since
 \begin{align*}
\thr-\phid^n&>\thr-\phid=E[s|s\leq \thr]\\
\thr'-\phid^n&>\thr'-\phid\geq E[s|s\leq \thr']\quad\text{for all }\thr'>\thr.
\end{align*}
It implies the highest threshold $\hat{\thr}_n<\thr$. So the revenue under $(\phit^n,\phid^n,G)$ is 
$$\phit^n+\phid^n(1-G(\hat{\thr}_n))\geq \phit^n+\phid^n(1-G(\thr))\geq\phit+\phid(1-G(\thr))-\frac{2}{n}$$
where the last term goes to $\hat{R}(G,\phi,\thr)$ as $n\rightarrow \infty$.
\end{proof}

\begin{proof}[Proof of \autoref{lemma-relaxvalue}(a)]

Consider any sequence $(G^n,(\phit^n,\phid^n))$ and $\thr^n$ satisfying \eqref{eq:2} and \eqref{eq:robustimplementation} and generating value $V_n\rightarrow R_M$.

The disclosure fee being non-negative ($\phid^{n}\geq 0$) implies $\phit^{n}\leq \int_{\mu}^\ub (s-\mu)dG^n(s)\leq \ub-\mu$. For any $\phid^{n}>\ub-\mu$, \eqref{eq:2} and \eqref{eq:robustimplementation} implies $\phit^n<0$ and $G(\thr^n)=1$, and thus $V_n<0$. So without loss of generality, we can consider $\phid^{n}\leq \ub-\mu$. Moreover, for any $\phit^n<-1$, by replacing it with $\phit^n=-1$, \eqref{eq:2} and \eqref{eq:robustimplementation} are still satisfied, and $V^n$ increases. So without loss of generality, we can restrict attention to $\phit^n\geq -1$.

Now we have a sequence $(G^n,(\phit^n,\phid^n))$ such that $\phit^{n_j}\in[-1,\ub-\mu]$, $\phid^{n_j}\in[0,\ub-\mu]$, $G^n\in\Gamma(F)$. Proposition $1$ of \cite{kleiner2020extreme} proves that $\Gamma(F)$ is compact, which implies that there exists a converging subsequence $(G^{n_k},(\phit^{n_k},\phid^{n_k}))\rightarrow (G,(\phit,\phid))$. From \autoref{lemma-relaxvalue}(c), we can find $\thr$ such that $(G,(\phit,\phid))$ and $\thr$ satisfies the constraints \eqref{eq-phiT-relaxed-opt} and \eqref{eq-relaxed-highest} in the relaxed problem, and generates a revenue $R_M$. Moreover, from \autoref{lemma-relaxvalue-4b-old}, any test-fee structure and threshold satisfying \eqref{eq-phiT-relaxed-opt} and \eqref{eq-relaxed-highest} must at most generate a revenue of $R_M$, so $(G,(\phit,\phid))$ and $\thr$ is an optimal solution to the relaxed problem.  
\end{proof}

\begin{proof}[Proof of \autoref{lemma-relaxvalue}(b)] 
By \autoref{lemma-relaxvalue-4b-old}, there exists a sequence of test-fee structures and thresholds $\{(G^n,\phi^n,\thr^n)\}_{n=1,2,\ldots}$ such that (i) for each $n$, $(G^n,\phi^n,\thr^n)$ satisfy \eqref{eq:2} and \eqref{eq:robustimplementation}, (ii) $(G^n,\phi^n)$ converges to $(G,\phi)$, and (iii) $\lim_{n\rightarrow\infty} \hat R(G^n,\phi^n,\thr^n) \geq \hat R(G,\phi,\thr) = R_M$.  By definition of $R_M$, we also have $\lim_{n\rightarrow\infty} \hat R(G^n,\phi^n,\thr^n) \leq R_M$.  Therefore, $\lim_{n\rightarrow\infty} \hat R(G^n,\phi^n,\thr^n) = R_M$.
\end{proof}

\subsection{Proofs for \autoref{Section-RobustlyOptimalTestFeeStructures}}

This section contains the proofs of \Cref{proposition-arbitrary-prior-exponential,prop-binaryrobustlyoptimal,proposition-log-concave-testing-fee-positive}.  The three statements of \Cref{proposition-arbitrary-prior-exponential} build on each other in reverse order.  We therefore prove \Cref{proposition-arbitrary-prior-exponential} in reverse order, beginning with \ref{Bullet-Surplus} and ending with \ref{Bullet-ExistenceSES}. \Cref{prop-binaryrobustlyoptimal} is a corollary of one of the steps (\Cref{lem:the-relaxed-problem}) used in \Cref{proposition-arbitrary-prior-exponential}\ref{Bullet-Surplus}. 

\subsubsection{Proof of \autoref{proposition-arbitrary-prior-exponential}\ref{Bullet-Surplus} on p. \pageref{proposition-arbitrary-prior-exponential}}

We prove \autoref{proposition-arbitrary-prior-exponential}\ref{Bullet-Surplus} by solving a relaxed problem where the mean-preserving contraction constraints are relaxed to requiring the score distribution to have same expectation as the prior mean:
\begin{align}\label{eq-mpc-relaxed}
    \max_{(G,\phi,\thr)  \in\Delta[\lb,\ub]\times \Re^3} \hat R(G,\phi,\thr) \qquad\text{s.t. }\eqref{eq-phiT-relaxed-opt}, \eqref{eq-relaxed-highest},\text{ and }E_G[s]=E_F[s]  \tag{RE}
\end{align}
Recall that $\hat R(G,\phi,\thr)\equiv\phit + \phid(1-G(\thr))$ is the revenue corresponding to a test-fee structure $(G,\phi)$ and a threshold $\thr$. This is a relaxed problem because $G\in\Gamma(F)$ implies $G\in\Delta[\lb,\ub]$ and $E_G[s]=E_F[s]$.

We use the following lemma to solve the relaxed problem. The lemma shows that in any optimal testing fee structure, the only possible score above $\mu+\phid$ is $\ub$.  In other words, the score distribution $G$ is flat from $\mu+\phid$ to $\ub$, with possibly a discrete jump at $\ub$ so that $G(\ub) = 1$.

\begin{lemma}\label{lemma-flat-above-mu-plus-c}
	If a test-fee structure $(G,\phi)$ with a weak-highest equilibrium threshold $\tau$ is an optimal solution to the relaxed problem \eqref{eq-mpc-relaxed}, then $G(s) = G(\mu+\phid)$ for all $s \in [\mu+\phid,\ub)$.
\end{lemma}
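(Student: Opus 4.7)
The plan is a proof by contradiction: suppose $(G, \phi)$ is optimal with weak-highest equilibrium threshold $\thr$ but $G$ places strictly positive mass on $(\mu+\phid, \ub)$. At the optimum I take $\phit$ to saturate \eqref{eq-phiT-relaxed-opt}, and integration by parts yields $\phit = A(\bar{s}_A - \mu - \phid)$, where $A := 1 - G\bigl((\mu+\phid)^-\bigr)$ and $\bar{s}_A$ is the conditional mean of $G$ on $[\mu+\phid, \ub]$. Hence $\phit$ depends on $G|_{[\mu+\phid, \ub]}$ only through the pair $(A, \bar{s}_A)$. Moreover, $\thr - \phid = E_G[s|s\le\thr] \le \mu$ forces $\thr \le \mu+\phid$, so $G(\thr)$ is determined by $G|_{[\lb, \mu+\phid)}$.

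I define a flattened candidate $G'$: equal to $G$ on $[\lb, \mu+\phid)$, with the total mass $A$ placed as two atoms---mass $A\alpha$ at $\mu+\phid$ and mass $A(1-\alpha)$ at $\ub$---where $\alpha := (\ub - \bar{s}_A)/(\ub - \mu-\phid) \in (0,1)$. Then $E_{G'}[s] = \mu$, $G'(\thr) = G(\thr)$, and $\phit$ is unchanged since $A(1-\alpha)(\ub-\mu-\phid) = A(\bar{s}_A - \mu-\phid)$. To check that $\thr$ remains the weak-highest equilibrium threshold for $(G', \phi)$, the equation at $\thr$ is inherited from $G$ because $G'|_{[\lb, \thr]} = G|_{[\lb, \thr]}$, and the binding case of the remaining weak-HE inequality (at $\thr' = \mu+\phid$) reduces---using the identity $\mu P - M_0 = A(\bar{s}_A - \mu)$ with $P := G\bigl((\mu+\phid)^-\bigr)$ and $M_0 := \int_{[\lb, \mu+\phid)}s\,dG$---to $(\ub - \mu)(\bar{s}_A - \mu - \phid) \ge 0$, which holds strictly because $G$ has mass on $(\mu+\phid, \ub)$ and hence $\bar{s}_A > \mu+\phid$. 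For $\thr' \in (\mu+\phid, \ub)$, $E_{G'}[s|s\le\thr']$ is constant in $\thr'$ while $\thr'-\phid$ increases, so slack only widens. Thus $(G', \phi, \thr)$ is feasible with the same revenue as $(G, \phi, \thr)$.

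The strict slack $\delta := \mu - E_{G'}[s|s\le\mu+\phid] > 0$ enables a strict improvement. I raise the disclosure fee to $\phid + \epsilon$ for some $0 < \epsilon < \delta$. Since $E_{G'}[s|s\le x]$ stays below $\mu - \epsilon$ for $x \in [\mu+\phid, \ub)$, no new equilibrium threshold appears in this interval. The new weak-highest threshold $\tilde\thr$ therefore lies near $\thr$; a local implicit-function argument using $h'(\thr) \le 1$ (a necessary condition for $\thr$ being weak-highest, where $h(x) := E_{G'}[s|s\le x]$) gives $\tilde\thr - \thr = O(\epsilon)$. Re-saturating \eqref{eq-phiT-relaxed-opt} decreases $\phit$ by exactly $A(1-\alpha)\epsilon$, while the disclosure revenue becomes $(\phid+\epsilon)(1-G'(\tilde\thr)) = \phid(1-G'(\thr)) + \epsilon(1-G'(\thr)) + O(\epsilon^2)$. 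Because $1-G'(\thr) \ge A > A(1-\alpha)$ (strict as $\alpha > 0$), the net first-order change in revenue is strictly positive, contradicting the optimality of $(G, \phi)$.

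The main obstacle is the last step: one must carefully track $\tilde\thr$ and show $G'(\tilde\thr) - G'(\thr) = O(\epsilon)$ so that the disclosure-fee gain $\epsilon[(1-G'(\thr)) - A(1-\alpha)]$ dominates. The bound $\epsilon < \delta$ is what rules out a jump of $\tilde\thr$ to a high value inside $(\mu+\phid, \ub)$, and the local behavior of $G'$ near $\thr$---identical to that of $G$ by construction---combined with $h'(\thr) \le 1$, supplies the needed estimate.
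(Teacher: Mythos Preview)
Your first step---replacing the mass of $G$ on $[\mu+\phid,\ub]$ by two atoms at $\mu+\phid$ and $\ub$ to produce $G'$ with the same revenue---is correct and the verification of \eqref{eq-relaxed-highest} for $G'$ is clean. The gap is in the second step, where you claim that raising the disclosure fee to $\phid+\epsilon$ moves the weak-highest threshold by only $O(\epsilon)$. Write $g(x)=x-E_{G'}[s\mid s\le x]$. The condition $h'(\thr)\le 1$, i.e.\ $g'(\thr)\ge 0$, is a one-sided sign condition, not a positive lower bound, and it does not rule out other zeros of $g(\cdot)-(\phid+\epsilon)$ far from $\thr$. Concretely, suppose $G$ (and hence $G'$) is exponential on an interval $[\thr,\thr+c]\subset[\thr,\mu+\phid)$; then $g\equiv\phid$ there, every point of $[\thr,\thr+c]$ is a weak-highest threshold for the original fee, and the optimal choice in the relaxed problem is the lowest one, $\thr$. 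When you raise the fee to $\phid+\epsilon$, the new weak-highest threshold jumps to $\tilde\thr\ge\thr+c$, so $G'(\tilde\thr)-G'(\thr)=G'(\thr)(e^{c/\phid}-1)$ is bounded away from $0$, and the first-order revenue comparison you wrote down fails: the loss $\phid\bigl(G'(\tilde\thr)-G'(\thr)\bigr)$ is $O(1)$, not $O(\epsilon^2)$. Your bound $\epsilon<\delta$ only rules out thresholds in $[\mu+\phid,\ub)$; it says nothing about thresholds in $(\thr,\mu+\phid)$.

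The paper avoids this difficulty entirely by a different construction. It first derives the identity $\hat R(G,\phi,\thr)=\int_{\thr}^{\mu+\phid}G(s)\,ds$ (via integration by parts and the binding constraints), and then defines $G'(s)=\alpha G(s)$ for $s\le\mu+\phid$ and $G'$ flat on $[\mu+\phid,\ub)$, with $\alpha>1$ chosen so that $E_{G'}[s]=\mu$. Because scaling leaves the ratios $\int_{\lb}^{\thr'}G/G(\thr')$ unchanged, $\thr$ remains a weak-highest threshold for $(G',\phi)$ with the \emph{same} fee $\phid$, and the revenue becomes $\alpha\int_{\thr}^{\mu+\phid}G(s)\,ds$, which is strictly larger. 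No perturbation of $\phid$ and no control of a moving threshold are needed.
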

\begin{proof}
	Using integration by parts, the testing fee of an optimal test-fee structure is

$$	\phit = \int_{\mu + \phid}^\ub [s - (\mu+\phid)] dG(s)  = - \int_{\mu + \phid}^\ub G(s)ds + (\ub-(\mu+\phid)) $$
\noindent Thus, revenue is
$$	\hat R(G,\phi,\thr) = -\phid G(\thr) - \int_{\mu + \phid}^\ub G(s)ds + (\ub-\mu). $$
\noindent Since $\thr$ is a weak-highest equilibrium threshold, $\phid G(\thr) = \int_\lb^{\thr}G(s)ds$.  Substituting this equality into revenue, we have
\begin{align}
    	\hat R(G,\phi,\thr) &= - \int_\lb^{\thr} G(s)ds - \int_{\mu + \phid}^\ub G(s)ds + (\ub-\mu) = \int_{\thr}^{\mu + \phid} G(s)ds,\label{lemma-flat-above-mu-plus-c-eq1}
\end{align}

	\noindent where the second equality followed from the constraint that the integral of $G$ is $1-\mu$.
	
	Now consider an optimal test-fee structure $(G,\phi)$ with a weak-highest equilibrium threshold $\thr$, and suppose for contradiction that $G(s) > G(\mu+\phid)$ for some  $s \in [\mu+\phid,\ub)$.  Construct a distribution $G'$ as follows.  Let $G'(s) = \alpha G(s)$ for some $\alpha$ and all $s \leq \mu+\phid$, and $G'(s) = G'(\mu+\phid)$ for all $s \in [\mu+\phid,\ub)$.  By the assumption that $G(s) > G(\mu+\phid)$ for some  $s \in [\mu+\phid,\ub)$, there exists $\alpha > 1$ such that the areas under $G$ and $G'$ are equal.  Define $\phid' = \phid$, and $\phit'$ so that the upper bound on the testing fee \eqref{eq-phiT-relaxed-opt} holds with equality for distribution $G'$.
	
	We now show that $\thr$ is a weak-highest equilibrium threshold of the test-fee structure $(G',\phi')$, and gives higher revenue than $(G,\phi)$.  Indeed, for any $\thr'$ such that $\thr \leq \thr' \leq \mu + \phid$, since $G'$ is equal to $G$ multiplied by $\alpha$, we have
	\begin{align*}
	\frac{\int_\lb^{\thr'} G(s)ds}{G(\thr')} = \frac{\int_\lb^{\thr} G(s)ds}{G(\thr)}.
	\end{align*}
	\noindent Thus, from integration by parts, $\thr' - E_{G'}[s | s \leq \thr'] = \thr' - E_{G}[s | s \leq \thr'] \geq \phid$, with equality at $\thr' = \thr$.  In addition, for all $\thr' \geq \mu+\phid$, since $G'$ is flat we have 
	\begin{align*}
	\frac{\int_\lb^{\thr'} G'(s)ds}{G'(\thr')} \geq \frac{\int_\lb^{\mu+\phid} G(s)ds}{G(\mu+\phid)} \geq \phid.
	\end{align*}
	\noindent It remains to show that the revenue of $(G',\phi')$ is higher than that of $(G,\phi)$.  This fact follows directly from \eqref{lemma-flat-above-mu-plus-c-eq1}, since $G' > G$ below $\mu + \phid$.\footnote{If $\tau=\mu+\phid$, $\hat{R}(G,\phi,\thr)=0$. As we have shown in the proof of \autoref{lemma-nonnegative}, the optimal revenue is strictly positive. So in an optimal test-fee structure $(G,\phi)$ with threshold $\tau$, $\tau<\mu+\phid$.}
\end{proof}

Equipped with \autoref{lemma-flat-above-mu-plus-c}, we solve the relaxed problem \eqref{eq-mpc-relaxed}.

\begin{lemma}\label{lem:the-relaxed-problem}

The value of \eqref{eq-mpc-relaxed} is $(\ub-\mu)(1-e^{\frac{\lb-\mu}{\ub-\mu}})$, and it has a unique solution $\phit^*=0$, $\phid^*=\ub-\mu$, and
   	\begin{equation*}
G^*(s)=\begin{cases} e^{\frac{\lb-\mu}{\ub-\mu}}&\text{ if }s\in[\lb,\lb+\ub-\mu)\\
e^{\frac{s-\ub}{\ub-\mu}}&\text{ if }s\in[\lb+\ub-\mu,\ub].
\end{cases}
\end{equation*}    
\end{lemma}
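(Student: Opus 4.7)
The plan is to use \autoref{lemma-flat-above-mu-plus-c} to reduce the relaxed problem to pinning down the shape of $G$ on $[\lb,\mu+\phid]$, and then peel off the constraints one by one. That lemma already rewrites the objective as $\hat R(G,\phi,\tau)=\int_{\tau}^{\mu+\phid} G(s)\,ds$ (with $\phit$ set to its maximum allowed by \eqref{eq-phiT-relaxed-opt}) and shows $G$ is flat on $[\mu+\phid,\ub)$, possibly with an atom at $\ub$. So it remains to determine $G$ on $[\lb,\mu+\phid]$ together with the scalars $\phid$ and $\tau$.

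The next step is an exponential envelope argument on $[\tau,\mu+\phid]$. Subtracting the equality $\phid G(\tau)=\int_\lb^\tau G(s)\,ds$ from \eqref{eq-relaxed-highest} at any $\tau'$ in this range yields
\[ G(\tau')\le G(\tau)+\frac{1}{\phid}\int_\tau^{\tau'}G(s)\,ds. \]
A Gronwall inequality, applied to the right-continuous function $G$, then gives the pointwise bound $G(\tau')\le G(\tau)e^{(\tau'-\tau)/\phid}$. Since $\hat R$ is the integral of $G$ on $[\tau,\mu+\phid]$, the maximum is attained exactly when this envelope binds, so the optimal $G$ is the exponential $G(s)=G(\tau)e^{(s-\tau)/\phid}$ on that interval; moreover one checks directly that \eqref{eq-relaxed-highest} then holds with equality throughout $[\tau,\mu+\phid]$, so this choice is feasible.

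Next I use the constraint $\int_\lb^{\ub}G(s)\,ds=\ub-\mu$ (equivalent to $E_G[s]=\mu$). Splitting into $[\lb,\tau]$, $[\tau,\mu+\phid]$ and $[\mu+\phid,\ub]$ and substituting the equality $\int_\lb^\tau G=\phid G(\tau)$, the exponential form on the middle piece, and $G\equiv G(\mu+\phid)$ on the last, the sum telescopes to $G(\mu+\phid)\cdot(\ub-\mu)$. Hence $G(\mu+\phid)=1$, so there is no atom at $\ub$, and \eqref{eq-phiT-relaxed-opt} then forces $\phit^*=0$. It follows that $G(\tau)=e^{(\tau-\mu-\phid)/\phid}$ and $\hat R=\phid\bigl(1-e^{(\tau-\mu-\phid)/\phid}\bigr)$, which is strictly decreasing in $\tau$. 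The equality at $\tau$ together with $G\le G(\tau)$ on $[\lb,\tau]$ requires $\tau\ge \lb+\phid$, with equality forcing $G\equiv G(\tau)$ on all of $[\lb,\tau]$, i.e., a single atom at $\lb$. Setting $\tau=\lb+\phid$ reduces the problem to $\max_{\phid\in(0,\ub-\mu]}\phid\bigl(1-e^{(\lb-\mu)/\phid}\bigr)$; a short derivative computation (reducing to $e^y(1-y)<1$ for $y<0$) shows this objective is strictly increasing, so the unique maximizer is $\phid^*=\ub-\mu$, giving the stated $G^*$ and value $(\ub-\mu)(1-e^{(\lb-\mu)/(\ub-\mu)})$.

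The main obstacle is the Gronwall-envelope step, since $G$ is a CDF that may have atoms and need not be absolutely continuous. I would handle it by keeping everything in integral form and using right-continuity of $G$ to promote the almost-everywhere bound produced by Gronwall to a genuine pointwise bound. The rest of the argument is bookkeeping of constraints, a single univariate monotonicity check, and tracking when each inequality binds in order to extract uniqueness.
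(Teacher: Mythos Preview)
Your outline has the right ingredients, and the Gronwall bound you derive is exactly the paper's \autoref{lemma-boundspeed}. But the step in your second paragraph—``the maximum is attained exactly when this envelope binds, so the optimal $G$ is the exponential on $[\tau,\mu+\phid]$''—does not go through, and the third paragraph then builds on a false premise.

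The issue is not regularity of $G$ (which you flag as the main obstacle) but the mean constraint. You are treating the shape of $G$ on $[\tau,\mu+\phid]$ as free while holding $G(\tau)$ and the flat level $c:=G(\mu+\phid)$ fixed. Once those are fixed, however, the mean constraint $\int_\lb^\ub G=\ub-\mu$ pins down $\int_\tau^{\mu+\phid} G$ exactly, so there is nothing to ``maximize.'' If instead you allow $c$ to adjust, then imposing the exponential on $[\tau,\mu+\phid]$ and flatness on $[\mu+\phid,\ub)$ forces (as your own telescoping shows) $c=1$; but this is feasible only when $G(\tau)$ happens to equal $e^{(\tau-\mu-\phid)/\phid}$. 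For a generic starting $(G,\phi,\tau)$ the replacement is infeasible, and the optimal shape for \emph{fixed} $\phid<\ub-\mu$ and $\tau$ is in fact exponential on $[\tau,\tau_2]$ and flat on $[\tau_2,\mu+\phid]$ with $\tau_2<\mu+\phid$ and $c<1$—strictly beating the ``exponential throughout'' candidate at those parameters. So step~2 is false as a claim about every optimum, and your derivation of $c=1$, $\phit^*=0$ in step~3 rests on it.

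The paper avoids this by not arguing structurally. Having shown via \autoref{lemma-flat-above-mu-plus-c} that any optimum is flat on $[\mu+\phid,\ub)$, it picks a well-chosen intermediate $\tau'\in[\tau,\mu+\phid]$, uses the Gronwall bound to control $\int_\lb^{\tau'}G$, and combines this with flatness and the mean identity to get a \emph{lower bound} on $G(\tau)$ in terms of $c$. Substituting into $\hat R=\phit+\phid(1-G(\tau))$ and then maximizing the resulting expression over $\phid\in(0,\ub-\mu]$ and $c\in(0,1]$ yields the claimed value; both maximizers are on the boundary ($\phid=\ub-\mu$, $c=1$), and only then does the exponential form emerge. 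Your sequential argument—first force the exponential, then read off $c=1$, then optimize—reverses this order and breaks at the first step.
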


\begin{proof}
We first argue that the test-fee structure $(G^*,\phi^*)$ achieves the revenue bound $(\ub-\mu)(1-e^{\frac{\lb-\mu}{\ub-\mu}})$.  We will later show that $(\ub-\mu)(1-e^{\frac{\lb-\mu}{\ub-\mu}})$ is an upper bound on revenue and therefore $(G^*,\phi^*)$ is optimal.  	We start by verifying that $(G^*,\phi^*)$ is a feasible solution.

  \paragraph{Weak participation constraint is satisfied} Since $\mu + \phid^* =\ub$, the weak participation constraint \eqref{eq-phiT-relaxed-opt} holds with equality, i.e.,
	\begin{align*}
	\int_{\mu + \phid^*}^\ub [s - (\mu+\phid^*)] dG^*(s) = 0 = \phit^*.
	\end{align*}
	
	\paragraph{The constraint $E_G[s]=E_F[\theta]$ is satisfied} 		Directly from the definition of $G^*$, for any $\thr' \geq \lb+\ub-\mu$ we have
	\begin{align*}
	\int_\lb^{\thr'} G^*(s)ds &= (\ub-\mu)e^{\frac{\lb-\mu}{\ub-\mu}} + (\ub-\mu)e^{\frac{s-\ub}{\ub-\mu}}\bigg|_{\lb+\ub-\mu}^{\thr'} \nonumber = (\ub-\mu)e^{\frac{\thr'-\ub}{\ub-\mu}}.
	\intertext{Thus, in particular, $\int_\lb^{\ub} G^*(s) = \ub-\mu$ and hence, by integration by parts,} 
	E_{G^*}[s] &= \int_\lb^\ub sdG^*(s)= \ub-\int_\lb^\ub G^*(s)ds =\mu.
	\end{align*}
	\noindent Therefore, $E_G[s]=E_F[\theta]$.
	
	\paragraph{Weak-highest equilibrium constraint is satisfied} We show that $\thr = \lb+\ub-\mu$ is a weak-highest equilibrium threshold, i.e., it satisfies \eqref{eq-relaxed-highest}. For any $\thr'\geq \lb+\ub-\mu$,
	\begin{align*}
	E[s|s\leq \thr']&=\frac{\int_\lb^{\thr'} sdG^*(s)}{G^*(\thr')}=\frac{\thr'G(\thr')-\int_\lb^{\thr'} G^*(s)ds}{G^*(\thr')}= \thr'-\frac{(\ub-\mu)e^{\frac{\thr'-\ub}{\ub-\mu}}}{e^{\frac{\thr'-\ub}{\ub-\mu}}}=\thr'-\phid^*.
\end{align*}

\noindent Therefore, $\thr = \lb+\ub-\mu$ satisfies \eqref{eq-relaxed-highest} and is a weak-highest equilibrium threshold.
	
	\noindent  The lemma below shows that $E[s] = \mu$ and $\thr = \lb+\ub-\mu$ is a weak-highest equilibrium threshold.  
	
	Thus, the revenue of the test fee structure $(G^*,\phi^*)$ with weak-highest equilibrium threshold $\thr = \lb+\ub-\mu$ is
	\begin{align*}
	\phit^* + \phid^*(1-G(\lb+\ub-\mu)) = 0 +(\ub-\mu)(1-e^{\frac{\lb-\mu}{\ub-\mu}}).
	\end{align*}
	
	\paragraph{Upper bound on revenue} It remains to show that the revenue of any test-fee structure is at most $(\ub-\mu)(1-e^{\frac{\lb-\mu}{\ub-\mu}})$.
	
	For any $\thr'$ such that $\thr \leq \thr' \leq \mu+\phid$, the total area under $G$ is
	\begin{align}
	\ub-\mu &= \int_\lb^{\thr'} G(s) ds + \int_{\thr'}^{\ub} G(s) ds \nonumber \\
	& \leq \phid e^{\frac{1}{\phid}(\thr'-\thr)}G(\thr) + \int_{\thr'}^{\ub} G(s) ds \nonumber \\
	& \leq \phid e^{\frac{1}{\phid}(\thr'-\thr)}G(\thr) + (\ub-\thr')G(\mu+\phid), \nonumber
	\intertext{where the first inequality followed from \autoref{lemma-boundspeed}, and the second inequality followed since $G(s) \leq G(\mu+\phid)$ for all $s \leq \mu+\phid$, and $G(s) = G(\mu+\phid)$ for all $s \geq \mu+\phid$ by \autoref{lemma-flat-above-mu-plus-c}. Let $\thr' = \phid+\frac{\mu-\ub(1-G(\mu+\phid))}{G(\mu+\phid)}$.   Notice that $\thr'\in[\thr,\mu+\phid]$ because from the definition of $\thr$, $\thr=\phi_d+E[s|s\leq \thr]\leq \phi_d+E[s|s\leq \mu+\phi_d]=\thr'$, and $\mu<\ub$ implies $\frac{\mu-\ub(1-G(\mu+\phid))}{G(\mu+\phid)}<\mu$.  Therefore, by the above discussion we have}
	G(\thr) &\geq \frac{1}{\phid}e^{-\frac{1}{\phid}(\thr'-\thr)}((\ub-\mu) - (\ub-\thr')G(\mu+\phid)) \nonumber \\
	&= \frac{1}{\phid}e^{-\frac{1}{\phid}(\thr'-\thr)}((\ub-\mu) - (\frac{\ub-\mu}{G(\mu+\phid)}-\phid)G(\mu+\phid)) \nonumber \\
	& =  e^{-\frac{1}{\phid}(\thr'-\thr)} G(\mu+\phid) \nonumber \\
	& \geq  e^{-\frac{1}{\phid}(\ub-\lb-\frac{\ub-\mu}{G(\mu+\phid)})} G(\mu+\phid),\label{theorem-profit-supremum-binary-eq1}
	\end{align}
	\noindent where the last inequality followed since $\lb+\phid \leq \thr$.

	We now use \eqref{theorem-profit-supremum-binary-eq1} and \autoref{lemma-flat-above-mu-plus-c} to bound revenue.  Since by \autoref{lemma-flat-above-mu-plus-c}, the distribution is flat above $\mu+\phid$, the testing fee is
	\begin{align*}
	\phit & =  \int_{\mu + \phid}^\ub [s - (\mu+\phid)] dG(s) = (1-G(\mu+\phid))(\ub-(\mu+\phid))
	\end{align*}
	\noindent Thus revenue is
	\begin{align*}
	\hat R(G,\phi,\thr)\leq \phid\left[1-e^{-\frac{1}{\phid}(\ub-\lb-\frac{\ub-\mu}{G(\mu+\phid)})} G(\mu+\phid)\right] + (1-G(\mu+\phid))(\ub-(\mu+\phid)).
	\end{align*}
	The above expression is non-decreasing in $\phid$.  Since $\phid \leq \ub-\mu$, an upper bound on revenue is obtained by substituting $\phid = \ub-\mu$ into the above expression, which yields
	\begin{align*}
	  (\ub-\mu)(1-e^{-\frac{\ub-\lb}{\ub-\mu}+\frac{1}{G(\mu+\phid)}} G(\mu+\phid)).  
	\end{align*}
	This expression is non-decreasing in $G(\mu+\phid)$.  Since $G(\mu+\phid) \leq 1$, an upper bound on revenue is obtained by substituting $G(\mu+\phid) =1$ into the above expression, which yields $(\ub-\mu)(1-e^{\frac{\lb-\mu}{\ub-\mu}})$, 
	completing the proof.
\end{proof}

\begin{proof}[Proof of \autoref{proposition-arbitrary-prior-exponential}\ref{Bullet-Surplus}]
    The solution to the relaxed problem in \autoref{lem:the-relaxed-problem} is $(\ub-\mu)(1-e^{\frac{\lb-\mu}{\ub-\mu}})$.  Therefore $(\ub-\mu)(1-e^{\frac{\lb-\mu}{\ub-\mu}})$ is an upper bound on the revenue of any test-fee structure. Moreover, if the support of the prior is binary (i.e. $\{\lb,\ub\}$), $G\in\Gamma(F)$ is equivalent to $G\in\Delta[\lb,\ub]$ and $E_G[s]=E_F[s]$, so the bound is attained in this case.
\end{proof}

\subsubsection{Proof of \autoref{proposition-arbitrary-prior-exponential}\ref{Bullet-TestingFee} on p. \pageref{proposition-arbitrary-prior-exponential}}

\begin{proof}[Proof of \autoref{proposition-arbitrary-prior-exponential}\ref{Bullet-TestingFee}]
  In a robustly optimal test-fee structure, the constraint  \eqref{eq-phiT-relaxed-opt} must bind, otherwise the intermediary can strictly increases the revenue by increasing $\phit$, so $\phit=\int_{\mu + \phid}^{\bar{\theta}} [s - (\mu+\phid)] dG$. Notice that $\int_{\mu + \phid}^{\bar{\theta}} [s - (\mu+\phid)] dG>0$ iff $G(\mu+\phid)<1$, which then implies the result.

\end{proof}

\subsubsection{Proof of \autoref{proposition-arbitrary-prior-exponential}\ref{Bullet-StrictlyPositive} on p. \pageref{proposition-arbitrary-prior-exponential}}

We prove \autoref{proposition-arbitrary-prior-exponential}\ref{Bullet-StrictlyPositive} in two steps.  We first identify test-fee structures that are robustly optimal among ones with zero disclosure fees.  We show that it is in fact robustly optimal to use a fully revealing test.  We will then show that such a test-fee structure can be improved upon using a small disclosure fee.

\begin{lemma}\label{proposition-zero-disclosure-fee-then-full-revelation}
	The test-fee structure $(F,\phi)$ with $\phit=\int_{\mu}^\ub[s-\mu]dF(s)$ and $\phid = 0$ is robustly optimal among all test-fee structures with zero disclosure fee.
\end{lemma}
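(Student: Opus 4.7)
The plan is to first observe that setting $\phid=0$ collapses the intermediary's revenue to the testing fee itself, and then to solve the resulting one-dimensional problem by invoking a convexity argument on mean-preserving contractions.

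First, I would note that when $\phid=0$, the revenue in any equilibrium equals $\sigma_\testfee \,\phit$, irrespective of the disclosure strategy $\sigma_\disclosurefee$. Hence the revenue guarantee of a test-fee structure $(G,(\phit,0))$ is $\phit$ whenever the agent tests with probability $1$ in every equilibrium, and is at most $0$ otherwise. By \Cref{lemma-worker-takes-test-wp1-in-all}, the former situation holds exactly when the participation constraint \eqref{eq:2} is satisfied, which with $\phid=0$ reads
\begin{equation*}
\phit < \int_\mu^{\ub} (s-\mu)\,dG(s).
\end{equation*}
Passing to the relaxed problem in the sense of \Cref{lemma-relaxvalue}, the inequality becomes weak, and maximizing the revenue guarantee over all test-fee structures with $\phid=0$ is equivalent to
\begin{equation*}
\sup_{G\in \Gamma(F)}\,\int_\mu^{\ub}(s-\mu)\,dG(s).
\end{equation*}

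Second, the objective is the expectation of the convex function $s\mapsto (s-\mu)^+$. Since every $G\in\Gamma(F)$ is a mean-preserving contraction of $F$, the standard characterization of mean-preserving contractions (the Rothschild-Stiglitz inequality underlying \Cref{Proposition-MPC}) yields
\begin{equation*}
\int_\mu^{\ub}(s-\mu)\,dG(s) \leq \int_\mu^{\ub}(s-\mu)\,dF(s),
\end{equation*}
with equality when $G=F$. Hence the relaxed supremum is attained at the fully revealing test $G=F$ with $\phit=\int_\mu^{\ub}(s-\mu)\,dF(s)$, establishing that this test-fee structure is robustly optimal within the class with $\phid=0$.

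The only delicate point is verifying that the candidate $(F,(\phit,0))$ fits into the relaxed framework, i.e., that the weak participation constraint and the weak-highest equilibrium threshold condition are both satisfied; the first is immediate since it holds with equality by construction, and the second is degenerate once $\phid=0$ (the condition $\thr-\phid=E_G[s\mid s\leq\thr]$ is met at $\thr=\underline{s}_G$ and is slack for higher thresholds, so all scores above $\underline{s}_G$ disclose and the disclosure behavior is irrelevant to revenue). The main conceptual obstacle, handled by the appeal to \Cref{lemma-relaxvalue}, is ensuring that the passage to weak inequalities does not alter the value of the supremum, so that optimality in the relaxed problem genuinely coincides with robust optimality in the original problem.
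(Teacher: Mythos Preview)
Your proposal is correct and follows essentially the same approach as the paper: reduce the revenue guarantee with $\phid=0$ to maximizing $\int_\mu^{\ub}(s-\mu)\,dG$ over $G\in\Gamma(F)$, then use the mean-preserving contraction constraint to show $G=F$ is optimal. The only cosmetic difference is that the paper applies integration by parts to rewrite the objective as $(\ub-\mu)-\int_\mu^{\ub}G(s)\,ds$ and invokes the integral form of the MPC constraint at $\thr=\mu$, whereas you recognize the objective as the expectation of the convex function $(s-\mu)^+$ and appeal directly to the Rothschild--Stiglitz characterization; these are equivalent formulations of the same inequality.
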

\begin{proof}
	
	Consider a test-fee structure with zero disclosure fee. The revenue is
	\begin{align*}
	&\int_{\mu}^\ub (s-\mu)dG(s) = (\ub-\mu)-\int_\mu^\ub G(s)ds.
	\end{align*}
	\noindent Thus revenue is maximized by minimizing $\int_\mu^\ub G(s)ds$.
	
	The MPC constraints require that $\int_{\thr}^\ub G(s)ds\geq \int_{\thr}^\ub F(s)ds$ for all $\thr\in[\lb,\ub]$. Specifically at $\thr=\mu$, we have
	$$\int_\mu^\ub G(s)ds\geq \int_\mu^\ub F(s)ds.$$
	Thus the optimal revenue is at most
	$$(\ub-\mu)-\int_\mu^\ub F(s)ds$$
	which is obtained by setting $G=F$.
\end{proof}

Equipped with \autoref{proposition-zero-disclosure-fee-then-full-revelation}, we prove \autoref{proposition-arbitrary-prior-exponential}\ref{Bullet-StrictlyPositive}.

\begin{proof}[Proof of \autoref{proposition-arbitrary-prior-exponential}\ref{Bullet-StrictlyPositive}]
Consider the revenue from charging a positive disclosure fee $\phid'$, and reducing the testing fee to $\phit'$ so that the upper bound on testing fee \eqref{eq-phiT-relaxed-opt} binds, ensuring participation with probability one.  This change involves a loss and a gain in revenue.  The loss is from lower testing fee, which is the difference between $\ub-\mu - \int_\mu^\ub G(s)ds$ and $\ub-(\mu+\phid') - \int_{\mu+\phid'}^\ub G(s)ds$. By algebra, the loss is at most $\phid'(1-G(\mu))$.

The gain from charging a positive disclosure fee is $\phid'(1-G(\thr))$, where $\thr$ is a weak-highest equilibrium threshold.  Notice that since $\lb$ is in the support of the distribution, we have $G(\thr')>0$ for all $\thr' > \lb$.  Thus, by integration by parts we have 
	\begin{align*}
	\thr' - E[s | s \leq \thr'] = \frac{\int_\lb^{\thr'} G(s)ds}{G(\thr')} > 0.
	\end{align*}
	\noindent for all $\thr' >\lb$.  Therefore, for any $\thr' > \lb$, there exists a small enough disclosure fee such that any weak-highest equilibrium threshold is below $\thr'$.  In particular, for a small enough disclosure fee, any weak-highest equilibrium threshold $\thr$ is below $\mu$ and satisfies $G(\thr) \leq G(\mu)$.  As a result, for small enough disclosure fee, the gain is at least the loss.	Further, if $G(s) > G(\mu)$ for some $s \in [\mu,\ub)$, then there exists $\phid$ such that the loss is strictly less than $\phid'(1-G(\mu))$.  	This is the case if $G$ is non-binary.  As a result, for a non-binary $G$, there exists $\phi'$, $\phid' > 0$, such that the revenue of $(G,\phi')$ is strictly larger than the revenue of $(G,\phi)$.
	
	Suppose first that the prior $F$ is non-binary.  By \autoref{proposition-zero-disclosure-fee-then-full-revelation}, the test-fee structure $(F,\phi)$ where $\phid = 0$ is optimal among all test-fee structures with zero disclosure fee.  The argument above shows that there exists $\phi'$ with positive disclosure fee and such that the revenue of $(F,\phi')$ is strictly higher than that of $(F,\phi)$, and hence of any test-fee structure with zero disclosure fee.
	
	Now suppose that the prior $F$ is binary.     With binary support, the mean-preserving contraction constraints become $E_G[s]=E_F[\theta]$.  Therefore, by \autoref{lem:the-relaxed-problem}, the optimal revenue is $(\ub-\mu)(1-e^{\frac{\lb-\mu}{\ub-\mu}})$.  By \autoref{proposition-zero-disclosure-fee-then-full-revelation}, the optimal revenue among test-fee structures with zero disclosure fee is obtained by full revelation.  The revenue is the testing fee
	\begin{align*}
	\ub-\mu - \int_{\mu}^{\ub} F(s)ds = (\ub-\mu) - \frac{(\ub-\mu)^2}{\ub-\lb} =\frac{ (\ub-\mu)(\mu-\lb)}{\ub-\lb},
	\end{align*}
	which is strictly less than the optimal revenue $(\ub-\mu)(1-e^{\frac{\lb-\mu}{\ub-\mu}})$ for all $\mu\in(\lb,\ub)$.  Since $\ub$ is in the support of the distribution, it ensures that $\mu\in(\lb,\ub)$.
\end{proof}

\subsubsection{Proof of \autoref{proposition-arbitrary-prior-exponential}\ref{Bullet-ExistenceSES} on p. \pageref{proposition-arbitrary-prior-exponential}}

We use two lemmas in the proof of \autoref{proposition-arbitrary-prior-exponential}\ref{Bullet-ExistenceSES}.  The first lemma bounds the rate at which the integral of a score distribution can grow given the \eqref{eq-relaxed-highest} constraint.

\begin{lemma}\label{lemma-boundspeed}
	Suppose that $\phid > 0$.  Let $\thr$ be a weak-highest equilibrium threshold. Then for any $\thr_a$ and $\thr_b$ where $\thr \leq \thr_a \leq \thr_b$,
	$$\int_{\lb}^{\thr_b}G(s)ds\leq e^{\frac{1}{\phid}(\thr_b-\thr_a)}\int_\lb^{\thr_a }G(s)ds,$$ with equality if and only if \eqref{eq-relaxed-highest} holds with equality for all thresholds in $[\thr_a,\thr_b]$.
\end{lemma}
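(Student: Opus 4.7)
The plan is to rewrite \eqref{eq-relaxed-highest} as a differential inequality for the continuous nondecreasing function $H(\tau'):=\int_{\underline{\theta}}^{\tau'}G(s)\,ds$. By integration by parts, $E_G[s\mid s\leq \tau']=\tau'-H(\tau')/G(\tau')$, so \eqref{eq-relaxed-highest} is equivalent to $\phi_d G(\tau')\leq H(\tau')$ for every $\tau'\geq \tau$, with equality at $\tau'=\tau$. Since $\phi_d>0$, the equality case at $\tau$ forces $G(\tau)>0$, and therefore $H(\tau)=\phi_d G(\tau)>0$; by monotonicity $H$ is bounded below by a positive constant on $[\tau,\tau_b]$.

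Next I would observe that $H$ is absolutely continuous with $H'=G$ a.e., so the constraint $\phi_d G\leq H$ translates to $(\ln H)'(\tau')\leq 1/\phi_d$ for a.e.\ $\tau'\geq \tau$. Because $H$ stays bounded away from zero on $[\tau_a,\tau_b]$, the composition $\ln H$ is itself absolutely continuous there, and the fundamental theorem of calculus gives
\[
\ln H(\tau_b)-\ln H(\tau_a)=\int_{\tau_a}^{\tau_b}\frac{G(s)}{H(s)}\,ds\leq \frac{\tau_b-\tau_a}{\phi_d}.
\]
Exponentiating yields $H(\tau_b)\leq e^{(\tau_b-\tau_a)/\phi_d}H(\tau_a)$, which is exactly the claimed bound.

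For the equality clause, equality in the integrated bound holds if and only if $(\ln H)'=1/\phi_d$ almost everywhere on $[\tau_a,\tau_b]$, equivalently $\phi_d G(s)=H(s)$ for a.e.\ $s$ in this interval. To upgrade a.e.\ equality to pointwise equality, I would invoke right-continuity of $G$ and continuity of $H$: both sides of $\phi_d G(s)=H(s)$ are right-continuous, so agreement on a dense subset propagates to agreement on all of $[\tau_a,\tau_b]$, which is precisely the statement that \eqref{eq-relaxed-highest} binds throughout the interval. The only delicate point in the whole argument is keeping $H$ strictly positive so that $\ln H$ is well-defined, and this reduces to the one-line observation in the opening paragraph.
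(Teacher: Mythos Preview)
Your argument is correct and follows essentially the same route as the paper: rewrite \eqref{eq-relaxed-highest} via integration by parts as $\phi_d\,G(\tau')\le H(\tau')$, recognize this as the bound $(\ln H)'\le 1/\phi_d$, integrate over $[\tau_a,\tau_b]$, and exponentiate. You are more explicit than the paper about the technical points---positivity of $H$ so that $\ln H$ is defined, absolute continuity so the fundamental theorem of calculus applies, and the upgrade from a.e.\ to pointwise equality via right-continuity of $G$---whereas the paper treats these formally; one small remark is that your sentence ``the equality case at $\tau$ forces $G(\tau)>0$'' is really an observation that the conditional expectation in \eqref{eq-relaxed-highest} is only well-defined when $G(\tau)>0$, not a deduction from $\phi_d G(\tau)=H(\tau)$ alone (which is consistent with both sides vanishing).
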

\begin{proof}
 	Using integration by parts, the weak-highest equilibrium threshold constraint \eqref{eq-relaxed-highest} can be written as
	\begin{align}
	&\phid \leq \thr' - E[s | s \leq \thr'] = \frac{\int_\lb^{\thr'} G(s)ds}{G(\thr')},\nonumber
	\intertext{for all $\thr' \geq \thr$, with equality at $\thr' = \thr$.  The right hand side is the inverse of the derivative of $\ln(\int_\lb^{\thr'} G(s)ds)$ with respect to $\thr'$.  Thus,}
	&\frac{d}{d\thr'}(\ln(\int_\lb^{\thr'} G(s)ds))\leq \frac{1}{\phid}. \label{eq-derivative-of-log-upper-bound}
	\intertext{Integrating from $\thr_a$ to $\thr_b$,}
	&\ln(\int_\lb^{\thr_b} G(s)ds) - \ln(\int_\lb^{\thr_a} G(s)ds) \leq \frac{1}{\phid}(\thr_b-\thr_a).\nonumber
	\intertext{Raising both sides to the power of $e$,}
	&\frac{\int_\lb^{\thr_b} G(s)ds}{\int_\lb^{\thr_a} G(s)ds} \leq e^{\frac{1}{\phid}(\thr_b-\thr_a)},\nonumber
	\end{align}
	\noindent with equality if and only if \eqref{eq-relaxed-highest} holds with equality for all thresholds in $[\thr_a,\thr_b]$.
\end{proof}

The following lemma is the main step in the proof of \autoref{proposition-arbitrary-prior-exponential}\ref{Bullet-ExistenceSES}.  It shows that for any test-fee structure, there exists a test-fee structure that is in the step-exponential-step class and has a weakly higher revenue guarantee.  The lemma further establishes that \emph{any} robustly optimal test must be exponential over an interval.

\begin{lemma}\label{lemma-exponential-middle}
        For any test-fee structure $(G,\phi)$ with $\phid > 0$ and a corresponding weak-highest threshold $\tau_1$, there exists a mean-preserving contraction $G'$ of $G$, a fee structure $\phi'$, and a threshold $\tau_1'$ such that the test-fee structure $(G',\phi')$ is in the step-exponential-step class, $\tau_1'$ is a weak-highest threshold and the robust revenue of $(G',\phi',\tau_1')$ is at least the robust revenue of $(G,\phi,\tau_1)$.      Further,  if $(G,\phi)$ is robustly optimal and has weak-highest equilibrium threshold $\thr_1$, then $\phid > 0$ and there exists a threshold $\thr_2\in[\thr_1,\mu+\phid]$, such that $G$ is exponential from $\thr_1$ to $\thr_2$, i.e., 
    \begin{align*}
        G(s) = G(\thr_1)e^{\frac{1}{\phid}(s - \thr_1)}, \forall s \in [\thr_1,\thr_2],
        \intertext{and is flat from $\thr_2$ to $\mu + \phid$, i.e.,}
        G(s) = G(\mu + \phid), \forall s \in [\thr_2,\mu+\phid].
    \end{align*}
\end{lemma}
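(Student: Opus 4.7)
The plan is to prove part (a) by constructing a step-exponential-step $G'$ that is a mean-preserving contraction of $G$, and part (b) by an improvement argument built on the same construction.

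\textbf{Construction for (a).} Given $(G,\phi,\thr_1)$ with $\phid>0$ satisfying weak-HE at $\thr_1$, keep $\phid'=\phid$ and $\thr_1'=\thr_1$, and define $G'$ piecewise: (i) pool $G|_{[\lb,\thr_1]}$ into a single atom of mass $G(\thr_1)$ at $\thr_0:=\thr_1-\phid$, which equals $E_G[s\mid s\leq\thr_1]$ by weak-HE; (ii) on $[\thr_1,\thr_2]$, set $G'(s):=G(\thr_1)e^{(s-\thr_1)/\phid}$, where $\thr_2\leq\mu+\phid$ is chosen so that $G'(\thr_2)=G(\mu+\phid)$; feasibility of $\thr_2\leq\mu+\phid$ follows by combining \autoref{lemma-boundspeed} (with $\tau_a=\thr_1$, $\tau_b=\mu+\phid$) with the weak-HE inequality at $\mu+\phid$ to deduce $G(\mu+\phid)\leq G(\thr_1)e^{(\mu+\phid-\thr_1)/\phid}$; (iii) on $[\thr_2,\thr_3)$, let $G'$ be constant at $G(\mu+\phid)$; and (iv) place an atom of mass $1-G(\mu+\phid)$ at $\thr_3\in(\mu+\phid,\ub]$ with $\thr_3$ pinned down by $E_{G'}[s]=\mu$. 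Set $\phit'$ so that \eqref{eq-phiT-relaxed-opt} binds.

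\textbf{Verifications.} Weak-HE at $\thr_1$ for $(G',\phi')$ is immediate: $E_{G'}[s\mid s\leq\thr_1]=\thr_0=\thr_1-\phid$, the exponential rate $1/\phid$ makes \eqref{eq-relaxed-highest} bind throughout $[\thr_1,\thr_2]$, and the flat-then-atom structure above $\thr_2$ keeps the inequality strict. Revenue is weakly larger: the disclosure-fee revenue $\phid(1-G'(\thr_1))=\phid(1-G(\thr_1))$ is preserved, and $\phit'=(\thr_3-(\mu+\phid))(1-G(\mu+\phid))$ dominates $\phit$ because concentrating the upper tail at the single atom $\thr_3\geq E_G[s\mid s>\mu+\phid]$ weakly raises the option-value integral $\int_{\mu+\phid}^{\ub}(s-(\mu+\phid))\,dG$. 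The \textbf{main obstacle} is showing $G'\in\Gamma(F)$; since $G\in\Gamma(F)$, transitivity of MPC reduces this to $G'$ being an MPC of $G$, i.e., $\int_\lb^s G'(u)\,du\leq\int_\lb^s G(u)\,du$ pointwise with equality at $\ub$. Equality at $\ub$ holds by mean-matching; on $[\lb,\thr_1]$ the pooling to $\thr_0$ is a standard Jensen contraction; the integrals agree at $\thr_1$ by weak-HE; and on $(\mu+\phid,\ub]$ the upper pooling generates slack. The delicate range is $[\thr_1,\mu+\phid]$, where \autoref{lemma-boundspeed} gives $\int_\lb^s G\leq\phid G(\thr_1)e^{(s-\thr_1)/\phid}=\int_\lb^s G'$; I will reconcile this by a careful pointwise comparison that uses the slack accrued above $\mu+\phid$ from the upper pooling together with mean-matching at $\ub$ to close the ledger, adjusting $\thr_2$ if necessary so that the pointwise inequality is preserved throughout.

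\textbf{Necessity (part b).} Suppose $(G,\phi)$ is robustly optimal with weak-HE at $\thr_1$. Then $\phid>0$ follows from \autoref{proposition-arbitrary-prior-exponential}\ref{Bullet-StrictlyPositive}. Define $\thr_2:=\sup\{s\in[\thr_1,\mu+\phid]:\int_\lb^s G(u)\,du=\phid G(\thr_1)e^{(s-\thr_1)/\phid}\}$. The equality clause of \autoref{lemma-boundspeed} forces \eqref{eq-relaxed-highest} to bind with equality throughout $[\thr_1,\thr_2]$, and differentiating in $s$ yields $G(s)=G(\thr_1)e^{(s-\thr_1)/\phid}$ on that interval. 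Flatness of $G$ on $[\thr_2,\mu+\phid]$ follows by contradiction: any non-flat segment would permit the construction in (a), applied locally to this portion of $G$, to produce a strictly higher-revenue test-fee structure, contradicting optimality.
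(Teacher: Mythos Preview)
Your construction has a genuine gap in the MPC verification, and it is precisely the one you flag as the ``delicate range.'' On $[\thr_1,\thr_2]$ you correctly compute, via \autoref{lemma-boundspeed}, that
\[
\int_\lb^s G(u)\,du \;\leq\; \phid\,G(\thr_1)\,e^{(s-\thr_1)/\phid} \;=\; \int_\lb^s G'(u)\,du,
\]
but this is the \emph{wrong direction}: for $G'$ to be a mean-preserving contraction of $G$ you need $\int_\lb^s G' \leq \int_\lb^s G$ at every $s$. Your proposed reconciliation---using slack accrued above $\mu+\phid$ together with mean-matching at $\ub$---cannot repair a pointwise violation in the interior, and ``adjusting $\thr_2$'' does not help because the violation is present throughout $(\thr_1,\thr_2]$ whenever the weak-HE inequality is anywhere strict on that interval. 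So as written, your $G'$ need not lie in $\Gamma(G)$, and the argument stalls.

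The paper's construction differs in one essential respect: instead of keeping $G$ fixed below $\thr_1$ and starting the exponential at $G(\thr_1)$, it introduces a scaling parameter $\alpha\in[0,1]$, sets $G_\alpha=\alpha G$ on $[\lb,\thr_1]$, starts the exponential at $\alpha G(\thr_1)$, and leaves $G$ \emph{unchanged} above $\mu+\phid$. The value of $\alpha$ is chosen by the intermediate value theorem so that $\int_\lb^\ub G_\alpha = \int_\lb^\ub G$. With this scaling the MPC check goes through: below $\thr_1$ one has $G_\alpha\leq G$; above $\thr_2(\alpha)$ one has $G_\alpha\geq G$, so equality of total areas gives $\int_\lb^{\thr'} G_\alpha \leq \int_\lb^{\thr'} G$ there; and on $[\thr_1,\thr_2(\alpha)]$ one runs the exponential identity \emph{backward} from $\thr_2(\alpha)$, combining it with the inequality already established at $\thr_2(\alpha)$ and \autoref{lemma-boundspeed} applied to $G$. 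The revenue gain then comes from $G_\alpha(\thr_1)=\alpha G(\thr_1)\leq G(\thr_1)$ (higher disclosure probability), not from the testing fee as in your outline; the upper tail is kept intact precisely so that \eqref{eq-phiT-relaxed-opt} is unaffected. The final step-exponential-step $G'$ is obtained by pooling $G_\alpha$ below $\thr_1$ and above $\thr_2(\alpha)$, which is a further contraction. For the necessity statement, the paper simply observes that robust optimality forces $\alpha=1$, whence $G=G_1$ is already exponential on $[\thr_1,\thr_2]$ and flat on $[\thr_2,\mu+\phid]$; this replaces your sup-and-contradiction route with a one-line consequence of the construction.
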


\begin{proof}
Given a test-fee structure $(G,\phi)$ with a weak-highest threshold $\tau_1$, clearly by definition of $\tau_1$, we have $\tau_1\leq \mu+\phid$. Moreover, if $\tau_1=\mu+\phid$, the intermediary receives zero revenue, which is never optimal. So we consider $\tau_1<\mu+\phid$.

We first consider the case $G(\mu+\phid)=G(\tau_1)$, where $\tau_1$ is a weak-highest equilibrium threshold under $(G,\phi)$. Define $\tau_a=E_G[s|s\leq \tau_1]$, and $\tau_b=E_G[s|s> \mu+\phid]$. From the definition of $\tau_1$, we have $\tau_a=E_G[s|s\leq \tau_1]=\tau_1-\phid$. The mean constraint requires $G(\tau_1)\tau_a+(1-G(\tau_1))\tau_b=\mu$, which implies $G(\tau_1)=\frac{\tau_b-\mu}{\tau_b-\tau_a}$.

Given the test $G$, and to satisfy \eqref{eq-phiT-relaxed-opt}, the highest testing fee the intermediary can charge is $\phit=\int_{\mu+\phid}^1[s-(\mu+\phid)]dG(s)$. The disclosure fee being charged under equilibrium threshold $\tau_1$ is $\phid(1-G(\tau_1))$. So the intermediary's robust revenue bound under test $G$ is
\begin{align*}
    \int_{\mu+\phid}^\ub[s-(\mu+\phid)]dG(s)+\phid(1-G(\tau_1))&=\ub-\mu-\phid-\int_{\mu+\phid}^\ub G(s)ds+\phid-\phid G(\tau_1)\\
    &=\int_\lb^{\mu+\phid} G(s)ds-\int_{\lb}^{\tau_1}G(s)ds\\
&=\int_{\tau_1}^{\mu+\phid}G(s)ds\\
&=G(\tau_1)(\mu+\phid-\tau_1)\\
&=\frac{(\tau_b-\mu)(\mu-\tau_a)}{\tau_b-\tau_a}.
\end{align*}

Now we construct the following $G'$, which is in the class of step-exponential-step with the last step being degenerated:
$$G'(s)=\begin{cases}0&\text{ if }s\in[\lb,\tau_a)\\
e^\frac{\tau_a-\mu}{\tau_b-\mu}&\text{ if }s\in[\tau_a,\tau_a+\tau_b-\mu)\\
e^\frac{s-\tau_b}{\tau_b-\mu}&\text{ if }s\in[\tau_a+\tau_b-\mu,\tau_b]\\
1&\text{ if }s\in[\tau_b,\ub].
\end{cases}$$

It can be verified that $G'$ is a mean-preserving contraction of $G$, and $\tau_1'=\tau_a+\tau_b-\mu$ is a weak-highest equilibrium threshold under test-fee structure $(G',\phid',\phit')$ with $\phid'=\tau_b-\mu$ and $\phit'=0$. In this equilibrium, the intermediary's revenue is $(\tau_b-\mu)(1-e^{\frac{\tau_a-\mu}{\tau_b-\mu}})>\frac{(\tau_b-\mu)(\mu-\tau_a)}{\tau_b-\tau_a}$.

Now consider the other case $G(\mu+\phid)>G(\tau_1)$.  We construct a class of distributions $G_{\alpha}$ parametrized by $\alpha \in [0,1]$. 
    
    \begin{equation}\label{eq-step-exponential-step-improvement-1}
G_{\alpha}(s)=\begin{cases} \alpha G(s) &\text{ if }s \leq \thr_1 \\
\min\{\alpha G(\thr_1)e^{\frac{1}{\phid}(s - \thr_1)},G(\mu+\phid)\} & \text{ if } \thr_1 < s < \mu + \phid \\
G(s) &\text{ if } s \geq \mu + \phid.
\end{cases}
\end{equation}
 Notice that $G_{\alpha}$ is well-defined since $\phid > 0$, and is monotone non-decreasing and is between $0$ and $1$.  Therefore, $G_{\alpha}$ is a distribution.

We first show that there exists $\alpha \leq 1$ such that the areas under $G_{\alpha}$ and $G$ are equal.  We do so using a continuity argument.  Let $\thr_2(\alpha)$ be the lowest score $\thr' > \thr_1$ such that $G_{\alpha}(\thr') = G_{\alpha}(\mu+\phid)$. 
Consider $\alpha = 1$.  We have
\begin{align*}
    \int_\lb^{\thr_2(\alpha)} G_1(s)ds &= e^{\frac{1}{\phid}(\thr_2(\alpha)-\thr_1)}\int_\lb^{\thr_1} G_1(s)ds \\
    &= e^{\frac{1}{\phid}(\thr_2(\alpha)-\thr_1)}\int_\lb^{\thr_1} G(s)ds \\ 
    & \geq \int_\lb^{\thr_2(\alpha)} G(s)ds,
    \intertext{where the inequality followed from \autoref{lemma-boundspeed}.  As a result, since $G_{1}$ is weakly higher than $G$ for all scores above $\thr_2$, we have $\int_{\lb}^{\ub} G_1(s)ds \geq \int_{\lb}^{\ub} G(s)ds$.  Further,}
    \int_\lb^{\mu + \phid} G_0(s)ds &= 0 \leq \int_\lb^{\mu + \phid} G(s)ds.
\end{align*}
\noindent  As a result, since $G_0$ and $G$ are equal above $\mu + \phid$, we have $\int_{\lb}^{\ub} G_0(s)ds \leq \int_{\lb}^{\ub} G(s)ds$.  The area under $G_{\alpha}$ increases continuously in $\alpha$.  Therefore, there exists $\alpha \leq 1$ such that the areas under $G_{\alpha}$ and $G$ are equal.  For the rest of the proof fix such an $\alpha$.

We now show that $G_{\alpha}$ is a mean-preserving contraction of $G$.  Since $\alpha \leq 1$, the area under $G_{\alpha}$ up to any threshold $\thr' \leq \thr$ is weakly less than that for $G$.  For $\thr' \geq \thr_2(\alpha)$, since the total area under $G$ and $G_{\alpha}$ are equal and $G_{\alpha}$ is weakly higher than $G$ above $\mu + \phid$, we must have $\int_\lb^{\thr'} G_{\alpha}(s)ds \leq \int_\lb^{\thr'} G(s)ds$.  Finally, for any $\thr'$ such that $\thr \leq \thr' \leq \thr_2(\alpha)$ we have
\begin{align*}
        \int_\lb^{\thr'} G_{\alpha}(s)ds &= e^{-\frac{1}{\phid}(\thr_2-\thr')}\int_\lb^{\thr_2} G_{\alpha}(s)ds \leq e^{-\frac{1}{\phid}(\thr_2-\thr')}\int_\lb^{\thr_2} G(s)ds \leq \int_\lb^{\thr'} G(s)ds,
\end{align*}
\noindent where the inequality followed from \autoref{lemma-boundspeed}.  Thus, $G_{\alpha}$ is a mean-preserving contraction of $G$.

Test-fee structure $(G_{\alpha},\phi)$ has weakly higher revenue than $(G,\phi)$.  Since the two distributions are equal above $\mu + \phid$, the \eqref{eq-phiT-relaxed-opt} constraint is satisfied for $(G_{\alpha},\phi)$.  Further, $\thr$ is a weak-highest equilibrium threshold for $(G_{\alpha},\phi)$.  Since $\alpha \leq 1$, the disclosure probability in $(G_{\alpha},\phi)$ is weakly higher than $(G,\phi)$.  In fact, if $\alpha < 1$, the robust revenue of $(G_{\alpha},\phi)$ is more than that of $(G,\phi)$.

To see the first statement of the lemma, consider a score distribution $G'$ that is equal to $G_{\alpha}$ except that it pools the scores below $\thr_1$ and also pools the scores above $\thr_2$.  Formally,
$$G'(s)=\begin{cases}G_{\alpha}(\thr_1)&\text{ if }s \in [\thr_0,\thr_1],\\
G_{\alpha}(s)&\text{ if }s\in[\thr_1,\thr_2],\\
1&\text{ if }s \in [\thr_3,\ub],
\end{cases}$$
\noindent where $\thr_0$ and $\thr_3$ are such that the areas under $G'$ and $G_{\alpha}$ are equal.  Notice that the test-fee structure $(G',\phi)$ is in the step-exponential-step class, with a non-degenerated exponential part because $\thr_2>\thr_1$.  Further distribution $G'$ is a mean-preserving contraction of $G_{\alpha}$ and therefore $G$.  Finally, the robust revenue of the test-fee structure $(G',\phi)$ with weak-highest equilibrium threshold $\tau_1$ is equal to that of $(G_{\alpha},\phi)$, and therefore at least that of $(G,\phi)$.  This establishes the first statement of the lemma. 

To see the second statement, notice that if $(G,\phi)$ is optimal, then by \autoref{proposition-arbitrary-prior-exponential}\ref{Bullet-StrictlyPositive}, $\phid > 0$.  Further, it must be that $\alpha = 1$, as otherwise the revenue of $(G_{\alpha},\phi)$ is strictly higher.  Therefore, $G = G_1$, which means that $G$ is exponential from $\thr_1$ to $\thr_2$, and flat from $\thr_2$ to $\mu +\phid$, as claimed. 
\end{proof}

\begin{proof}[Proof of \autoref{proposition-arbitrary-prior-exponential}\ref{Bullet-ExistenceSES}]
Consider any robustly optimal test-fee structure $(G,\phi)$.  By \autoref{proposition-arbitrary-prior-exponential}\ref{Bullet-StrictlyPositive}, $\phid > 0$.  By \autoref{lemma-exponential-middle}, the robust revenue of $(G,\phi)$ is at most the robust revenue of some step-exponential-step test-fee structure $(G',\phi)$ where $G'$ is a mean-preserving contraction of $G$.  Therefore, $(G',\phi)$ is robustly optimal.

Next we show that in fact there is a unique equilibrium under any robustly optimal step-exponential-step test-fee structure with a slightly lower disclosure fee. Let $(G,\phi)$ be a robustly optimal step-exponential-step test-fee structure: 
\begin{align*}
G(s)=\begin{cases}g &\text{ if }s \in  [\thr_0,\thr_1) \\
ge^{(s - \thr_1)/(\thr_1-\thr_0)} & \text{ if } s \in [\thr_1,\thr_2)\\
1 &\text{ if } s\geq \thr_3,
\end{cases},
 \end{align*}
the disclosure fee is $\phid = \thr_1 - \thr_0$, and the testing fee is $\phit = (1-ge^{(\thr_2 - \thr_1)/(\thr_1-\thr_0)})(\thr_3 - (\mu + \phid))$. Notice that for any $\tau\in[\tau_1,\tau_2]$, 
$$E_G[s|s\leq \tau]=\tau-\frac{\int_\lb^\tau G(s)ds}{G(\tau)}=\tau-\frac{g(\tau_1-\tau_0)e^{\frac{\tau-\tau_1}{\tau_1-\tau_0}}}{ge^{\frac{\tau-\tau_1}{\tau_1-\tau_0}}}=\tau-\tau_1-\tau_0=\tau-\phid.$$
For $\tau>\tau_2$, $E_G[s|s\leq \tau]\leq \tau-\phid$. For $\tau\in[\tau_0,\tau_1)$,  $E_G[s|s\leq \tau]=\tau_0<\tau-\phid$.

  Consider a test-fee structure $(G,\phi')$ with $\phit'=\phit$ and $\phid'=\phid-\varepsilon$ for $\varepsilon<\tau_1-\tau_0$, then there exists a unique threshold $\thr'=\thr_1-\varepsilon$ satisfying \eqref{eq:robustimplementation}. Moreover, since $(G,\phi)$ satisfies \eqref{eq-phiT-relaxed-opt}, $(G,\phi')$ satisfies \eqref{eq:2} because $\int_{\mu + \phid}^{\bar{\theta}} [s - (\mu+\phid)] dG<\int_{\mu + \phid-\varepsilon}^{\bar{\theta}} [s - (\mu+\phid-\varepsilon)] dG$. Therefore, in the unique equilibrium under $(G,\phi')$, the asset is tested with probability 1 and the agent discloses all the scores above $\thr_1$.
\end{proof}

  \subsubsection{Proof of \autoref{prop-binaryrobustlyoptimal} on p. \pageref{prop-binaryrobustlyoptimal}}
\begin{proof}
    With binary support, the mean-preserving contraction constraints become $E_G[s]=E_F[\theta]$.  Therefore, the proposition follows from \autoref{lem:the-relaxed-problem}.
\end{proof}

\subsubsection{Proof of \autoref{proposition-log-concave-testing-fee-positive} on p. \pageref{proposition-log-concave-testing-fee-positive}}
We first simplify the step-exponential-step distribution identified in \autoref{Section-RobustlyOptimalTestFeeStructures} in the case where the testing fee is zero.

\begin{lemma}\label{lemma-log-concave-testing-fee-positive-lem1}
	Consider an optimal step-exponential-step test-fee structure defined in \eqref{eq-G(x)-arbitary-prior}, \eqref{eq-G(x)-arbitary-prior-feeD}, and \eqref{eq-G(x)-arbitary-prior-feeT}.  If $\phit^* = 0$, then $\thr_2 = \mu + \phid^*$ and $G^*(\thr_2) = 1$.  Further, if the prior distribution is log-concave, then the mean-preserving constraints are slack, $\int_\lb^{\thr'} G^*(s)ds < \int_\lb^{\thr'} F(s)ds$, for all $\thr'$ from $\thr_1$ to $\ub$.
\end{lemma}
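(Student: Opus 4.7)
The plan is to establish the two parts of the lemma separately, leveraging the structural results already developed in this section.

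For the first claim, I begin by observing that \autoref{proposition-arbitrary-prior-exponential}\ref{Bullet-TestingFee} gives $G^*(\mu + \phid^*) = 1$, since $\phit^* = 0$ forces $G^*$ to place probability $0$ on $(\mu + \phid^*, \ub]$. \autoref{lemma-exponential-middle} then tells us that $G^*$ is flat on $[\thr_2, \mu + \phid^*]$, so $G^*(\thr_2) = G^*(\mu + \phid^*) = 1$, establishing the second equality. To show $\thr_2 = \mu + \phid^*$, I rule out the remaining possibility $\thr_2 < \mu + \phid^*$ (in which case the exponential part already reaches $1$ at $\thr_2$, leaving a flat region at height $1$ until $\mu + \phid^*$). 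The idea is to perturb by extending the exponential region to $[\thr_1, \mu + \phid^*]$ via a smaller atom $g' \equiv e^{-(\mu + \phid^* - \thr_1)/\phid^*} < g$ at $\thr_0$, redistributing the freed mass exponentially across $(\thr_2, \mu + \phid^*]$. The resulting distribution is a mean-preserving contraction of $G^*$, hence by transitivity of the MPC order remains in $\sosd(F)$. Since $g' < g$, the revenue $\phid^*(1-g)$ strictly increases, contradicting optimality of $G^*$.

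For the second claim, I define $h(\thr') \equiv \int_\lb^{\thr'}(F(s) - G^*(s))\,ds$. Since $G^* \in \sosd(F)$ and $G^*, F$ share the same mean, $h \geq 0$ on $[\lb, \ub]$ with $h(\lb) = h(\ub) = 0$. Above $\mu + \phid^*$, $G^* \equiv 1 \geq F$ with strict inequality except at $\ub$, so $h$ is strictly decreasing and thus strictly positive on $[\mu + \phid^*, \ub)$. On $[\thr_1, \mu + \phid^*]$, $G^*(s) = g e^{(s-\thr_1)/\phid^*}$ is log-linear while $F$ is log-concave, so $\log F - \log G^*$ is concave on this interval and therefore vanishes at most twice. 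Should $h$ have an interior zero on $[\thr_1, \mu + \phid^*)$, then $h' = F - G^* = 0$ at that minimum; a case analysis of the resulting at-most-two crossings of $F$ and $G^*$ forces $F \equiv G^*$ on a subinterval, which would imply $F$ is exponential there, contradicting log-concavity whenever $F$ is not itself a pure exponential.

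The main obstacle is verifying that the perturbation in Part 1 yields a mean-preserving contraction of $G^*$. Because the perturbation redistributes mass only within $[\thr_0, \mu + \phid^*]$, preserves the mean, and leaves the tail above $\mu + \phid^*$ unchanged, this reduces to checking the partial-integral inequality pointwise---roughly, that the cumulative decrease in $G$ near $\thr_0$ is dominated by the increase over the extended exponential portion. For Part 2, the subtlety is handling $F$ that is log-concave but possibly coincides with the exponential $G^*$ on a subinterval; this is resolved by combining any such coincidence with the global mean equality and the strict boundary condition $h(\mu+\phid^*) > 0$ to derive a contradiction.
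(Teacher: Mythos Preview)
Your Part 1 perturbation has a genuine error: the modified distribution is \emph{not} a mean-preserving contraction of $G^*$. You move mass from the atom at $\thr_0$ up to the interval $(\thr_2,\mu+\phid^*]$, which strictly raises the mean; indeed the new CDF lies pointwise weakly below the old one, so it first-order stochastically dominates $G^*$. The reason this does not bite is that the case you are trying to rule out is already impossible: once you know $G^*(\thr_2)=1$, a direct integration gives $\int_{\lb}^{\ub}G^*(s)\,ds=\phid^*+(\ub-\thr_2)$, so the mean is $\thr_2-\phid^*$, and the constraint $E_{G^*}[s]=\mu$ forces $\thr_2=\mu+\phid^*$ immediately. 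This is what the paper does; no perturbation is needed.

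Your Part 2 argument has a more serious gap: it never uses optimality of $G^*$, and the concavity of $\log F-\log G^*$ by itself does not deliver the conclusion. At most two sign changes of $F-G^*$ on $[\thr_1,\mu+\phid^*]$ is perfectly compatible with $h$ touching zero at an interior point $\thr'_0$: take $h(\thr_1)>0$, $h$ decreasing to $0$ at $\thr'_0$, then increasing to $h(\mu+\phid^*)>0$, with $h'=F-G^*$ negative, then positive, then negative near $\mu+\phid^*$---exactly two sign changes. Nothing in this forces $F\equiv G^*$ on a subinterval, and your proposed patch via the boundary condition $h(\mu+\phid^*)>0$ does not close the gap. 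The paper instead uses optimality: if $F(\thr_1)\leq G^*(\thr_1)$, then the MPC constraints are strictly slack on $[\thr_0,\thr_1]$, so one can translate $\thr_0$ and $\thr_1$ slightly to the left (keeping $\phid^*=\thr_1-\thr_0$) and strictly raise the disclosure probability, contradicting optimality. This yields $F(\thr_1)>G^*(\thr_1)$; log-concavity then gives a single crossing of $F$ and the exponential part of $G^*$, and slackness propagates inward from the two endpoints $\thr_1$ and $\thr_2$.
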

\begin{proof}
	For the optimal testing fee to be zero, the area above $G^*$ from $\mu + \phid^*$ to $\ub$ must be zero, and hence $G^*(\mu + \phid) = 1$. From \autoref{lemma-exponential-middle}, $G^*(\tau_2)=1$.

	The constraint that the area under the score distribution $G^*$ is $\ub-\mu$ can be written as
	\begin{align*}
	\ub-\mu &= \int_\lb^{\thr_1} G^*(s)ds + \int_{\thr_1}^{\thr_2} ge^{\frac{s-\thr_1}{\thr_1 - \thr_0}}ds + (\ub-\thr_2) \\
	&= g(\thr_1 - \thr_0) +  (\thr_1 - \thr_0)ge^{\frac{s-\thr_1}{\thr_1 - \thr_0}}|_{\thr_1}^{\thr_2} + (\ub-\thr_2) \\
	&= (\thr_1 - \thr_0) + \ub-\thr_2.
	\end{align*}
	\noindent We must therefore have $\thr_2 = \mu + (\thr_1 - \thr_0) = \mu + \phid^*$.
	
	We now show that the MPC constraints must be slack on interval $[\thr_1,1]$.  Notice that the constraints are slack on $[\thr_2,\ub]$ since $G$ is equal to $1$ but $F(\thr')$ is strictly less than one for all $\thr' < 1$.  It thus remains to show that the constraints are slack on some interval $[\thr'_1,\thr_2]$. Notice that log-concavity implies continuity in the interior, which will be used in the later arguments.

	We first claim that $F(\thr_1)\geq G(\thr_1)=e^{-\frac{\mu-\thr_0}{\thr_1-\thr_0}}$. Suppose $F(\thr_1)< G(\thr_1)$, then $F(x)<G(x)$ for any $x\in[\thr_0,\thr_1)$, which implies the constraint must be slack on the interval because $\int_\lb^xF(s)ds=\int_\lb^{\thr_1} F(s)ds-\int_x^{\thr_1} F(s)ds>\int_\lb^{\thr_1} G(s)ds-\int_x^{\thr_1} G(s)ds=\int_\lb^xG(s)ds$. From the right-continuity of $F$, the constraint is also slack at $\thr_1$. If this is the case, we can construct a new distribution parameterized by $\thr_1'=\thr_1-\varepsilon$ and $\thr_0'=\thr_0-\varepsilon$, the distribution on $[\thr_1,\ub]$ doesn't change so all the constraints on $[\thr_1,\ub]$ are still satisfied. Also the original constraints on $[\lb,\thr_1]$ are slack, they are still satisfied for small $\varepsilon$. By charging the same disclosure fee $\phid=\thr_1-\thr_0$, this new distribution can induce a higher disclosure probability, which contradicts to $G$ being optimal.
	
	If $F(\thr_1)=G(\thr_1)$, the same argument goes through if the constructed distribution doesn't violate the mean-preserving constraints. If the constraint is slack at $\thr_1$, then all the constraints to the left are slack, so the constructed distribution is still a profitable deviation. If the constraint binds at $\thr_1$, the right derivative of $F$ at $\thr_1$ must be greater than the right derivative of $G$. Also the left derivative of $F$ must be greater than $G$ from log-concavity. So the local change of the distribution doesn't violate any constraint either, which leads to a profitable deviation.
	
	So we must have $F(\thr_1)>G(\thr_1)$, and the constraint at $\thr_1$ is slack at $\thr_1$ due to the continuity of $F$. Moreover, from the log-concavity of $F$, and $F(\thr_1)>G(\thr_1)$, $F(\thr_2)<G(\thr_2)$, $F$ crosses the exponential part of $G$ from above once and only once. To see this, notice that $\log(F)$ is concave and $\log(G)$ is linear on $[\thr_1,\thr_2]$, and a concave function can only cross a linear function from above once. Let $x^*$ denote the intersection point, we have $F(x)>G(x)$ for $x\in[\thr_1,x^*)$ and $F(x)<G(x)$ for $x\in(x^*,\thr_2]$. Notice that the constraint is slack at $\thr_2$, which means $\int_\lb^{\thr_2} G(s)ds\leq \int_\lb^{\thr_2} F(s)ds$, then for any $x\in[x^*,\thr_2)$, the constraint is also slack because $\int_\lb^x G(s)ds=\int_\lb^{\thr_2} G(s)ds-\int_x^{\thr_2} G(s)ds< \int_\lb^{\thr_2} GFds-\int_x^{\thr_2} F(s)ds=\int_\lb^x F(s)ds$. Similarly the constraint is slack at $\thr_1$ implies the constraint is also slack at any $x\in(\thr_1,x^*]$.
	
	Therefore, all MPC constraints are slack on $[\thr_1,\ub)$.
\end{proof}

Given \autoref{lemma-log-concave-testing-fee-positive-lem1} we now prove \autoref{proposition-log-concave-testing-fee-positive}.

\begin{proof}[Proof of \autoref{proposition-log-concave-testing-fee-positive}]
	Consider the class of step-exponential-step distributions defined in \eqref{eq-G(x)-arbitary-prior} where $\thr_3 = \ub$.  That is,
	\begin{equation*}
	G(s)=\begin{cases} g &\text{ if }s = \thr_0 \\
	ge^{(s - \thr_1)/(\thr_1-\thr_0)} & \text{ if } s \in [\thr_1,\thr_2]\\
	1 &\text{ if } s= \ub.
	\end{cases}
	\end{equation*}
	\noindent A degree of freedom is removed given the constraint that the area under $G$ is $\ub-\mu$.  Notice that the distribution with zero testing fee identified in \autoref{lemma-log-concave-testing-fee-positive-lem1} is a special case where $\thr_2 = \mu + (\thr_1 - \thr_0)$.  Consider such a distribution.  We show that reducing $\thr_1$ by a small amount increases the revenue, and does not violate any of the constraints.
	
	The intermediary's revenue from charging disclosure fee $\phid = \thr_1 - \thr_0$ and the highest possible testing fee $\phit=\int_{\mu+\phid}^\ub (s-\mu-c)dG(s)$ is
	\begin{align*}
	\phit+\phid(1-G(\thr_1))
	=&\int_{\mu+x_1-x_0}^\ub [s-\mu-(\thr_1-\thr_0)]dG(s)+(\thr_1-\thr_0)(1-G(\thr_1))\\
	=&\ub-\mu-(\thr_1-\thr_0)-\int_{\mu+\thr_1-\thr_0}^\ub G(s)ds+\thr_1-\thr_0-\int_\lb^{\thr_1}G(s)ds\\
	=&\int_{\thr_1}^{\mu+\thr_1-\thr_0}G(s)ds\\
	=&ge^{\frac{\thr_2-\thr_1}{\thr_1-\thr_0}}(2(\thr_1-\thr_0)+\mu-\thr_2)-(\thr_1-\thr_0)g.
	\end{align*}

	Consider the change in revenue from decreasing $\thr_1$, evaluated at $\thr_2 = \mu + (\thr_1 - \thr_0)$ and $g = e^{-\frac{\mu - \thr_0}{\thr_1 - \thr_0}}$,
	\begin{align*}
	\frac{d}{d\thr_1}(\phit+\phid(1-G(\thr_1)))=&(\thr_1-\thr_0)\frac{d}{d\thr_1}\left(\frac{\thr_2-\thr_1}{\thr_1-\thr_0}\right)+2-\frac{d\thr_2}{d\thr_1}-g\\
	=&\left(\frac{d\thr_2}{d\thr_1}-1\right)-\frac{\thr_2-\thr_1}{\thr_1-\thr_0}+2-\frac{d\thr_2}{d\thr_1}-g\\
	=&1-g-\frac{\thr_2-\thr_1}{\thr_1-\thr_0}\\
	=&1+\frac{\thr_0-\mu}{\thr_1-\thr_0}-e^{\frac{\thr_0-\mu}{\thr_1-\thr_0}}\\
	<&0.
	\end{align*}
	The inequality follows from $e^x > 1 + x$ for $x> 0$. Thus decreasing $\thr_1$ strictly increases revenue. 
\end{proof}

\subsection{Proofs for \autoref{Section-Extension}}
\subsubsection{Proof of \autoref{prop-evidence} on p. \pageref{prop-evidence}}

\begin{proof}

Consider an evidence-test-fee structure, denoted by fees $(\phit,\phid)$, an unbiased test $T:\Theta\rightarrow\Delta S$, and  an evidence structure
$M: S\rightrightarrows \mathcal{M}$ such that for each $s$, $M(s)$ is a Borel space. A strategy profile $(\sigma,p)$ consists of the agent's strategy $\sigma=(\sigma_T,\sigma_D)$, where $\sigma_T\in [0,1]$ and $\sigma_D$ maps $s\in S$ to $\Delta(M(s)\cup \{N\})$, and the market price $p:\mathcal{M}\rightarrow [\underline{\theta},\bar{\theta}]$.
Let $(\sigma,p)$ be an adversarial equilibrium. First we consider the case in which the agent has the asset tested with probability $1$, that is $\sigma_T=1$.

Consider the disclosure stage.  Let $G_{(\sigma,p)}$ be the induced distribution of prices, i.e., $G_{(\sigma,p)}(x) = Pr[p(\sigma_D(s)) \leq x]$ for any $x$, taking into account both the randomization over the score and the agent's strategy.  Let ${\tau} \equiv p(N) + \phid$.  We show that the following holds, mirroring our characterization of highest equilibria \eqref{eq:robustimplementation}:
\begin{align}
&\tau - \phid \leq E_{G_{(\sigma,p)}}[x | x \leq \tau],\label{eq1}\\
&\tau' - \phid > E_{G_{(\sigma,p)}}[x | x \leq \tau'], \forall \tau' > \tau\label{eq2}.
\end{align} 
Let us argue why \eqref{eq1} holds. Since $\tau - \phid = p(N)$, it suffices to show that $p(N)$ is weakly less than $E_{G_{(\sigma,p)}}[x | x \leq \tau]$. Observe that $p(\sigma_D(s))$ is at least $p(N)$ with probability $1$: if $p(\sigma_D(s))$ were strictly less than $p(N)$, the agent could profitably deviate to sending message $N$ and obtain a strictly higher price. But then this implies that $p(N) \leq E[p(\sigma_D(s)) | p(\sigma_D(s)) \leq \tau]= E_{G_{(\sigma,p)}}[x | x \leq \tau]$. 

To see that \eqref{eq2} holds, suppose for contradiction that $\tau'' - \phid \leq E_{G_{(\sigma,p)}}[p | p \leq \tau'']$ for some $\tau'' > \tau$.  By \autoref{lemma-inter} there exists $\tau' > \tau$ such that $\tau' - \phid = E_{G_{(\sigma,p)}}[p | p \leq \tau']$.  Consider the strategy profile $(\sigma',p')$ defined as follows.  The agent's strategy $\sigma'$ is the same as $\sigma$ except that the agent conceals a score $s$ if $p(\sigma(s)) \leq \tau'$.  For $m \neq N$, $p'(m) = p(m)$, and for $m = N$, the price is $p'(m) = E_{G_{(\sigma,p)}}[x | x \leq \tau'] \geq E_{G_{(\sigma,p)}}[x | x \leq \tau] \geq p(m)$.  Notice that since any message $m$ that is disclosed in $(\sigma',p')$ is also disclosed in $(\sigma,p)$, the prices $p'$ are defined on path via the Bayes rule.

To see that $(\sigma',p')$ is an equilibrium, consider any score $s$ such that $p(\sigma(s)) > \tau$ with positive probability.  Since $s$ can send a message $m \neq N$ such that $p(m) > \tau$, it optimally randomizes over messages other than $N$ that lead to the same (and maximum) price, which, abusing notation, we denote $p(\sigma(s))$. Since $p(m) = p'(m)$ for all $m \neq N$,  $\sigma(s)$ is optimal among all strategies that send $N$ with probability 0 given prices $p'$.  Therefore, for such a score, it is optimal to follow $\sigma(s)$ if $p'(\sigma(s)) > \tau' = p'(N) + \phid$, and  to conceal if $p'(\sigma(s)) \leq \tau'$, as prescribed by $\sigma'$.  Now consider a score $s$ such that $p(\sigma(s)) \leq\tau$ with probability 1.  For such a score, it is optimal given prices $p$ to conceal, i.e., for any message $m \neq N$ that $s$ can send, $p(m) - \phid \leq p(N)$.  Since $p'(m) = p(m)$ and $p'(N) \geq p(N)$, it is also optimal to conceal given prices $p'$, as prescribed by $\sigma'$.

Now consider the testing stage.  If
\begin{align}\label{Equation-ParticipationViolated2}
\phit \geq \int_{\mu+\phid}^{\overline\theta}[x-(\mu+\phid)]dG_{(\sigma,p)}.
\end{align}
\noindent then there exists an equilibrium in the evidence-test-fee structure where the agent has the asset tested with probability $0$. The argument parallels that of \autoref{lemma-worker-takes-test-wp1-in-all}.  Consider a strategy profile $(\sigma',p')$ such that that $\sigma'_T = 0$, off-path the agent follows $\sigma(s)$ if $p'(\sigma(s)) > \mu + \phid$ ($p'(\sigma(s))$ is well-defined as argued above) and otherwise conceals, and the prices are $p'(N) = \mu$, and $p(m) = p'(m)$ for all $m \neq N$.  Since the set of disclosed messages in $(\sigma',p')$ is a subset of that in $(\sigma,p)$, an argument similar to above shows that the agent's disclosure strategy is sequentially rational. Also, by deviating to take the test, the agent receives an expected payoff lower than $\mu$,
\begin{align*}
\int_{\underline{\theta}}^{\mu+\phid}\mu dG_{(\sigma,p)}+\int_{\mu+\phid}^{\bar{\theta}} [x-\phid]dG_{(\sigma,p)}-\phit&=\mu+\int_{\mu+\phid}^{\bar{\theta}}[x-(\mu+\phid)]dG_{(\sigma,p)}-\phit\leq \mu
  \end{align*}
where the equality is algebra, and the inequality follows from \eqref{Equation-ParticipationViolated2}.  Therefore, adversarial revenue is at most zero, which is obtained by any test-fee structure with zero fees.  So suppose that \eqref{Equation-ParticipationViolated2}  is violated.

Now consider a test-fee structure $(G_{(\sigma,p)},\phi)$.  By \autoref{lemma-highest-threshold}, $\tau$ is a weak-highest equilibrium threshold of the test-fee structure.  Also since \eqref{Equation-ParticipationViolated2}  is violated, by \autoref{lemma-worker-takes-test-wp1-in-all}, the test is taken with probability 1 in all equilibria.  Therefore, adversarial revenue of the test-fee structure is equal to the adversarial revenue of the evidence-test-fee environment.

Now we consider the case $\sigma_T\in[0,1)$. Notice that in this case 
$$\phit\geq \int_{p(N)+\phid}^\ub[x-(p(N)+\phid)] dG_{(\sigma,p)}\geq \int_{\mu+\phid}^{\overline\theta}[x-(\mu+\phid)]dG_{(\sigma,p)}$$
so \eqref{Equation-ParticipationViolated2} holds. Form the same argument above we know that the adversarial revenue is at most zero, which can be obtained in a test-fee structure with any test and zero fees.
\end{proof}

\end{document}